\numberwithin{equation}{section}
\numberwithin{theorem}{section}
\numberwithin{corollary}{section}
\numberwithin{definition}{section}
\begin{document}

\title{\LARGE Projected Estimation for Large-dimensional Matrix  Factor Models}

	\author{Long Yu\thanks{ School of Management, Fudan University, Shanghai, China;  Email:{\tt fduyulong@163.com}.},~~Yong He\thanks{ Institute for Financial Studies, Shandong University, Jinan, China; Email:{\tt heyong@sdu.edu.cn}.},~~Xin-bing Kong \thanks{Nanjing Audit University, Nanjing,  China.; Email:{\tt xinbingkong@126.com}},~~ Xinsheng Zhang\thanks{ School of Management, Fudan University, Shanghai, China; Email:{\tt xszhang@fudan.edu.cn}.}}	
	\date{}	
	\maketitle
	In this study, we propose a projection estimation method for large-dimensional matrix factor models with cross-sectionally spiked eigenvalues. By projecting the observation matrix onto the row or column factor space, we simplify factor analysis for matrix series to that for a lower-dimensional tensor. This method also reduces the magnitudes of the idiosyncratic error components, thereby increasing the signal-to-noise ratio, because the projection matrix linearly filters the idiosyncratic error matrix. We theoretically prove that the projected estimators of the factor loading matrices achieve faster convergence rates than existing estimators under similar conditions. Asymptotic distributions of the projected estimators are also presented. A novel iterative procedure is given to specify the pair of row and column factor numbers. Extensive numerical studies verify the empirical performance of the projection method. Two real examples in finance and macroeconomics reveal factor patterns across rows and columns, which coincides with financial, economic, or geographical interpretations.

\vspace{2em}

\textbf{Keyword:}  	Matrix factor model; Vector factor model; Column covariance matrix; Row covariance matrix.

\section{Introduction}\label{sec1}


Time-series models with factor structures are widely used in fields such as finance, macroeconomics, and machine learning (e.g.  \cite{chamberlain1983arbitrage,fama1993common,stock2002forecasting,fan2015risks}). With the increasing complexity of data structure, the factor models for time series have experienced three stages: factor models for multivariate time series of fixed dimension (e.g. \cite{ross1977capital}),  large-dimensional vector factor models (e.g. \cite{forni2000generalized,bai2002determining,bai2003inferential,ahn2013eigenvalue,fan2013large,kong2019factor,He2020Large}), and  matrix factor models (e.g. \cite{wang2019factor,Chen2020StatisticalIF}). The matrix factor model  not only reveals the serial dynamics for a panel of variables, but also explores the spatial correlations among entries of the observation matrix in a parsimonious way.


\cite{wang2019factor} was the first to introduce a factor model for matrix time series. For ease of presentation, let $\Xb_t$ be a $p_1\times p_2$ matrix of variables observed at $t$. \cite{wang2019factor} factorized $\Xb_t$ as
 \begin{equation}\label{mod1.1}
 (\Xb_t)_{p_1\times p_2}=(\Rb)_{p_1\times k_1}(\Fb_t)_{k_1\times k_2}(\Cb^\top)_{k_2\times p_2}+(\Eb_t)_{p_1\times p_2},\quad t=1,\ldots,T,
 \end{equation}
where $\Rb$ is the $p_1\times k_1$ row factor loading matrix exploiting the variations of $\Xb_t$ across the rows, $\Cb$ is the $p_2\times k_2$ column factor loading matrix reflecting the differences across the columns of $\Xb_t$, $\Fb_t$ is the common factor matrix for all cells in $\Xb_t$, and $\Eb_t$ is the idiosyncratic component of a matrix form. Model (\ref{mod1.1}) is particularly suited to modeling well-structured tables of macroeconomic indicators, financial characteristics, and frames of pictures. For example, Figure \ref{fig1} shows a time list of tables recording the macroeconomic variables across a number of countries. In this example, the dynamics of the panels might be driven by a much lower-dimensional matrix series of composite indices by taking the interrelationship among countries and macroeconomic variables into consideration. The cross-country (column-sectional) and cross-variable (row-sectional) exposures on these latent composite factors can be summarized in $\Rb$ and $\Cb$, respectively. In \cite{wang2019factor}, two interesting interpretations of model (\ref{mod1.1}) on integrating the column and row interactions were explicitly illustrated. The two-step hierarchical interpretation shows that (\ref{mod1.1}) reduces the number of parameters for the vector factor modeling by stacking the columns of $\Xb_t$ from $(p_1p_2+1)k_1k_2$ to $p_1k_1+p_2k_2+k_1k_2$, resulting in a much more parsimonious model.

 \begin{figure}[hbpt]
 		\centering
 		 \scalebox{0.5}{\includegraphics{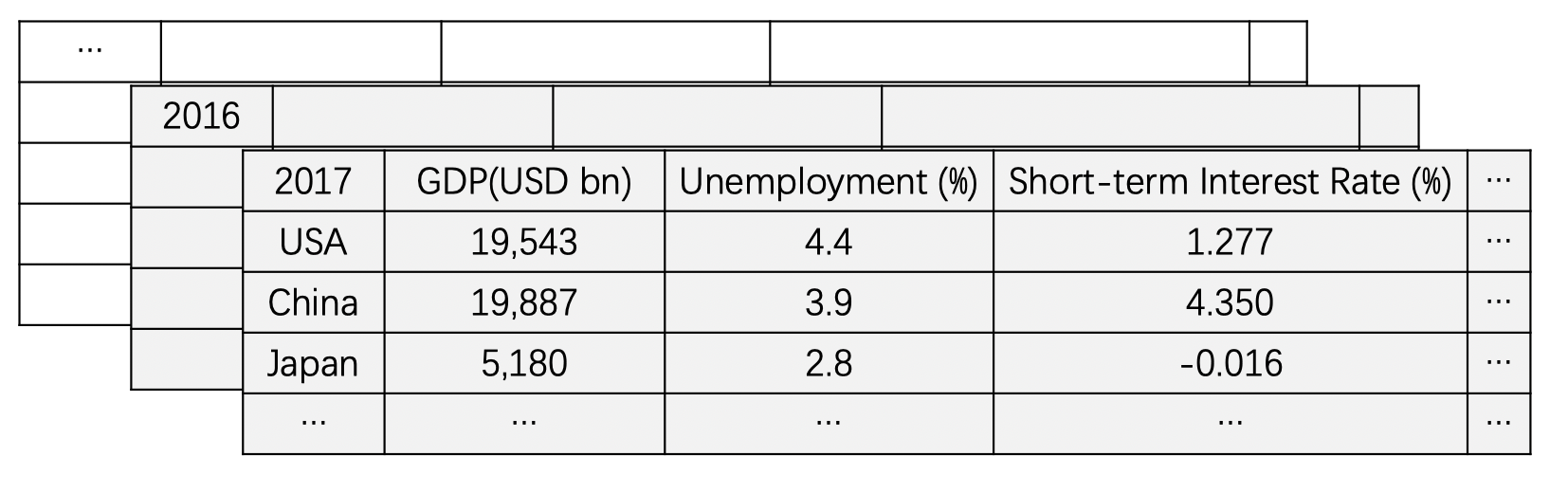}}
 	\caption{A real example of matrix-variate observations consisting of macroeconomic variables for a number of countries.}
 	\label{fig1}
 \end{figure}


Model (\ref{mod1.1}) was later extended to the constrained version by \cite{chen2019constrained} and the threshold matrix factor model in \cite{liu2020threshold}. \cite{Chen2020Modeling} applied model (\ref{mod1.1}) to  the dynamic transport network in applications to international trade flow. Noticeably, all these works are along the line of \cite{lam2011estimation} and \cite{lam2012factor} by implementing an eigen-analysis of the auto-cross-covariance matrix, which relies heavily on the serial correlations of the factors. Other works follow the other line (e.g. \cite{bai2003inferential,fan2013large}) that the matrix factor series influences all the series and hence leads to spiked eigenvalues along the column and row dimensions. For example, \cite{virta2017independent} constructed independent components from low-rank spiked observation matrices, but they assumed a noiseless model. \cite{Chen2020StatisticalIF} extended the approach used by \cite{lettau2020factors} to the matrix factor model and proposed estimators of the factor loading matrices and factor matrices in the model (\ref{mod1.1}). Their estimation is mainly based on the eigen-decomposition of an aggregate of the sample mean matrix and the column (or row) covariance matrices, which are assumed to be pervasive along the two cross-sectional dimensions. However, their spectral method handles the rows or columns of the matrix individually, which does not take full advantage of the joint low rank structure across both the rows and columns. Therefore, we believe that  the efficiency of the estimated row and column factor spaces can still be improved, with the hope to achieve faster convergence rates. A mathematically rigorous comparison of the convergence rates with \cite{Chen2020StatisticalIF} is provided in Section \ref{sec3}.


In this study, we follow the second line by assuming that the factors are pervasive along the two cross-sectional dimensions and adopt similar assumptions on model (\ref{mod1.1}) as in \cite{Chen2020StatisticalIF}.
We propose a projection estimation method for (\ref{mod1.1}). Ideally, if the normalized column loading matrix $\Cb/\sqrt{p_2}$ is known and orthonormal, then the transformed data matrix $\Xb_t\Cb/p_2=\Rb\Fb_t+\Eb_t\Cb/p_2$ simply consists of $k_2$ linear combinations of the columns of $\Rb$ plus $k_2$ linear combinations of the columns of $\Eb_t$. This condition amounts to projecting the columns of $\Xb_t$ onto the product space composed of the columns of $\Rb$ and the $k_2$ factors explaining the columns of $\Xb_t$, plus an error matrix $\Eb_t\Cb/p_2$. An advantage of the projection is that the original error matrix $\Eb_t$ is linearly filtered and the resulting entries are of order $O_p(p_2^{-1/2})$ under Assumptions C and D in Section \ref{sec3}. In summary, the projection simultaneously achieves dimension reduction and denoising, which is in the spirit of constructing principal portfolios in finance to reduce the idiosyncratic risk. As the $k_2$ columns of the transformed data matrix $\Xb_t\Cb/p_2$ all lay in the column space of $\Rb$ asymptotically under a mild condition, in a second step, a simple principal component analysis on the projected data matrix yields an estimator of $\Rb$. For the example illustrated in Figure \ref{fig1}, the projection procedure amounts to first summarizing the $k_2$ factors behind the macroeconomic indicators ($k_2$ columns of $\Fb_t$), and then recovering $\Rb$ from the variations of $\Rb\Fb_t$ across countries. Through the above two steps, the low-rank structure along the two cross-sectional dimension is exploited in a succeeding manner. The preceding argument is heuristic. In practice, $\Cb$ is unknown and has to be initially estimated. Indeed, the estimator of $\Cb$ proposed by \cite{Chen2020StatisticalIF} serves as a good projection matrix, and we describe it in Section \ref{sec2}. By simply applying the procedure to $\Xb_t^\top$, we can estimate $\Cb$ in the same manner.

In this study, we theoretically prove that our projection estimators improve
the convergence rates of those in \cite{Chen2020StatisticalIF}. Figure \ref{fig2} in the simulation study clearly illustrates the improvement.
We also propose an iterative algorithm to consistently determine the pair of column and row factor numbers, which performs impressively well in  numerical studies.


The remainder of this paper is organized as follows. Section \ref{sec2} introduces the model setup and our projection approach. In section \ref{sec3}, we present the technical assumptions and asymptotic results, including the convergence rates and limiting distributions.  Section \ref{sec4} is devoted to numerical studies. Two real data examples are provided in section \ref{sec5}. Section \ref{sec6} concludes and discusses possible future works.


To end this section, we introduce some notations used throughout the study. For a matrix $\Xb_t$ observed at time $t$, $x_{t,ij}$ denotes its $ij$-th entry,  $\bx_{t,i\cdot}$ $(\bx_{t,\cdot j}$) denotes its $i$-th row ($j$-th column) and let $\text{Vec}({\Xb_t})$ be the vector obtained by stacking the columns of $\Xb_t$. For a matrix $\Ab$, $\|\Ab\|$ and $\|\Ab\|_F$ represent the spectral norm and Frobenious norm, respectively. $\|\Ab\|_{\max}$ is the maximum of $|\mathrm{A}_{ij}|$'s.   $\lambda_j(\Ab)$  is the $j$-th eigenvalue of $\Ab$ if $\Ab$ is symmetric. The notations $\stackrel{p}{\rightarrow},\stackrel{d}{\rightarrow}$ and $\stackrel{a.s.}{\rightarrow}$ represent convergence in probability, in distribution and almost surely, respectively. The $o_p$ is for convergence to zero in probability and $O_p$ is for stochastic boundedness. For two random series $X_n$ and $Y_n$, $X_n\lesssim Y_n$ means that $X_n=O_p(Y_n)$, and $X_n\gtrsim Y_n$ means that $Y_n=O_p(X_n)$. The notation $X_n\asymp Y_n$ means that $X_n\lesssim Y_n$ and $X_n\gtrsim Y_n$. For two random vectors $\bX,\bY$, $\bX\perp\bY$ means that $\bX$ and $\bY$ are independent. $[n]$ denotes the set $\{1,\ldots,n\}$. $\otimes$ denotes the Kronecker product. The constant $c$ may not be  identical in different lines.

\section{Matrix factor model and projected estimators}\label{sec2}
\subsection{Matrix factor model}

The model (\ref{mod1.1}) factorizes each matrix as a low-rank common component plus an idiosyncratic component, which can be regarded as an extension of the vector factor model to the matrix regime. It provides a new framework and interpretation for the analysis of 3D tensor data. The loading matrices $\Rb$ and $\Cb$ in model (\ref{mod1.1}) are not separately identifiable. In the current paper, only the loading spaces are of interest, and thus we assume without loss of generality that
\begin{equation}\label{IDC}
\|p_1^{-1}\Rb^\top\Rb-\Ib_{k_1}\|\rightarrow 0,\quad\text{and}\quad \|p_2^{-1}\Cb^\top\Cb-\Ib_{k_2}\|\rightarrow 0.
\end{equation}
If this is not the case, then two matrices $\Qb_1$ and $\Qb_2$ will always exist with orthogonal columns, such that
\[
\Rb=\Qb_1\Wb_1 \quad\text{and}\quad \Cb=\Qb_2\Wb_2,
\]
where $\Wb_1$ and $\Wb_2$ are $k_1\times k_1$ and $k_2\times k_2$ full rank matrices, respectively. Therefore, $\Rb$ (or $\Cb$) lies in the same column space as $\Qb_1$ (or $\Qb_2$), and $\Xb_t$ can be rewritten as
\[
\Xb_t=(\sqrt{p_1}\Qb_1)\tilde\Fb_t(\sqrt{p_2}\Qb_2)^\top+\Eb_t,\quad\text{where}\quad \tilde\Fb_t=\frac{1}{\sqrt{p_1p_2}}\Wb_1\Fb_t\Wb_2^\top.
\]
Then, $\Xb_t$ becomes a matrix factor model with row and column loading matrices satisfying (\ref{IDC}).
Assumption (\ref{IDC}) is not only an identifiability condition for the factor loading spaces, but also a strong factor condition assuming pervasive factors along the row and column dimensions.

\subsection{Projected estimation}
In this section, we introduce our projection estimation approach. As a simple heuristic argument,  $\Cb$ is assumed to be known and satisfies the orthogonal condition $\Cb^\top\Cb/p_2=I_{k_2}$. As stated in the introduction, we project the data matrix to a lower dimensional space by setting
\begin{equation}\label{equ3.1.1}
\Yb_t=\frac{1}{p_2}\Xb_t\Cb=\frac{1}{p_2}\Rb\Fb_t\Cb^\top\Cb+\frac{1}{p_2}\Eb_t\Cb:=\Rb\Fb_t+\tilde\Eb_t:=\Rb(\bbf_{t, \cdot 1},..., \bbf_{t, \cdot k_2})+(\tilde{\be}_{t, \cdot 1},...,\tilde{\be}_{t, \cdot k_2}).
\end{equation}
After transformation, $\Yb_t$ is a $p_1\times k_2$ matrix-valued observation, lying in a much lower column space than $\Xb_t$. $\Fb_t$ and $\tilde\Eb_t$ can be regarded as factors and errors for $\Yb_t$. When $k_2=1$, it is exactly a vector factor model. As a result, the projection achieves dimension reduction of the error matrix and decrease of the noise levels. For the $i$th row of $\tilde\Eb_t$, denoted as $\tilde\be_{t,i\cdot}$, $\mathbb{E}\|\tilde\be_{t,i\cdot}\|^2\le cp_2^{-1}$  as long as the original errors $\{\be_{t,ij}\}^{p_2}_{j=1}$ are weakly dependent column-wise. When $p_2$ is large enough, $\Yb_t$ can be treated as a nearly noise-free factor model with $O(p_1)$ loading parameters to be estimated.

Given $\Yb_t$, we define
\[
\Mb_1=\frac{1}{Tp_1}\sum_{t=1}^{T}\Yb_t\Yb_t^\top,
\]
and then the row factor loading matrix $\Rb$ can be estimated by the leading $k_1$ eigenvectors of $\Mb_1$. Heuristically, under some mild conditions,
\begin{eqnarray}\label{M1p}
\Mb_1=\frac{1}{Tp_1}\sum^T_{t=1}\sum^{k_2}_{j=1}\yb_{t,\cdot j}\yb_{t, \cdot j}^\top\approx\frac{1}{p_1}\Rb\left(\frac{1}{T}\sum^T_{t=1}\sum^{k_2}_{j=1}\bbf_{t,\cdot j}\bbf_{t, \cdot j}^\top\right)\Rb^\top+\frac{1}{Tp_1}\sum^T_{t=1}\sum^{k_2}_{j=1}\tilde{\be}_{t, \cdot j}\tilde{\be}_{t, \cdot j}^\top.
\end{eqnarray}
By looking at the $k_2$ columns of $\Yb_t$ as observations within time unit $t$, (\ref{equ3.1.1}) and (\ref{M1p}) demonstrate that $\{\Yb_t\}_{t=1}^T$ is effectively a vector factor model of length $Tk_2$ with asymptotically vanishing idiosyncratic entries.

One problem with the above ideal argument is that the projection matrix $\Cb$ is unavailable in practice. A natural solution is to replace it with a consistent initial estimator $\hat \Cb$. The column factor loading matrix $\Cb$ can be similarly estimated by projecting $\Xb_t$ onto the space of $\Cb$ with transformation matrix $\Rb$ or its estimator $\hat{\Rb}$. The choices of $\hat\Rb$ and $\hat\Cb$ will be discussed later. We summarize the projection procedure in Algorithm \ref{alg1} by starting from $\hat\Rb$ and $\hat\Cb$, which results in an estimated $\Mb_1$ denoted by $\tilde{\Mb}_1$.

\begin{algorithm}[H]
	\caption{Projected method for estimating matrix factor spaces}\label{alg1}
	{\bf Input:} Data matrices $\{\Xb_t\}_{t\le T}$, the pair of row and column factor numbers $k_1$ and $k_2$\\
	{\bf Output:} Factor loading matrices $\tilde\Rb$ and $\tilde\Cb$
	\begin{algorithmic}[1]
		\State  obtain the   initial estimators $\hat \Rb$ and $\hat\Cb$;
		
		\State project the data matrices to lower dimensions by defining $\hat\Yb_t=p_2^{-1}\Xb_t\hat\Cb$ and $\hat\Zb_t=p_1^{-1}\Xb_t^\top\hat\Rb$;
		
		\State given $\hat \Yb_t$ and $\hat\Zb_t$, define $\tilde\Mb_1=(Tp_1)^{-1}\sum_{t=1}^{T}\hat\Yb_t\hat\Yb_t^\top$ and $\tilde\Mb_2=(Tp_2)^{-1}\sum_{t=1}^{T}\hat\Zb_t\hat\Zb_t^\top$, and estimate the loading spaces by  the leading $k_i$ eigenvectors of $\tilde\Mb_i$, denoted as $\tilde\Qb_i, i=1,2$;
		\State the row and column loading matrices are finally given by $\tilde\Rb=\sqrt{p_1}\tilde\Qb_1$ and $\tilde\Cb=\sqrt{p_2}\tilde\Qb_2$.
	\end{algorithmic}
\end{algorithm}

The projection method can be implemented recursively by plugging in the newly estimated $\tilde\Rb$ and $\tilde\Cb$ to replace $\hat{\Rb}$ and $\hat{\Cb}$ in \textbf{Step 2} and iterating \textbf{Steps 2-4}. Theoretical analysis of the recursive solution is challenging. The simulation results in Section \ref{sec4} show that the  projection estimators with a single iteration perform sufficiently well compared with the recursive method. Actually, with $T\asymp p_1\asymp p_2$ and $\hat\Cb$ (or $\hat\Rb$) chosen suitably, we can prove that  the projected estimator $\tilde\Rb$ (or $\tilde{\Cb}$) converges to $\Rb$ (or $\Cb$) after rotation with rate $O_p(1/\sqrt{Tp_2})$ (or {$O_p(1/\sqrt{Tp_1})$)} in terms of  the averaged squared errors, which is the optimal rate even when the loading matrix $\Cb$ (or $\Rb$) is known in advance.

\subsection{Initial projection matrices $\hat\Rb$ and $\hat\Cb$}\label{sec2.3}
The columns of $\Xb_t$ in model (\ref{mod1.1}) can be written in the form of a vector factor model as
\begin{equation}\label{equ2.2}
\bx_{t,\cdot j}=\Rb\Fb_t\bC_{j\cdot}^\top+\be_{t,\cdot j}:=\Rb\overline{\bbf}_{t, \cdot j}+\be_{t,\cdot j},\quad t=1,\ldots,T, \ \ j=1,\ldots,p_2,
\end{equation}
where $\overline{\bbf}_{t, \cdot j}=\Fb_t\bC_{j\cdot}^\top$. Therefore, to estimate $\Rb$, a natural approach is to regard each column as an individual vector observation and apply the conventional PCA method for vector time series. Specifically, we define the scaled column sample covariance matrix as
\[
\hat\Mb_1=\frac{1}{Tp_1p_2}\sum_{t=1}^{T}\sum_{j=1}^{p_2}\bx_{t,\cdot j}\bx_{t,\cdot j}^\top=\frac{1}{Tp_1p_2}\sum_{t=1}^{T}\Xb_t\Xb_t^\top.
\]
When the columns of $\Cb$ are orthogonal and under other mild conditions, approximately
\begin{eqnarray}\label{hatM1}
\hat\Mb_1&\approx & \frac{1}{p_1}\Rb \left(\frac{1}{Tp_2}\sum^T_{t=1}\sum^{p_2}_{j=1}\overline{\bbf}_{t, \cdot j}\overline{\bbf}_{t, \cdot j}^\top\right)\Rb^\top +\frac{1}{p_1}\frac{1}{Tp_2}\sum^T_{t=1}\sum^{p_2}_{j=1}\be_{t,\cdot j}\be_{t, \cdot j}^\top \nonumber\\
&=&\frac{1}{p_1}\Rb \left(\frac{1}{T}\sum^T_{t=1}\sum^{k_2}_{j=1}\bbf_{t, \cdot j}\bbf_{t, \cdot j}^\top\right)\Rb^\top +\frac{1}{Tp_1}\bigg(\sum^T_{t=1}\sum^{p_2}_{j=1}\be_{t,\cdot j}p_2^{-1/2}\be_{t, \cdot j}^\top p_2^{-1/2}\bigg).
\end{eqnarray}
The term $T^{-1}\sum^T_{t=1}\sum^{k_2}_{j=1}\bbf_{t, \cdot j}\bbf_{t, \cdot j}^\top$ typically converges to a symmetric positive definite matrix while the error terms are asymptotically negligible under certain conditions. Consequently, only the leading $k_1$ eigenvalues of $\hat \Mb_1$ are spiky.  Motivated by Davis-Kahan's $\sin(\Theta)$ theorem (\cite{davis1970rotation} and \cite{Yu2015useful}), we find that leading $k_1$ eigenvectors of $\hat\Mb_1$ lie in the same column space of $\Rb$ asymptotically. Therefore, we propose to use the leading $k_1$ eigenvectors of $\hat\Mb_1$ as an estimator of $\Qb_1$, denoted as $\hat\Qb_1$. The row loading matrix is then estimated by $\hat\Rb=\sqrt{p_1}\hat \Qb_1$. The column loading matrix $\Cb$ can be estimated by parallel steps applied to $\{\Xb_t^\top\}_{t\le T}$.

We note that the above initial estimator is  the $\alpha$-PCA solution in \cite{Chen2020StatisticalIF} with $\alpha=0$. A comparison of (\ref{M1p}) and (\ref{hatM1}) shows that $\Mb_1$ and $\hat{\Mb}_1$ have approximately the same covariance matrix of the common components. However, $\hat{\Mb}_1$ accumulates more error terms than $\Mb_1$, thus implying a higher signal-to-noise ratio for $\Yb_t$ than that for $\Xb_t$. This explains the gain of efficiency of our projection estimation method over the initial estimator, or more generally $\alpha$-PCA procedure. Other choices of initial estimates of $\Rb$ and $\Cb$ are admissible as long as two sufficient conditions (\ref{c1}) and (\ref{c2}) in the following section are fulfilled. For simplicity, we  only demonstrate theoretically that the above initial estimators work.

\section{Theoretical Results}\label{sec3}
In this section, we present theoretical results on the convergence rates and asymptotic distributions of the projected estimators. The estimation of factors and common components are also considered. The numbers of factors are treated as given initially, and then we propose an iterative algorithm to consistently estimate the numbers of factors.

\subsection{Technical assumptions}
The matrix factor models are specifically designed for 3D tensor data. The correlation structure for complex high-order tensor data make  the theoretical analysis challenging. Throughout this study, we make the following assumptions on  the correlations across time, row, and column.

\vspace{1em}
\noindent\textbf{Assumption A. Alpha mixing.} The vectorized factor $\text{Vec}(\Fb_t)$ and noise $\text{Vec}(\Eb_t)$ are $\alpha$-mixing. A vector process $\{\bz_t,t=0,\pm1,\pm2,\ldots\}$ is $\alpha$-mixing if, for some $\gamma\ge 2$, the mixing coefficients satisfy the condition that
\[
\sum_{h=0}^\infty\alpha(h)^{1-2/\gamma}<\infty,
\]
where $\alpha(h)=\sup_t\sup_{A\in\mathcal{F}_{-\infty}^t,B\in\mathcal{F}_{t+h}^\infty}|P(A\cap B)-P(A)P(B)|$ and $\mathcal{F}_\tau^s$ is the $\sigma$-filed generated by $\{\bz_t:\tau\le t\le s\}$.

\noindent\textbf{Assumption B. Factor matrix.} The factor matrix satisfies $\mathbb{E}(\Fb_t)={\bf 0}$,  $\mathbb{E}\|\Fb_t\|^4\le c<\infty$ for some constant $c>0$ and
\begin{equation}\label{equ:covariance}
\frac{1}{T}\sum_{t=1}^T\Fb_t\Fb_t^\top\overset{p}{\rightarrow}\bSigma_1 \text{ and }\frac{1}{T}\sum_{t=1}^T\Fb_t^\top\Fb_t\overset{p}{\rightarrow}\bSigma_2,
\end{equation}
where $\bSigma_i$ is $k_i\times k_i$ positive definite matrix with distinct eigenvalues and spectral decomposition $\bSigma_i=\bGamma_i\bLambda_i\bGamma_i^\top$, $i=1,2$. The factor numbers $k_1$ and $k_2$ are fixed as $\min\{T,p_1,p_2\}\rightarrow\infty$. \\

\noindent\textbf{Assumption C. Loading matrix.} Positive constants $\bar r$ and $\bar c$ exists such that $\|\Rb\|_{\max}\le \bar r$, $\|\Cb\|_{\max}\le \bar c$. As $\min\{p_1,p_2\}\rightarrow\infty$, $\|p_1^{-1}\Rb^\top\Rb-\Ib_{k_1}\|\rightarrow 0$ and $	 \|p_2^{-1}\Cb^\top\Cb-\Ib_{k_2}\|\rightarrow 0$.

The $\alpha$-mixing condition in Assumption A allows weak temporal correlations  for both the factors and noises. In Assumption B, the factor matrix is centralized with bounded fourth moment. The condition in (\ref{equ:covariance}) of Assumption B is easily fulfilled under the $\alpha$-mixing assumption, by Corollary 16.2.4 in \cite{athreya2006measure}. The eigenvalues of $\bSigma_i$'s are assumed to be distinct such that the corresponding eigenvectors are identifiable. We assume strong factor conditions in Assumption C, which means that the row and column factors are pervasive along both dimensions. This result is an extension of the pervasive assumption in \cite{stock2002forecasting} to the matrix regime. For identifiability, we assume $\|p_1^{-1}\Rb^\top\Rb-\Ib_{k_1}\|\rightarrow 0$ and $	 \|p_2^{-1}\Cb^\top\Cb-\Ib_{k_2}\|\rightarrow 0$ as $\min\{p_1,p_2\}\rightarrow\infty$. Assumptions A,  B, and C are standard and common in the literature and similar assumptions are adopted by \cite{Chen2020StatisticalIF}, except that the factor matrix is not centralized in their setting.

\noindent\textbf{Assumption D. Weak correlation of noise $\Eb_t$ across column, row, and time.} A positive constant $c<\infty$ {exists} such that
\begin{enumerate}
	\item $\mathbb{E}e_{t,ij}=0$, $\mathbb{E}(e_{t,ij}^8)\le c$.
	\item for any $t\in[T]$, $i\in[p_1]$, $j\in[p_2]$,
	\[
	 (1).\sum_{s=1}^T\sum_{l=1}^{p_1}\sum_{h=1}^{p_2}|\mathbb{E}e_{t,ij}e_{s,lh}|\le c,\quad (2).\sum_{l=1}^{p_1}\sum_{h=1}^{p_2}|\mathbb{E}e_{t,lj}e_{t,ih}|\le c.
	\]
	\item  for any $t\in[T],i,l_1\in [p_1],j,h_1\in[p_2]$,
	\begin{small}
		\[
		\begin{split}
		 (1).&\sum_{s=1}^{T}\sum_{l_2=1}^{p_1}\sum_{h=1}^{p_2}\bigg|\text{Cov}(e_{t,ij}e_{t,l_1j}, e_{s,ih}e_{s,l_2h})\bigg|\le c, \quad\sum_{s=1}^T\sum_{l=1}^{p_1}\sum_{h_2=1}^{p_2}\bigg|\text{Cov}(e_{t,ij}e_{t,ih_1}, e_{s,lj}e_{s,lh_2})\bigg|\le c,\\
		 (2).&\sum_{s=1}^T\sum_{l_2=1}^{p_1}\sum_{h_2=1}^{p_2}\bigg|\text{Cov}(e_{t,ij}e_{t,l_1h_1}, e_{s,ij}e_{s,l_2h_2})\bigg|\le c,\quad \sum_{s=1}^T\sum_{l_2=1}^{p_1}\sum_{h_2=1}^{p_2}\bigg|\text{Cov}(e_{t,l_1j}e_{t,ih_1}, e_{s,l_2j}e_{s,ih_2})\bigg|\le c.
		\end{split}
		\]
	\end{small}
\end{enumerate}

Assumption D is essentially an extension of  Assumption C in \cite{bai2003inferential} to the matrix regime. Similar conditions are adopted by \cite{Chen2020StatisticalIF}. Assumption D.2 (1) allows weak correlation of the noises across time, row and column. It can be a sufficient condition to the Assumptions D.2, E and G.3 in \cite{Chen2020StatisticalIF}. Assumption D.2 (2) further controls the column-wise and row-wise correlation of the noises. Assumption D.3 (1) is similar to the Assumption G.1 in \cite{Chen2020StatisticalIF}, where they require that
\[
\mathbb{E}\bigg\|\frac{1}{\sqrt{Tp_1p_2}}\sum_{t=1}^T\sum_{l=1}^{p_1}\sum_{j=1}^{p_2}\Rb_{l\cdot}\big(e_{t,ij}e_{t,lj}-\mathbb{E}e_{t,ij}e_{t,lj}\big)\bigg\|^2\le c.
\]
Therefore,  the correlation of noises up to the second moment is controlled.
Assumption D.3 (2) is similar to D.3 (1), but on different combinations of noise pairs.  Suppose that the $\{e_{t,ij}\}$'s are located in a 3D space indexed by time, row  and column, Assumption D is satisfied if $e_{t,ij}\perp e_{s,lh}$ as long as the index distance between them is larger than some bandwidth, or the correlation decays sufficiently fast as the distance increases.

\noindent\textbf{Assumption E. Weak dependence between factor $\Fb_t$ and noise $\Eb_t$.} Constant $c>0$ exists such that
\begin{enumerate}
	\item
	for any deterministic vectors $\bv$ and $\bw$ satisfying $\|\bv\|=1$ and $\|\bw\|=1$ with suitable dimensions,
	\[ \mathbb{E}\bigg\|\frac{1}{\sqrt{T}}\sum_{t=1}^T(\Fb_{t}\bv^\top\Eb_t\bw)\bigg\|^2\le c;
	\]
	\item for any $i,l_1\in[p_1]$ and $j,h_1\in[p_2]$,
	\begin{small}
		\[
		\begin{split}
		 &(1).\Big|\sum_{h=1}^{p_2}\mathbb{E}(\bar\bzeta_{ij}\otimes\bar\bzeta_{ih})\Big|_{\max}\le c,\quad \Big|\sum_{l=1}^{p_1}\mathbb{E}(\bar\bzeta_{ij}\otimes\bar\bzeta_{lj})\Big|_{\max}\le c,\\
		&(2). \Big|\sum_{l=1}^{p_1}\sum_{h_2=1}^{p_2}\text{Cov}(\bar\bzeta_{ij}\otimes\bar\bzeta_{ih_1},\bar\bzeta_{lj}\otimes \bar\bzeta_{lh_2})\Big|_{\max}\le c, \Big|\sum_{l_2=1}^{p_1}\sum_{h=1}^{p_2}\text{Cov}(\bar\bzeta_{ij}\otimes\bar\bzeta_{l_1j},\bar\bzeta_{ih}\otimes \bar\bzeta_{l_2h})\Big|_{\max}\le c,
		\end{split}
		\]
where $\bar\bzeta_{ij}=\text{Vec}(\sum_{t=1}^T\Fb_te_{t,ij}/\sqrt{T})$.
	\end{small}
\end{enumerate}

Assumption E.1 is summarized from the Assumptions F and G.2 in \cite{Chen2020StatisticalIF}. Indeed, we can view $\bv^\top\Eb_t\bw$ as a random variable with  mean zero and bounded variance since the noise is weakly correlated across row and column. Therefore, Assumption E.1 simply implies that $\mathbb{E}(\Fb_t\bv^\top\Eb_t\bw)\approx {\bf 0}$ and the temporal correlations of the series $\{\Fb_t\bv^\top\Eb_t\bw\}$ are also weak.  Assumption E.2 controls  higher-order correlations between the factor and noise series, where $\bar \bzeta_{ij}$ can be simply viewed as random vectors with fixed dimension and bounded marginal variances (under Assumption E.1). Assumption E is satisfied if the noise series is independent across time and independent of the factor series, given the Assumptions A to D.

\subsection{Asymptotics on  projection estimators}
We first present the following conditions on the convergence rates of the initial estimators $\hat{\Rb}$ and $\hat{\Cb}$ to guarantee the projection procedure works.

\vspace{1em}
\noindent {\bf (Sufficient Condition)} There exist $k_1\times k_1$ matrices $\hat\Hb_1$  satisfying $\hat\Hb_1\hat\Hb_1^\top\overset{p}{\rightarrow}\Ib_{k_1}$ and
\begin{equation}\label{c1}
\begin{aligned}
&\text{(a). }\frac{1}{p_1}\|\hat\Rb-\Rb\hat\Hb_1\|_F^2=O_p(w_1),\quad\text{(b). }\frac{1}{p_2}\bigg\|\frac{1}{Tp_1}\sum_{s=1}^{T}\Eb_s^\top(\hat\Rb-\Rb\hat\Hb_1)\Fb_s\bigg\|_F^2=O_p(w_2),
\end{aligned}
\end{equation}
where $w_1,w_2\rightarrow 0$ as $T,p_1$ and $p_2$ go to infinity simultaneously. There exist $k_2\times k_2$ matrices $\hat\Hb_2$ satisfying $\hat\Hb_2\hat\Hb_2^\top\overset{p}{\rightarrow}\Ib_{k_2}$ and
\begin{equation}\label{c2}
\begin{aligned}
&\text{(a). }\frac{1}{p_2}\|\hat\Cb-\Cb\hat\Hb_2\|_F^2=O_p(m_1),\quad\text{(b). }\frac{1}{p_1}\bigg\|\frac{1}{Tp_2}\sum_{s=1}^{T}\Eb_s(\hat\Cb-\Cb\hat\Hb_2)\Fb_s^\top\bigg\|_F^2=O_p(m_2),
\end{aligned}
\end{equation}
where $m_1,m_2\rightarrow 0$ as $T,p_1$ and $p_2$ go to infinity simultaneously.

Now, we state a theorem on the convergence rates of our projection estimators.

\begin{theorem}[Consistency of the projected estimators]\label{thm1}
{Under Assumptions A to E and sufficient conditions (\ref{c1}) and (\ref{c2}),  matrices $\tilde\Hb_1$ and $\tilde\Hb_2$ exist, satisfying $\tilde\Hb_1^\top\tilde\Hb_1/p_1\stackrel{p}{\rightarrow}\Ib_{k_1}$ and $\tilde\Hb_2^\top\tilde\Hb_2/p_2\stackrel{p}{\rightarrow}\Ib_{k_2}$, such that
	\[
\begin{aligned}
 &\frac{1}{p_1}\|\tilde\Rb-\Rb\tilde\Hb_1\|_F^2=O_p(\tilde w_1),\quad \|\tilde\bR_{i\cdot}-\bR_{i\cdot}\tilde\Hb_1\|^2=O_p(\tilde w_1),\quad i=1,\ldots,p_1,\\
 & \frac{1}{p_2}\|\tilde\Cb-\Cb\tilde\Hb_2\|_F^2=O_p(\tilde m_1),\quad \|\tilde\bC_{j\cdot}-\bC_{j\cdot}\tilde\Hb_2\|^2=O_p(\tilde m_1), \quad j=1,\ldots,p_2,
\end{aligned}
	\]
	as $T,p_1$ and $p_2$ go to infinity simultaneously, where $\bR_{i\cdot}$, $\tilde\bR_{i\cdot}$, $\bC_{j\cdot}$ and $\tilde\bC_{j\cdot}$ are the $i$-th /$j$-th row of $\Rb$, $\tilde\Rb$, $\Cb$ and $\tilde\Cb$, respectively, and
	\[
	\begin{aligned}
	\tilde w_1=&\frac{1}{Tp_2}+\frac{1}{p_1^2p_2^2}+m_1^2 \bigg(\frac{1}{p_1^2}+\frac{1}{Tp_1}\bigg)+m_2,\quad
	\tilde m_1= \frac{1}{Tp_1}+\frac{1}{p_1^2p_2^2}+w_1^2 \bigg(\frac{1}{p_2^2}+\frac{1}{Tp_2}\bigg)+w_2.
	\end{aligned}
	\]}
\end{theorem}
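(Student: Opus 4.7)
The plan is to imitate the standard PCA perturbation analysis of \cite{bai2003inferential} and \cite{fan2013large}, adapted to the matrix regime with a random projection matrix $\hat{\Cb}$. Writing the eigendecomposition as $\tilde{\Mb}_1\tilde{\Qb}_1=\tilde{\Qb}_1\tilde{\bm{V}}_1$, where $\tilde{\bm{V}}_1$ is the $k_1\times k_1$ diagonal matrix of leading eigenvalues of $\tilde{\Mb}_1$, multiplying through by $\sqrt{p_1}$ gives
\[
\tilde{\Rb}\tilde{\bm{V}}_1 \;=\; \frac{1}{Tp_1}\sum_{t=1}^T \hat{\Yb}_t\hat{\Yb}_t^\top \tilde{\Rb}.
\]
I would substitute $\hat{\Yb}_t=p_2^{-1}\Rb\Fb_t\Cb^\top\hat{\Cb}+p_2^{-1}\Eb_t\hat{\Cb}$ and split $\hat{\Cb}=\Cb\hat{\Hb}_2+(\hat{\Cb}-\Cb\hat{\Hb}_2)$. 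Using Assumption~C together with $\hat{\Hb}_2\hat{\Hb}_2^\top\overset{p}{\to}\Ib_{k_2}$, the leading ``signal'' contribution collapses to $p_1^{-1}\Rb\bSigma_1\Rb^\top\tilde{\Rb}(1+o_p(1))$, which identifies the rotation
\[
\tilde{\Hb}_1 \;:=\; \Bigl(\frac{1}{T}\sum_{t=1}^T\Fb_t\hat{\Hb}_2\hat{\Hb}_2^\top\Fb_t^\top\Bigr)\Bigl(\frac{1}{p_1}\Rb^\top\tilde{\Rb}\Bigr)\tilde{\bm{V}}_1^{-1},
\]
modulo lower-order corrections involving $p_2^{-1}\Cb^\top\Cb-\Ib_{k_2}$. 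Everything left is a residual to be controlled term by term.

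The residual decomposes into four families, each indexed by a single ``nuisance'' factor: the identifiability gap $p_2^{-1}\Cb^\top\Cb-\Ib_{k_2}$, the projection error $\hat{\Cb}-\Cb\hat{\Hb}_2$, the linearly filtered noise $p_2^{-1}\Eb_t\Cb\hat{\Hb}_2$, and the mixed noise--projection-error $p_2^{-1}\Eb_t(\hat{\Cb}-\Cb\hat{\Hb}_2)$. The four contributions in $\tilde{w}_1$ arise respectively from: (i) Assumption~E.1 applied to the signal-times-filtered-noise cross term $(Tp_1p_2)^{-1}\sum_t\Rb\Fb_t\hat{\Hb}_2\hat{\Cb}^\top\Eb_t^\top$, which yields the $(Tp_2)^{-1}$ piece after squaring and normalizing by $p_1$; (ii) the quadratic noise term $(Tp_1p_2^2)^{-1}\sum_t\Eb_t\Cb\Cb^\top\Eb_t^\top$, expanded via Assumptions~D.2--D.3, producing the $(p_1p_2)^{-2}$ piece; (iii) inserting the bound $\|\hat{\Cb}-\Cb\hat{\Hb}_2\|_F^2=O_p(p_2 m_1)$ from (\ref{c2})(a) into two mixed residuals and applying Assumption~D, giving the $m_1^2\{p_1^{-2}+(Tp_1)^{-1}\}$ piece; (iv) directly invoking (\ref{c2})(b) to bound $(Tp_1p_2)^{-1}\sum_t\Eb_t(\hat{\Cb}-\Cb\hat{\Hb}_2)\Fb_t^\top$, producing the $m_2$ piece. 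To close the argument, Weyl's inequality applied to the signal-plus-noise decomposition of $\tilde{\Mb}_1$ shows that its leading $k_1$ eigenvalues are bounded away from zero, from infinity, and from the remaining eigenvalues, so that $\tilde{\bm{V}}_1^{-1}=O_p(1)$; substituting the expansion of $\tilde{\Rb}$ back into $\tilde{\Qb}_1^\top\tilde{\Qb}_1=\Ib_{k_1}$ and reusing the same bounds then delivers $\tilde{\Hb}_1^\top\tilde{\Hb}_1/p_1\overset{p}{\to}\Ib_{k_1}$.

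For the rowwise conclusion $\|\tilde{\bR}_{i\cdot}-\bR_{i\cdot}\tilde{\Hb}_1\|^2=O_p(\tilde{w}_1)$, I would restrict the above matrix decomposition to its $i$-th row; Assumption~C ($\|\Rb\|_{\max}\le \bar r$) together with the entrywise sum-form statements in Assumptions~D.2--D.3 and~E.2 ensures each rowwise residual is of the same order as the averaged one. The symmetric claims for $\tilde{\Cb}$ follow by applying the identical argument to $\{\Xb_t^\top\}_{t\le T}$, interchanging the roles of $(\Rb,\Cb)$, $(p_1,p_2)$ and $(k_1,k_2)$ and using (\ref{c1}) in place of (\ref{c2}). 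The main technical obstacle is the careful bookkeeping of the cross terms that entangle the random projection $\hat{\Cb}$ with the noise $\Eb_t$: the object $(Tp_1p_2)^{-1}\sum_t\Eb_t(\hat{\Cb}-\Cb\hat{\Hb}_2)\Fb_t^\top$ cannot be bounded by a product of norms of its factors without forfeiting the crucial $O_p(p_2^{-1/2})$ denoising gain of the projection, which is precisely why the joint sufficient condition in part~(b) of (\ref{c2}) on exactly this quantity has been formulated.
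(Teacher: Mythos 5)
Your route parallels the paper's own proof: expand the eigen-identity for $\tilde\Mb_1$ (the paper does the mirror case, $\tilde\Mb_2$ and $\tilde\Cb$ with projection $\hat\Rb$), absorb the signal block into a rotation, bound the cross and quadratic noise terms via Assumptions D--E and conditions (\ref{c1})--(\ref{c2}), use Weyl's inequality to get $\tilde\bLambda_1^{-1}=O_p(1)$, and obtain the other loading by symmetry. The genuine gap is your choice of rotation. The decomposition $\tilde\Rb-\Rb\tilde\Hb_1=(\text{residual})\,\tilde\Rb\tilde\bLambda_1^{-1}$ is exact only if $\tilde\Hb_1$ is the \emph{full} signal-block coefficient, $\tilde\Hb_1=(Tp_1p_2^2)^{-1}\sum_{t}\Fb_t\Cb^\top\hat\Cb\hat\Cb^\top\Cb\Fb_t^\top\Rb^\top\tilde\Rb\tilde\bLambda_1^{-1}$, which is exactly what the paper takes (its $\tilde\Hb_2$ on the mirrored side). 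You instead replace $p_2^{-1}\Cb^\top\hat\Cb$ by $\hat\Hb_2$ and dismiss the discrepancy as ``lower-order corrections involving $p_2^{-1}\Cb^\top\Cb-\Ib_{k_2}$''. But that discrepancy also contains $p_2^{-1}\Cb^\top(\hat\Cb-\Cb\hat\Hb_2)$, which under (\ref{c2})(a) alone is only $O_p(\sqrt{m_1})$ by Cauchy--Schwarz; if it stays in the residual it contributes $O_p(m_1)$ to $p_1^{-1}\|\tilde\Rb-\Rb\tilde\Hb_1\|_F^2$, and $m_1$ is \emph{not} bounded by $\tilde w_1$ (with the initial estimators and $T=p_2=p_1^2$, $m_1\asymp p_1^{-3}$ while $\tilde w_1\asymp p_1^{-4}$), so the rate improvement that is the entire point of the theorem is lost. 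Moreover, the identifiability gap $\|p_2^{-1}\Cb^\top\Cb-\Ib_{k_2}\|$ carries no rate under Assumption C, so it cannot be left in the residual either. Both must be swallowed by the rotation, whose only required properties are boundedness and $\tilde\Hb_1^\top\tilde\Hb_1\overset{p}{\to}\Ib_{k_1}$; since the theorem is an existence statement this is a fixable but real defect of your argument as written.

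A secondary omission: because $\tilde\Rb$ itself sits on the right of the identity, the quadratic-noise and noise-times-projection-error terms can, after splitting $\tilde\Rb=\Rb\tilde\Hb_1+(\tilde\Rb-\Rb\tilde\Hb_1)$, only be bounded by $o_p(1)$ times the very quantity $p_1^{-1}\|\tilde\Rb-\Rb\tilde\Hb_1\|_F^2$ being estimated; the paper closes this self-referential loop by an explicit absorption step (its Lemmas \ref{lema3}--\ref{lema4}), and replacing it by crude norm products gives bounds like $O_p(p_2^{-2})$ that exceed $\tilde w_1$ when $T\gg p_2$. Your sketch never addresses this, and your item (iii) would need the covariance-level computations of Assumptions D.2--D.3 (as in the paper's treatment of $\sum_t\Cb^\top\be_{t,i\cdot}\be_{t,\cdot j}^\top\Rb$-type sums), not just the Frobenius bound $\|\hat\Cb-\Cb\hat\Hb_2\|_F^2=O_p(p_2m_1)$, to reach the $m_1^2\{p_1^{-2}+(Tp_1)^{-1}\}$ term. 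Your closing remark about (\ref{c2})(b) shows you see this phenomenon for the $m_2$ term; the same care is required in the quadratic terms, together with the absorption device.
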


The estimation error bounds for $\tilde{\Rb}$ and $\tilde{\Cb}$ show clear dependence on the accuracy of the initial estimates. Actually, Theorem \ref{thm3} verifies that the proposed initial estimates in section \ref{sec2.3} satisfy the sufficient conditions (\ref{c1}) and (\ref{c2}) with
	\begin{equation}\label{equ3.3}
w_1=\frac{1}{p_1^2}+\frac{1}{Tp_2},\quad w_2=\frac{1}{Tp_1^2}+\frac{1}{T^2p_2^2},\quad 	 m_1=\frac{1}{p_2^2}+\frac{1}{Tp_1},\quad m_2=\frac{1}{Tp_2^2}+\frac{1}{T^2p_1^2}.
\end{equation}

A corollary follows directly.
 \begin{corollary}\label{cor1}
 {	Under Assumptions A to E, and based on assumed conditions (\ref{c1})--(\ref{equ3.3}), it holds that in Theorem \ref{thm1},
 	\[
 		\tilde w_1=\frac{1}{Tp_2}+\frac{1}{p_1^2p_2^2}+\frac{1}{T^2p_1^2},\quad
 	\tilde m_1=\frac{1}{Tp_1}+\frac{1}{p_1^2p_2^2}+\frac{1}{T^2p_2^2}.
 	\]}
 	\end{corollary}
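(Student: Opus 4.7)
The plan is to observe that Corollary \ref{cor1} is a purely algebraic consequence of Theorem \ref{thm1}: one just substitutes the rates in (\ref{equ3.3}) into the bookkeeping expressions for $\tilde w_1$ and $\tilde m_1$ and discards terms that are asymptotically dominated when $T,p_1,p_2\to\infty$. So the ``proof'' is a simplification argument rather than a genuinely new estimate.

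Concretely, I would first reprint the formula
\[
\tilde w_1=\frac{1}{Tp_2}+\frac{1}{p_1^2p_2^2}+m_1^2\Bigl(\frac{1}{p_1^2}+\frac{1}{Tp_1}\Bigr)+m_2
\]
from Theorem \ref{thm1}, plug in $m_1=1/p_2^2+1/(Tp_1)$ and $m_2=1/(Tp_2^2)+1/(T^2p_1^2)$ from (\ref{equ3.3}), and then use the elementary inequality $(a+b)^2\le 2a^2+2b^2$ to expand
\[
m_1^2\Bigl(\frac{1}{p_1^2}+\frac{1}{Tp_1}\Bigr)\lesssim \frac{1}{p_1^2p_2^4}+\frac{1}{T^2p_1^4}+\frac{1}{Tp_1p_2^4}+\frac{1}{T^3p_1^3}.
\]
Next I would check term by term that each summand on the right is dominated by at least one of $1/(Tp_2)$, $1/(p_1^2p_2^2)$, or $1/(T^2p_1^2)$ (for instance $1/(p_1^2p_2^4)\le 1/(p_1^2p_2^2)$ and $1/(T^2p_1^4)\le 1/(T^2p_1^2)$, and the remaining two summands are similarly dominated). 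Together with the observation that $1/(Tp_2^2)\le 1/(Tp_2)$, this reduces $\tilde w_1$ to the claimed three-term rate.

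The expression for $\tilde m_1$ is obtained by the identical substitution with the roles of $(p_1,w_1,w_2)$ and $(p_2,m_1,m_2)$ exchanged, which follows from the symmetry between rows and columns already present in Theorem \ref{thm1}; no additional work is required beyond repeating the previous simplification with the indices swapped.

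There is no genuine obstacle here: every step is a direct inequality between monomials in $T,p_1,p_2$. The only thing one has to be careful about is verifying that each of the four cross terms arising from expanding $m_1^2(1/p_1^2+1/(Tp_1))$ is indeed dominated, and writing the comparisons explicitly so that the conclusion $\tilde w_1\asymp 1/(Tp_2)+1/(p_1^2p_2^2)+1/(T^2p_1^2)$ is transparent.
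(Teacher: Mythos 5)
Your proposal is correct and matches the paper's treatment: the paper derives Corollary \ref{cor1} by exactly this direct substitution of the initial-estimator rates (\ref{equ3.3}) into the expressions for $\tilde w_1$ and $\tilde m_1$ in Theorem \ref{thm1}, discarding the monomials dominated by $1/(Tp_2)$, $1/(p_1^2p_2^2)$, $1/(T^2p_1^2)$ (respectively their row--column counterparts). Your explicit term-by-term domination checks are a fuller write-up of what the paper states as "a corollary follows directly," and they are all valid.
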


The convergence rates for the estimators of $\Rb$ and $\Cb$ in \cite{Chen2020StatisticalIF} are $O_p\{(Tp_2)^{-1}+p_1^{-1}\}$ and $O_p\{(Tp_1)^{-1}+p_2^{-1}\}$, respectively. Corollary \ref{cor1} demonstrates that our projected estimators of the row and column factor spaces perform no worse than \cite{Chen2020StatisticalIF}'s estimator, and achieve faster convergence rates than \cite{Chen2020StatisticalIF}'s estimators, when $Tp_2>p_1$ for estimating the row factor loading matrix $\Rb$ and $Tp_1>p_2$ for estimating the column factor loading matrix $\Cb$.

If we take each column (or row) as individual observation and look at $\{\Xb_t\}$ as a vector time series of length $Tp_2$ and dimension $p_1$ as in (\ref{equ2.2}), the theorems in \cite{bai2003inferential} and \cite{fan2013large} indicate a convergence rate of $(Tp_2)^{-1}+ p_1^{-2}$ for estimating $\Rb$ (or $(Tp_1)^{-1}+p_2^{-2}$ for estimating $\Cb$). Indeed, under our assumptions, we can improve the results in \cite{Chen2020StatisticalIF} and show that the $\alpha$-PCA estimator with $\alpha=0$, i.e. the initial estimator, achieves the rates conceivable from \cite{bai2003inferential} and \cite{fan2013large}. Recall that, as (\ref{equ3.1.1})--(\ref{M1p}) show, our first-step projected matrix series $\{\Yb_t\}$ can be interpreted as a series of $p_1$ dimensional vectors of length $Tk_2$ with asymptotically negligible error entries. A comparison of (\ref{equ3.1.1})--(\ref{M1p}) with (\ref{equ2.2})--(\ref{hatM1}) demonstrates that the projection estimators benefit from a smaller noise level in the sense of the spectral norm. Indeed, the proof in the supplementary material shows that the idiosyncratic risk components for $\Mb_1$ and $\hat\Mb_1$ are of orders ${1}/{\sqrt{Tp_2}}+{1}/{(Tp_1)}+{1}/{(p_1p_2)}$ and ${1}/{\sqrt{Tp_2}}+{1}/{p_1}$, respectively. The convergence rates in Corollary \ref{cor1} also {imply} that our projection estimators converge faster than  the PCA estimators by vectorizing the columns of $\Xb_t$.

To further study the entry-wise asymptotic distributions of the estimated loadings, we need the following assumptions.

\noindent{\bf Assumption F} { \it For $i\le p_1$,
 	\[
 	 \frac{1}{\sqrt{Tp_2}}\sum_{t=1}^{T}\Fb_t\Cb^
 	 \top\be_{t,i\cdot}\stackrel{d}{\rightarrow}\mathcal{N}({\bf 0},\Vb_{1i}),\quad\text{where}\quad \Vb_{1i}={\lim_{T,p_1,p_2\rightarrow\infty}}\frac{1}{Tp_2}\sum_{t=1}^{T}\mathbb{E}\Fb_t\Cb^\top\text{cov}(\be_{t,i\cdot})\Cb\Fb_t^\top.
 	\]
 	For $j\le p_2$,
 	 	\[
 	 \frac{1}{\sqrt{Tp_1}}\sum_{t=1}^{T}\Fb_t^\top\Rb^\top\be_{t,\cdot j}\stackrel{d}{\rightarrow}\mathcal{N}({\bf 0},\Vb_{2j}),\quad\text{where}\quad \Vb_{2j}={\lim_{T,p_1,p_2\rightarrow\infty}}\frac{1}{Tp_1}\sum_{t=1}^{T}\mathbb{E}\Fb_t^\top\Rb^\top\text{cov}(\be_{t,\cdot j})\Rb\Fb_t.
 	\]
 	$\Vb_{1i}$ and $\Vb_{2j}$ are positive definite matrices whose eigenvalues are bounded away from 0 and infinity.}

Assumption F can be verified by martingale central limit theorem. It is easily fulfilled under the proposed $\alpha$-mixing condition and weak correlation assumptions.  One can refer to Chapter 16 of \cite{athreya2006measure} for more details. Similar assumptions are imposed for vector or matrix factor models, as in the work of  \cite{bai2003inferential} and \cite{Chen2020StatisticalIF}. The following Theorem \ref{thm2} shows the asymptotic distributions of the projected estimators of the loading matrices.

\begin{theorem}[Asymptotic normality of  projection estimators]\label{thm2}
	Under Assumptions A to F, if the initial estimators $\hat \Rb$ and $\hat\Cb$ are proposed as in Section \ref{sec2.3},
\begin{enumerate}
	\item for $i\le p_1$,
	\[
	\left\{\begin{aligned}
	 &\sqrt{Tp_2}(\tilde\bR_i-\tilde\Hb_1^\top\bR_i)\stackrel{d}{\rightarrow}\mathcal{N}({\bf 0}, \bLambda_1^{-1}\bGamma_1^\top\Vb_{1i}\bGamma_1\bLambda_1^{-1}),&\text{if}\quad &Tp_2=o(\min\{T^2p_1^2,p_2^2p_1^2\}),\\
	 &\tilde\bR_i-\tilde\Hb_1^\top\bR_i=O_p\big(\frac{1}{Tp_1}+\frac{1}{p_2p_1}\big),&\text{if}\quad &Tp_2\gtrsim \min\{T^2p_1^2,p_2^2p_1^2\};
	\end{aligned}
	\right.
	\]
	\item for $j\le p_2$,
	\[
	\left\{\begin{aligned}
	 &\sqrt{Tp_1}(\tilde\bC_j-\tilde\Hb_2^\top\bC_j)\stackrel{d}{\rightarrow}\mathcal{N}({\bf 0}, \bLambda_2^{-1}\bGamma_2^\top\Vb_{2j}\bGamma_2\bLambda_2^{-1}),&\text{if}\quad &Tp_1=o(\min\{T^2p_2^2,p_1^2p_2^2\}),\\
	 &\tilde\bC_j-\tilde\Hb_2^\top\bC_j=O_p\big(\frac{1}{Tp_2}+\frac{1}{p_1p_2}\big),&\text{if}\quad &Tp_1\gtrsim \min\{T^2p_2^2,p_1^2p_2^2\}.
	\end{aligned}
	\right.
	\]
\end{enumerate}
\end{theorem}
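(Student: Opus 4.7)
The plan is to mirror the classical asymptotic-normality argument for vector factor models of \cite{bai2003inferential}, but applied to the projected surrogate matrices $\{\hat{\Yb}_t\}$ rather than $\{\Xb_t\}$. Because the projection shrinks the effective noise level from $O_p(1)$ per entry to $O_p(p_2^{-1/2})$, the natural central-limit rate becomes $\sqrt{Tp_2}$ instead of $\sqrt{T}$. First I would exploit the eigen-equation $\tilde{\Mb}_1 \tilde{\Rb} = \tilde{\Rb} \tilde{\bLambda}_1$, equivalently $\tilde{\Rb} = (Tp_1)^{-1}\sum_t \hat{\Yb}_t \hat{\Yb}_t^\top \tilde{\Rb} \tilde{\bLambda}_1^{-1}$, and substitute $\hat{\Yb}_t = p_2^{-1}\Rb\Fb_t\Cb^\top \hat{\Cb} + p_2^{-1}\Eb_t\hat{\Cb}$. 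Expanding the outer product $\hat{\Yb}_t\hat{\Yb}_t^\top$ into its four signal/noise blocks and absorbing the signal--signal block into the rotation factor yields the identity
\[
\tilde{\Rb} - \Rb\tilde{\Hb}_1 \;=\; \frac{1}{Tp_1 p_2^2}\sum_{t=1}^{T} \Big(\Rb\Fb_t\Cb^\top \hat{\Cb}\hat{\Cb}^\top \Eb_t^\top \,+\, \Eb_t\hat{\Cb}\hat{\Cb}^\top \Cb\Fb_t^\top \Rb^\top \,+\, \Eb_t\hat{\Cb}\hat{\Cb}^\top \Eb_t^\top\Big)\tilde{\Rb}\tilde{\bLambda}_1^{-1},
\]
where $\tilde{\Hb}_1$ is the $k_1\times k_1$ matrix that absorbs the signal--signal term. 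As in the proof of Theorem \ref{thm1}, $\tilde{\Hb}_1$ satisfies $\tilde{\Hb}_1\tilde{\bLambda}_1 \approx \bSigma_1 \tilde{\Hb}_1$, so that $\tilde{\bLambda}_1 \stackrel{p}{\rightarrow} \bLambda_1$, $\tilde{\Hb}_1 \stackrel{p}{\rightarrow} \bGamma_1$, and hence $\tilde{\bLambda}_1^{-1}\tilde{\Hb}_1^\top \stackrel{p}{\rightarrow}\bLambda_1^{-1}\bGamma_1^\top$.

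Next I would isolate the asymptotically normal contribution. Writing $\hat{\Cb} = \Cb \hat{\Hb}_2 + (\hat{\Cb} - \Cb \hat{\Hb}_2)$ and invoking $\hat{\Hb}_2\hat{\Hb}_2^\top \stackrel{p}{\rightarrow} \Ib_{k_2}$ together with $\Cb^\top\Cb/p_2 \stackrel{p}{\rightarrow} \Ib_{k_2}$, the second cross block of the above display collapses to
\[
\frac{1}{Tp_1 p_2}\sum_{t=1}^{T} \Eb_t \Cb \Fb_t^\top \Rb^\top \tilde{\Rb}\tilde{\bLambda}_1^{-1} \;+\; \text{terms involving } \hat{\Cb}-\Cb\hat{\Hb}_2.
\]
Extracting the $i$-th row (written as a column vector, to match the theorem statement) and using $\Rb^\top \tilde{\Rb}/p_1 \approx \tilde{\Hb}_1$, a by-product of the proof of Theorem \ref{thm1}, this reduces to
\[
\tilde{\bR}_i - \tilde{\Hb}_1^\top \bR_i \;=\; \tilde{\bLambda}_1^{-1}\tilde{\Hb}_1^\top \cdot \frac{1}{Tp_2}\sum_{t=1}^{T} \Fb_t \Cb^\top \be_{t,i\cdot} \;+\; \text{remainder}.
\]
Scaling by $\sqrt{Tp_2}$, Assumption F supplies a CLT for the inner sum with limit $\mathcal{N}({\bf 0},\Vb_{1i})$; Slutsky's theorem combined with the probability limits above delivers the announced variance $\bLambda_1^{-1}\bGamma_1^\top \Vb_{1i}\bGamma_1\bLambda_1^{-1}$.

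The third step is a careful bookkeeping exercise verifying that the remainder is $O_p\big((Tp_1)^{-1} + (p_1p_2)^{-1}\big)$. This bound serves both as the non-asymptotic statement of the theorem and, once $Tp_2 = o(\min\{T^2p_1^2, p_1^2 p_2^2\})$ holds, as a negligibility guarantee against the CLT rate $(Tp_2)^{-1/2}$. The remainder splits into three classes: (i) the symmetric cross block $(Tp_1p_2^2)^{-1}\sum_t \Rb\Fb_t\Cb^\top\hat{\Cb}\hat{\Cb}^\top \Eb_t^\top \tilde{\Rb}\tilde{\bLambda}_1^{-1}$, treated identically to the leading term since $\Eb_t\hat{\Cb}/p_2$ and $\hat{\Cb}^\top \Cb\Fb_t^\top/p_2$ share the same order; (ii) the pure noise--noise block $(Tp_1p_2^2)^{-1}\sum_t \Eb_t\hat{\Cb}\hat{\Cb}^\top \Eb_t^\top \tilde{\Rb}\tilde{\bLambda}_1^{-1}$, whose second-moment computation under Assumption D produces the $(Tp_1)^{-1}+(p_1p_2)^{-1}$ order; and (iii) the substitution errors generated by replacing $\hat{\Cb}$ with $\Cb\hat{\Hb}_2$ in every block, bounded through the initial-estimator rates $m_1,m_2$ in (\ref{c2}) and the explicit expressions (\ref{equ3.3}) supplied by Theorem \ref{thm3}.

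The hardest part, I anticipate, is class (iii): because $\hat{\Cb}$ depends on the same $\Eb_t$ being averaged, naive moment expansions overshoot. The remedy is to feed in the stochastic expansion of $\hat{\Cb} - \Cb\hat{\Hb}_2$ already produced in the proof of Theorem \ref{thm1} applied to the transposed data $\{\Xb_t^\top\}$, and then chain Assumptions D and E to bound the cross-moments block by block. Finally, the companion statement for $\tilde{\bC}_j$ follows from an identical argument applied to the transposed series $\{\Xb_t^\top\}_{t\le T}$, swapping $(p_1,\Rb,\bGamma_1,\bLambda_1)$ with $(p_2,\Cb,\bGamma_2,\bLambda_2)$ throughout.
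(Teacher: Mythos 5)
Your proposal follows essentially the same route as the paper's own proof: expand the eigen-equation for $\tilde\Mb_1$ (resp. $\tilde\Mb_2$) with the projected data, absorb the signal--signal block into the rotation $\tilde\Hb_1$, identify the leading term $\tilde\bLambda_1^{-1}\tilde\Hb_1^\top(Tp_2)^{-1}\sum_t\Fb_t\Cb^\top\be_{t,i\cdot}$, apply Assumption F with Slutsky after showing $\tilde\bLambda_1\to\bLambda_1$ and $\bGamma_1^\top\tilde\Hb_1\to\Ib_{k_1}$ via the distinct-eigenvalue commutation argument, and bound the cross and noise--noise blocks by recycling the machinery behind Theorems \ref{thm1} and \ref{thm3} (in particular the expansion of $\hat\Cb-\Cb\hat\Hb_2$ to handle its dependence on $\Eb_t$). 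The only slight looseness is in the exact orders you assign to the remainder blocks (some pieces are only $o_p(1/\sqrt{Tp_2})$ rather than $O_p((Tp_1)^{-1}+(p_1p_2)^{-1})$), but this does not affect either regime of the conclusion.
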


\subsection{Theorems on initial estimators}\label{sec3.2}
As claimed, the initial estimators are $\alpha$-PCA solutions in \cite{Chen2020StatisticalIF} with $\alpha=0$. However, based on the argument below Corollary \ref{cor1}, the convergence rate in their paper is slower than the expected one  by a factor of $p_1^{-1}$ or $p_2^{-1}$. Under our assumptions, an improved rate is accessible and summarized in the following theorem.

\begin{theorem}\label{thm3}
	Under Assumptions A to E, (\ref{equ3.3}) holds for the initial estimators.
\end{theorem}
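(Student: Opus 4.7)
The plan is to follow the classical eigenvector-perturbation expansion used in \cite{bai2003inferential} and \cite{fan2013large}, adapted to the matrix-factor setup. Since $\hat{\Rb}=\sqrt{p_1}\hat{\Qb}_1$ where $\hat{\Qb}_1$ collects the leading $k_1$ eigenvectors of $\hat\Mb_1$, the eigen-equation gives $\hat{\Rb}=\hat{\Mb}_1\hat{\Rb}\hat{\Vb}_1^{-1}$ with $\hat{\Vb}_1$ the diagonal matrix of corresponding eigenvalues. Plugging $\Xb_t=\Rb\Fb_t\Cb^\top+\Eb_t$ into $\hat{\Mb}_1$ and rearranging yields
\begin{equation*}
\hat{\Rb}-\Rb\hat{\Hb}_1 \;=\; (I_1+I_2+I_3)\hat{\Vb}_1^{-1},
\end{equation*}
where $\hat{\Hb}_1=(Tp_1p_2)^{-1}(\sum_t\Fb_t\Cb^\top\Cb\Fb_t^\top)\Rb^\top\hat{\Rb}\hat{\Vb}_1^{-1}$, and $I_1=(Tp_1p_2)^{-1}\Rb\sum_t\Fb_t\Cb^\top\Eb_t^\top\hat{\Rb}$, $I_2=(Tp_1p_2)^{-1}\sum_t\Eb_t\Cb\Fb_t^\top\Rb^\top\hat{\Rb}$, $I_3=(Tp_1p_2)^{-1}\sum_t\Eb_t\Eb_t^\top\hat{\Rb}$. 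A preliminary step uses Assumptions B, C and a Weyl-type comparison with $p_1^{-1}\Rb\bSigma_1\Rb^\top$ to show that the diagonal of $\hat{\Vb}_1$ is $O_p(1)$ and bounded away from zero, so $\hat{\Vb}_1^{-1}$ is harmless and $\hat{\Hb}_1\hat{\Hb}_1^\top\stackrel{p}{\rightarrow}\Ib_{k_1}$.

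To establish (\ref{c1})(a) with $w_1=1/p_1^2+1/(Tp_2)$, I bound each $I_k$ in Frobenius norm. For $I_1$ and $I_2$, treating the columns of $\Cb/\sqrt{p_2}$ and $\hat{\Rb}/\sqrt{p_1}$ as approximately unit vectors and applying Assumption E.1 column-wise gives $\|\sum_t\Fb_t\Cb^\top\Eb_t^\top\hat{\Rb}\|_F^2=O_p(Tp_1p_2)$, hence $p_1^{-1}(\|I_1\|_F^2+\|I_2\|_F^2)=O_p(1/(Tp_2))$. For $I_3$, I split $\Eb_t\Eb_t^\top=\mathbb{E}\Eb_t\Eb_t^\top+(\Eb_t\Eb_t^\top-\mathbb{E}\Eb_t\Eb_t^\top)$; Assumption D.2 yields $\|\mathbb{E}\Eb_t\Eb_t^\top\|=O(p_2)$, so after the $(p_1p_2)^{-1}$ normalization the mean part contributes $O_p(1/p_1^2)$, while the centered part, controlled through the fourth-moment covariance bounds in Assumption D.3, is of strictly smaller order. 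Combining gives $p_1^{-1}\|\hat{\Rb}-\Rb\hat{\Hb}_1\|_F^2=O_p(1/p_1^2+1/(Tp_2))$.

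Condition (\ref{c1})(b) is the more delicate estimate: I substitute the above expansion into $\frac{1}{Tp_1}\sum_s\Eb_s^\top(\hat{\Rb}-\Rb\hat{\Hb}_1)\Fb_s$ and process the resulting sums indexed by both $s$ and $t$. The $I_1$ contribution leaves a factor of $\Rb$ on the left, so it reduces to $(Tp_1)^{-1}\sum_s\Eb_s^\top\Rb\,(\text{bounded rotation})\,\Fb_s$, treatable by Assumption E.1 and contributing $O_p(1/(Tp_1^2))$ after the $1/p_2$ normalization. The $I_2$ and $I_3$ contributions involve quadratic cross-products of noise across $s$ and $t$; here the fourth-moment bounds in Assumptions D.3 and E.2 are essential to show that the quadruple sums enjoy cancellation and produce an $O_p(1/(T^2p_2^2))$ remainder. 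Controlling these quadruple noise sums and verifying that Assumptions D.3 and E.2 provide exactly the right cross-covariance bounds is the main technical obstacle. Once (\ref{c1}) is established, the corresponding bounds (\ref{c2}) for $\hat{\Cb}$ follow by applying the identical argument to $\{\Xb_t^\top\}$ with the roles of $(\Rb,p_1)$ and $(\Cb,p_2)$ interchanged, which together with the rates for $w_1,w_2$ produces the values in (\ref{equ3.3}).
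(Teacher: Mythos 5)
Your route is the paper's route: the same four-term expansion of $\hat\Mb_1$, the same eigen-equation identity with $\hat\Hb_1=(Tp_1p_2)^{-1}(\sum_t\Fb_t\Cb^\top\Cb\Fb_t^\top)\Rb^\top\hat\Rb\hat\bLambda_1^{-1}$, Weyl-type control of the eigenvalues, term-by-term Frobenius bounds for (\ref{c1})(a), substitution of the expansion into $\frac{1}{Tp_1}\sum_s\Eb_s^\top(\hat\Rb-\Rb\hat\Hb_1)\Fb_s$ for (\ref{c1})(b), and transpose symmetry for $\hat\Cb$. However, as written there is a gap at the point where you invoke Assumption E.1 "column-wise" with the columns of $\hat\Rb/\sqrt{p_1}$ treated as approximately unit vectors: E.1 is stated for \emph{deterministic} directions, while $\hat\Rb$ is a function of the entire sample (including the noise), so the bound $\|\sum_t\Fb_t\Cb^\top\Eb_t^\top\hat\Rb\|_F^2=O_p(Tp_1p_2)$ is not immediate. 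The paper's Lemma B.2 resolves this by writing $\hat\Rb=\Rb\hat\Hb_1+(\hat\Rb-\Rb\hat\Hb_1)$: the first piece is covered by E.1 (Lemma A.1(2)), and the second produces a term of the form $o_p(1)\times p_1^{-1}\|\hat\Rb-\Rb\hat\Hb_1\|_F^2$, i.e.\ the quantity being bounded reappears on the right and must be absorbed back into the left-hand side (a self-bounding step). The same device is needed for the quartic noise term $I_3$ and again throughout the $w_2$ estimate; your sketch never acknowledges this circular structure, and without it the claimed rates are not justified.

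Two further slips in the accounting of rates, fixable but worth noting. First, the centered part of $(Tp_1p_2)^{-1}\sum_t\Eb_t\Eb_t^\top$ is $O_p((Tp_2)^{-1/2})$ in Frobenius norm (Lemma B.1), which is \emph{not} of strictly smaller order than the $O(p_1^{-1})$ mean part in general; the conclusion $w_1=p_1^{-2}+(Tp_2)^{-1}$ survives only because $(Tp_2)^{-1}$ is part of $w_1$ anyway. Second, in the $w_2$ estimate your claim that the $I_1$ contribution "reduces to $(Tp_1)^{-1}\sum_s\Eb_s^\top\Rb(\text{bounded rotation})\Fb_s$" misreads the structure: the inner factor $(Tp_1p_2)^{-1}\sum_t\Fb_t\Cb^\top\Eb_t^\top\hat\Rb$ is not a bounded rotation but a small random matrix of order $(Tp_1p_2)^{-1/2}$ (after the splitting above), and it is exactly this smallness that makes the term negligible — if it were merely $O_p(1)$ you would get $O_p((Tp_1)^{-1})$, which exceeds $w_2=(Tp_1^2)^{-1}+(T^2p_2^2)^{-1}$. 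In the paper the dominant $(Tp_1^2)^{-1}$ piece of $w_2$ comes from the $\sum_s\Eb_s^\top(\text{noise-quartic term})\hat\Rb\Fb_s$ contribution via Lemma A.1(3), and the $(T^2p_2^2)^{-1}$ piece from the $\sum_s\Eb_s^\top(\text{term with }\Eb_t\Cb\Fb_t^\top\Rb^\top)\hat\Rb\Fb_s$ contribution via Assumption E.2, so the fourth-moment conditions you cite are indeed the right tools, but attached to different terms than in your sketch.
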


For the initial estimator $\hat\Rb$, the rate $w_1$  matches the typical rate $O_p(T^{-1}+p_1^{-2})$ of the vector factor model when $p_2=1$, as shown by Theorem 2 in \cite{bai2003inferential}.  The initial estimators are also asymptotically normally distributed as shown in the next theorem.

\begin{theorem}[Asymptotic normality of the initial estimators]\label{thm4}
Under Assumptions A to F, as $T,p_1,p_2\rightarrow\infty$,
	\begin{enumerate}
		\item for $i\le p_1$,
		\[
		\left\{\begin{aligned}
		 &\sqrt{Tp_2}(\hat\bR_i-\hat\Hb_1^\top\bR_i)\stackrel{d}{\rightarrow}\mathcal{N}({\bf 0}, \bLambda_1^{-1}\bGamma_1^\top\Vb_{1i}\bGamma_1\bLambda_1^{-1}),&\text{if}\quad &Tp_2=o(p_1^2),\\
		 &\hat\bR_i-\hat\Hb_1^\top\bR_i=O_p(p_1^{-1}),&\text{if}\quad &Tp_2\gtrsim p_1^2;
		\end{aligned}
		\right.
\]
		\item for $j\le p_2$,
\[
\left\{\begin{aligned}
&\sqrt{Tp_1}(\hat\bC_j-\hat\Hb_2^\top\bC_j)\stackrel{d}{\rightarrow}\mathcal{N}({\bf 0}, \bLambda_2^{-1}\bGamma_2^\top\Vb_{2j}\bGamma_2\bLambda_2^{-1}),&\text{if}\quad &Tp_1=o(p_2^2),\\
&\hat\bC_j-\hat\Hb_2^\top\bC_j=O_p(p_2^{-1}),&\text{if}\quad &Tp_1\gtrsim p_2^2,
\end{aligned}
\right.
\]
where $\hat\bR_i$ and $\hat\bC_j$ are the $i$-th and $j$-th row vectors of $\hat\Rb$ and $\hat\Cb$, respectively.
	\end{enumerate}
\end{theorem}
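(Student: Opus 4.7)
The plan is to adapt the Bai (2003)-style asymptotic expansion to the matrix setting, using the eigen-equation for $\hat\Mb_1$ and the consistency results already established in Theorem \ref{thm3}. Starting from $\hat\Mb_1\hat\Rb=\hat\Rb\hat\Vb_1$ with the normalization $\hat\Rb^\top\hat\Rb/p_1=\Ib_{k_1}$, I would read off the $i$-th row as
\[
\hat\bR_i=\hat\Vb_1^{-1}\cdot\frac{1}{Tp_1p_2}\sum_{t=1}^{T}\hat\Rb^\top\Xb_t\bx_{t,i\cdot}^\top,
\]
substitute $\Xb_t=\Rb\Fb_t\Cb^\top+\Eb_t$ and $\bx_{t,i\cdot}^\top=\Cb\Fb_t^\top\bR_i+\be_{t,i\cdot}^\top$, and set
\[
\hat\Hb_1=\Bigl(\frac{1}{T}\sum_{t=1}^{T}\Fb_t\frac{\Cb^\top\Cb}{p_2}\Fb_t^\top\Bigr)\frac{\Rb^\top\hat\Rb}{p_1}\hat\Vb_1^{-1},
\]
so that the signal-signal piece of the expansion coincides with $\hat\Hb_1^\top\bR_i$. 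This leaves the residual $\hat\bR_i-\hat\Hb_1^\top\bR_i=\hat\Vb_1^{-1}(\mathrm{I}+\mathrm{II}+\mathrm{III})$, where $\mathrm{I}$ and $\mathrm{II}$ are the two signal-noise cross products and $\mathrm{III}$ is the noise-noise term, each projected by $\hat\Rb^\top$.

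Next I would isolate $\mathrm{II}=\frac{\hat\Rb^\top\Rb}{p_1}\cdot\frac{1}{Tp_2}\sum_{t=1}^{T}\Fb_t\Cb^\top\be_{t,i\cdot}^\top$ as the dominant stochastic piece and invoke Assumption F together with Slutsky. For this to give the stated sandwich covariance I need $\hat\Vb_1\stackrel{p}{\to}\bLambda_1$ and $\hat\Rb^\top\Rb/p_1\stackrel{p}{\to}\bGamma_1^\top$, which I would establish by noting that the probability limit of $\hat\Mb_1$ is $p_1^{-1}\Rb\bSigma_1\Rb^\top$ (using Assumptions B, C and the error bounds in the proof of Theorem \ref{thm3}); its non-zero eigen-system is exactly $(\bLambda_1,\Qb_1\bGamma_1)$ with $\Qb_1=\Rb/\sqrt{p_1}$, and a Davis–Kahan $\sin\Theta$ argument then pins down both limits. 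Combining yields $\hat\Vb_1^{-1}\frac{\hat\Rb^\top\Rb}{p_1}\frac{1}{\sqrt{Tp_2}}\sum_t\Fb_t\Cb^\top\be_{t,i\cdot}^\top\stackrel{d}{\to}\bLambda_1^{-1}\bGamma_1^\top\mathcal N(\mathbf 0,\Vb_{1i})$, giving the claimed limit for the first case.

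The bulk of the work is showing the remaining pieces are $o_p((Tp_2)^{-1/2})$ precisely under $Tp_2=o(p_1^2)$, and are $O_p(p_1^{-1})$ otherwise. I would split $\mathrm{I}=(Tp_1p_2)^{-1}\sum_t\hat\Rb^\top\Eb_t\Cb\Fb_t^\top\bR_i$ through $\hat\Rb=(\hat\Rb-\Rb\hat\Hb_1)+\Rb\hat\Hb_1$: the $\hat\Hb_1$-part is handled by Assumption E.1 (it is $O_p((Tp_1p_2)^{-1/2})$ after appropriate rescaling), while the difference part is controlled by Cauchy–Schwarz combined with the convergence rate $w_1=p_1^{-2}+(Tp_2)^{-1}$ from Theorem \ref{thm3}. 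The decisive remainder is $\mathrm{III}=(Tp_1p_2)^{-1}\sum_t\hat\Rb^\top\Eb_t\be_{t,i\cdot}^\top$, whose $\Rb\hat\Hb_1$-part carries a deterministic bias of order $p_1^{-1}$ that survives Assumption D.2(2) and drives the rate in the second regime; multiplying by $\sqrt{Tp_2}$ produces $\sqrt{Tp_2}/p_1\to 0$ exactly under $Tp_2=o(p_1^2)$. For $\hat\bC_j$, the identical argument applied to $\Xb_t^\top$ (swapping the roles of $\Rb,\Cb$ and $p_1,p_2$) gives the second statement. The hardest step is the careful bookkeeping of $\mathrm{III}$ and of the $(\hat\Rb-\Rb\hat\Hb_1)^\top\Eb_t$ cross terms: one must combine the bound from Theorem \ref{thm3} with Assumptions D.3 and E.2 just tightly enough that no factor of $p_1$ is lost, since the transition between the normal regime and the $O_p(p_1^{-1})$ regime is dictated by precisely that term.
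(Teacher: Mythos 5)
Your proposal follows essentially the same route as the paper's proof: the same eigen-equation expansion with the same rotation matrix $\hat\Hb_1$, the same identification of $(Tp_2)^{-1}\sum_{t}\Fb_t\Cb^\top\be_{t,i\cdot}$ as the dominant term treated by Assumption F and Slutsky, and the same splitting $\hat\Rb=\Rb\hat\Hb_1+(\hat\Rb-\Rb\hat\Hb_1)$ combined with the Theorem \ref{thm3} rates and the moment bounds (Lemma-type consequences of Assumptions D and E) to show the signal-noise and noise-noise remainders are $O_p\big(1/\sqrt{Tp_1p_2}+1/(Tp_2)\big)$ and $O_p\big(1/p_1+1/(Tp_2)\big)$, so that the $p_1^{-1}$ bias term dictates the transition between the normal regime $Tp_2=o(p_1^2)$ and the $O_p(p_1^{-1})$ regime. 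The only minor difference is that you pin down $\hat\bLambda_1\stackrel{p}{\rightarrow}\bLambda_1$ and $\hat\Rb^\top\Rb/p_1\stackrel{p}{\rightarrow}\bGamma_1^\top$ via a Davis--Kahan argument on the limit $p_1^{-1}\Rb\bSigma_1\Rb^\top$, whereas the paper gets the same limits from Weyl's inequality plus an algebraic argument showing $\bGamma_1^\top\hat\Hb_1$ is asymptotically diagonal with $\pm1$ entries (fixed by a sign convention on the columns of $\hat\Rb$); both are valid.
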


Compared with Theorem \ref{thm2}, $\hat\Rb$ and $\tilde\Rb$ share the same asymptotic covariance matrix when $p_1$ is sufficiently large. However,  the normality of $\hat\Rb$ requires a more stringent condition that $p_1^2\gg Tp_2$, while for the projected estimator we only require $p_1^2\gg \max\{p_2/T, T/p_2\}$. A similar conclusion holds for $\hat\Cb$ and $\tilde\Cb$.

\subsection{Estimating  factor matrix and common components}
As long as the loading matrices are determined, the factor matrix $\Fb_t$ can be estimated easily by
\[
\tilde\Fb_t=\frac{1}{p_1p_2}\tilde\Rb^\top\Xb_t\tilde\Cb.
\]
The common component matrix is then given by
\[
\tilde{\bf S}_t=\tilde\Rb\tilde\Fb_t\tilde\Cb^\top.
\]
The next theorem provides the consistency of the estimated factors and common components.
\begin{theorem}\label{thm5}
	Under Assumptions A to E, as $\min\{T,p_1,p_2\}\rightarrow\infty$, for any $t\in [T],i\in[p_1]$ and $j\in [p_2]$,
	\[
	\begin{split}
	 (1).&\|\tilde\Fb_t-\tilde\Hb_1^{-1}\Fb_t\tilde\Hb_2^{-1}\|\le O_p\bigg(\frac{1}{\sqrt{T}\times \min\{p_1,p_2\}}+\frac{1}{\sqrt{p_1p_2}}\bigg),\\
	(2).&| \mathrm{\tilde S}_{t,ij}-\mathrm{S}_{t,ij}|=O_p\bigg(\frac{1}{\sqrt{Tp_1}}+\frac{1}{\sqrt{Tp_2}}+\frac{1}{\sqrt{p_1p_2}}\bigg).
	\end{split}
	\]
\end{theorem}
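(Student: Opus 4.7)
The plan is to substitute $\Xb_t = \Rb\Fb_t\Cb^\top + \Eb_t$ into $\tilde{\Fb}_t = (p_1 p_2)^{-1}\tilde{\Rb}^\top \Xb_t \tilde{\Cb}$ and split the resulting expression into a ``signal'' term $(p_1 p_2)^{-1}\tilde{\Rb}^\top \Rb\Fb_t\Cb^\top \tilde{\Cb}$ and a ``noise'' term $(p_1 p_2)^{-1}\tilde{\Rb}^\top \Eb_t \tilde{\Cb}$. In each I would insert $\tilde{\Rb} = \Rb\tilde{\Hb}_1 + (\tilde{\Rb} - \Rb\tilde{\Hb}_1)$ and $\tilde{\Cb} = \Cb\tilde{\Hb}_2 + (\tilde{\Cb} - \Cb\tilde{\Hb}_2)$, producing four cross-products per term. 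In the signal term, the pure piece $(p_1 p_2)^{-1}\tilde{\Hb}_1^\top \Rb^\top \Rb\Fb_t \Cb^\top \Cb\tilde{\Hb}_2$ reduces, via Assumption C ($\Rb^\top\Rb/p_1 \to \Ib_{k_1}$ and $\Cb^\top\Cb/p_2\to \Ib_{k_2}$) together with the normalization $\tilde{\Hb}_i^\top\tilde{\Hb}_i/p_i\to \Ib_{k_i}$ from Theorem \ref{thm1}, to the centering quantity $\tilde{\Hb}_1^{-1}\Fb_t\tilde{\Hb}_2^{-1}$ up to residuals absorbed into the stated rate. The three remaining signal pieces are handled by Cauchy--Schwarz, using $\|\tilde{\Rb}-\Rb\tilde{\Hb}_1\|_F = O_p(\sqrt{p_1\tilde{w}_1})$ and $\|\tilde{\Cb}-\Cb\tilde{\Hb}_2\|_F = O_p(\sqrt{p_2\tilde{m}_1})$ from Corollary \ref{cor1}, together with $\|\Rb\|\asymp\sqrt{p_1}$, $\|\Cb\|\asymp\sqrt{p_2}$ and $\|\Fb_t\| = O_p(1)$.

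The dominant contribution from the noise term is $(p_1 p_2)^{-1}\tilde{\Hb}_1^\top \Rb^\top \Eb_t\Cb\tilde{\Hb}_2$, which is $O_p(1/\sqrt{p_1 p_2})$: a second-moment calculation using Assumption D bounds $\mathbb{E}\|\Rb^\top\Eb_t\Cb\|_F^2$ by $O(p_1 p_2)$. The three mixed pieces, such as $(p_1 p_2)^{-1}(\tilde{\Rb}-\Rb\tilde{\Hb}_1)^\top \Eb_t\Cb\tilde{\Hb}_2$, cannot be dispatched by a crude Cauchy--Schwarz alone: a naive bound would leave an extra factor $\sqrt{\max\{p_1,p_2\}}$ coming from $\|\Eb_t\|$. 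Instead, I would reopen the eigen-equation expansion for $\tilde{\Rb}$ and $\tilde{\Cb}$ established in the proof of Theorem \ref{thm1}, which allows one to recycle precisely the inner-product sums already controlled through the sufficient conditions (\ref{c1}) and (\ref{c2}). Combining the signal and noise contributions yields the part (1) rate $O_p\bigl(1/(\sqrt{T}\min\{p_1,p_2\}) + 1/\sqrt{p_1 p_2}\bigr)$.

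For part (2), I would write the difference as a three-term telescoping decomposition
\[
\tilde{\mathrm S}_{t,ij}-\mathrm S_{t,ij} = (\tilde{\bR}_{i\cdot}-\bR_{i\cdot}\tilde{\Hb}_1)\tilde{\Fb}_t\tilde{\bC}_{j\cdot}^\top + \bR_{i\cdot}\tilde{\Hb}_1\bigl(\tilde{\Fb}_t - \tilde{\Hb}_1^{-1}\Fb_t\tilde{\Hb}_2^{-1}\bigr)\tilde{\bC}_{j\cdot}^\top + \bR_{i\cdot}\Fb_t\tilde{\Hb}_2^{-1}\bigl(\tilde{\bC}_{j\cdot}-\bC_{j\cdot}\tilde{\Hb}_2\bigr)^\top,
\]
plus lower-order terms absorbed by the same normalization identity used in part (1). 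Theorem \ref{thm2} provides $\|\tilde{\bR}_{i\cdot}-\bR_{i\cdot}\tilde{\Hb}_1\| = O_p(1/\sqrt{Tp_2})$ and $\|\tilde{\bC}_{j\cdot}-\bC_{j\cdot}\tilde{\Hb}_2\| = O_p(1/\sqrt{Tp_1})$ for the outer factors, part (1) supplies the bound for the middle factor, and $\|\bR_{i\cdot}\|,\|\bC_{j\cdot}\| = O(1)$ from Assumption C together with $\|\Fb_t\| = O_p(1)$ close the argument. The main obstacle throughout is controlling the noise-loading cross terms in part (1): a crude Cauchy--Schwarz is too loose, and one is forced to reuse the refined eigen-expansion estimates from the proofs of Theorems \ref{thm1} and \ref{thm2}.
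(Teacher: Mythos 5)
Your skeleton (the signal/noise split of $\tilde\Fb_t=(p_1p_2)^{-1}\tilde\Rb^\top\Xb_t\tilde\Cb$ and the telescoping decomposition for $\tilde{\mathrm S}_{t,ij}-\mathrm S_{t,ij}$) is the same as the paper's, and your part (2) is essentially the paper's argument, except that you should invoke the row-wise rates of Theorem \ref{thm1} rather than Theorem \ref{thm2}, since Theorem \ref{thm5} does not assume Assumption F. The gap is in part (1), and you have also misdiagnosed where the real difficulty lies. First, expanding the \emph{signal} term with $\tilde\Rb=\Rb\tilde\Hb_1+(\tilde\Rb-\Rb\tilde\Hb_1)$ leaves the ``pure'' piece $(p_1p_2)^{-1}\tilde\Hb_1^\top\Rb^\top\Rb\Fb_t\Cb^\top\Cb\tilde\Hb_2$, whose distance to the centering $\tilde\Hb_1^{-1}\Fb_t\tilde\Hb_2^{-1}$ involves $p_1^{-1}\Rb^\top\Rb-\Ib_{k_1}$, $p_2^{-1}\Cb^\top\Cb-\Ib_{k_2}$ and $\tilde\Hb_1^\top-\tilde\Hb_1^{-1}$. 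Assumption C gives only $o(1)$ for the first two, with no rate, so these residuals cannot be ``absorbed into the stated rate.'' The paper avoids this entirely: it substitutes $\Rb=\tilde\Rb\tilde\Hb_1^{-1}+(\Rb-\tilde\Rb\tilde\Hb_1^{-1})$ and $\Cb=\tilde\Cb\tilde\Hb_2^{-1}+(\Cb-\tilde\Cb\tilde\Hb_2^{-1})$ and uses the exact identities $\tilde\Rb^\top\tilde\Rb=p_1\Ib_{k_1}$, $\tilde\Cb^\top\tilde\Cb=p_2\Ib_{k_2}$, so the centering cancels exactly and only difference terms remain.

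Second, your claim that the remaining signal cross-pieces are ``handled by Cauchy--Schwarz'' with the Frobenius rates of Corollary \ref{cor1} fails: a term such as $p_1^{-1}(\tilde\Rb-\Rb\tilde\Hb_1)^\top\Rb\,\Fb_t$ is then only $O_p(\sqrt{\tilde w_1})$, which contains $(Tp_2)^{-1/2}$, and this exceeds the target $(\sqrt T\min\{p_1,p_2\})^{-1}+(p_1p_2)^{-1/2}$ whenever $p_1\gg T$ (e.g.\ $p_1\asymp p_2$ with $T=o(p_1)$). This is precisely where the paper ``reopens the eigen-equation'': Lemma \ref{leme1} shows $p_1^{-1}\Rb^\top(\tilde\Rb-\Rb\tilde\Hb_1)=O_p\big((Tp_1)^{-1}+(Tp_2)^{-1}+(p_1p_2)^{-1}+(Tp_1p_2)^{-1/2}\big)$ (and the analogue for $\Cb$), and that sharper loading--loading inner-product bound is what closes part (1). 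Conversely, the noise cross terms you single out as the obstacle are in fact fine by plain Cauchy--Schwarz: one never needs $\|\Eb_t\|$, because pairing $\Eb_t$ with $\Cb$ or $\Rb$ and using Assumption D gives $\mathbb{E}\|\Eb_t\Cb\|_F^2$, $\mathbb{E}\|\Rb^\top\Eb_t\|_F^2$, $\mathbb{E}\|\Rb^\top\Eb_t\Cb\|_F^2=O(p_1p_2)$, so e.g.\ $(p_1p_2)^{-1}\|(\tilde\Rb-\Rb\tilde\Hb_1)^\top\Eb_t\Cb\|=O_p(\sqrt{\tilde w_1/p_2})$, which is within the stated rate. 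So your instinct to recycle the eigen-expansion estimates is correct, but it must be aimed at $\Rb^\top(\tilde\Rb-\Rb\tilde\Hb_1)$ and $\Cb^\top(\tilde\Cb-\Cb\tilde\Hb_2)$, not at the noise terms; as written, your bound stalls at $(Tp_2)^{-1/2}+(Tp_1)^{-1/2}$, which is weaker than the theorem claims.
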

\begin{remark}
	The convergence rates in Theorem \ref{thm5} are  the same as those in \cite{Chen2020StatisticalIF} when $p_1\asymp p_2$, although the estimated loadings by the projection method are generally  more accurate. This reason is that the estimation error of $\tilde\Fb_t$ mainly comes from the error term $(p_1p_2)^{-1}\Rb^\top\Eb_t\Cb$. Even if the loadings $\Rb$ and $\Cb$ are known, the best convergence rate for estimating $\Fb_t$ is still of the rate $(p_1p_2)^{-1/2}$ under the spectral norm. This error further affects the estimation of the common components.
One can easily verify the asymptotic normality of $\tilde\Fb_t$ by imposing certain conditions on $(p_1p_2)^{-1}\Rb^\top\Eb_t\Cb$.

\end{remark}

\subsection{Determining the pair of row and column factor numbers $k_1,k_2$}
The dimensions $k_1$ and $k_2$ of the common factor matrix need to be determined before  the  procedures can be applied. In this study, we specify the numbers of row and column  factors by borrowing the eigenvalue-ratio statistics discussed in \cite{lam2012factor} and \cite{ahn2013eigenvalue}. In detail,  $\hat\Rb$ and $\hat\Cb$ are selected as the initial projection matrices, and then $k_1$ is estimated by
\begin{equation}\label{equ2}
\hat k_1=\arg\max_{j\le k_{\max}}\frac{\lambda_j(\tilde\Mb_1)}{\lambda_{j+1}(\tilde\Mb_1)},
\end{equation}
where $k_{\max}$ is a predetermined upper bound for $k_1$. \cite{Chen2020StatisticalIF} proposed a similar criterion using $\hat\Mb_1$. We use $\tilde\Mb_1$ rather than $\hat\Mb_1$ because $\tilde\Mb_1$ is usually more accurate  for approximating the column covariance matrix of the common components.

When the common factors are sufficiently strong, the leading $k_1$ eigenvalues of $\tilde\Mb_1$ are well separated from the others. Thus, the eigenvalue ratios in equation (\ref{equ2}) are asymptotically maximized exactly at $j=k_1$. To avoid vanishing denominators, we can add an asymptotically negligible term, such as $c\delta$ for some small constant $c$ and $\delta= \max\{1/\sqrt{Tp_2},1/\sqrt{Tp_1},p_1^{-1}\}$, to the denominator of equation (\ref{equ2}).  One problem to calculate $\tilde\Mb_1$ is that $\hat\Cb$ must be predetermined, which means $k_2$ must be given first. Empirically, $k_1$ and $k_2$ are both unknown. To address this difficulty, we suggest using the following iterative Algorithm \ref{alg2} to determine the paired numbers of factors.

\begin{algorithm}[H]
	\caption{Iterative algorithm to specify numbers of factors}\label{alg2}
	{\bf Input:} Data matrices $\{\Xb_t\}_{t\le T}$, maximum number $k_{\max}$, maximum iterative step $m$\\
	{\bf Output:} Numbers of row and column factors $\hat k_1$ and $\hat k_2$
	\begin{algorithmic}[1]
		\State  initialization: $\hat k_1^{(0)}=k_{\max},\hat k_2^{(0)}=k_{\max}$;
		
		\State for $t=1,\ldots,m$, given $\hat k_2^{(t-1)}$, estimate $\hat\Cb^{(t)}$ by the initial estimator, and calculate $\tilde\Mb_1^{(t)}$ using $\hat\Cb^{(t)}$,  then $\hat k_1^{(t)}$ is given by equation (\ref{equ2});
		
		\State given  $\hat k_1^{(t)}$, estimate $\hat\Rb^{(t)}$ by the initial estimator, and calculate $\tilde\Mb_2^{(t)}$ using $\hat\Rb^{(t)}$,  then $\hat k_2^{(t)}$ is given by a parallel ``ER" approach by replacing $\tilde\Mb_1$ with $\tilde \Mb_2^{(t)}$ in equation (\ref{equ2});
		
		\State repeat Steps 2 and 3 until $\hat k_1^{(t)}=\hat k_1^{(t-1)}$ and $\hat k_2^{(t)}=\hat k_2^{(t-1)}$, or reach the maximum iterative step.
	\end{algorithmic}
\end{algorithm}
\begin{remark}
	The term $c\delta$ only works as a lower bound of the denominator in our technical proofs. As one reviewer pointed out, adding such a term may affect the finite sample performance. We  compared the empirical performances of the iterative algorithm with $c=0$, $c=10^{-4}$ and $c=1$ in the simulation study. The numerical results are not much sensitive to the
term $c\delta$. In practice, we suggest setting $c$ sufficiently small in case of underestimation.
	\end{remark}

The consistency of the iterative algorithm is guaranteed by the following theorem.
\begin{theorem}[Specifying the numbers of row and column factors]\label{thm6}
	Under Assumptions A to E,  when $\min\{k_1,k_2\}>0$, $\min\{T,p_1,p_2\}\rightarrow\infty$ and $ k_{\max}$ is a predetermined constant no smaller than $\max\{k_1,k_2\}$, if $\hat k_2^{(t-1)}\in[k_2,k_{\max}]$ for some $t$ in the iterative algorithm \ref{alg2},
	\[
	\text{Pr}(\hat k_1^{(t)}=k_1)\rightarrow 1;
	\]
	and if $\hat k_1^{(t)}\in[k_1,k_{\max}]$ for some $t$ in  Algorithm \ref{alg2}, then
	\[
	\text{Pr}(\hat k_2^{(t)}=k_2)\rightarrow 1.
	\]
\end{theorem}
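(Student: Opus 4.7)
The plan is to show that, under the given hypothesis, the eigenvalues of $\tilde{\Mb}_1^{(t)}$ exhibit a sharp drop between rank $k_1$ and $k_1+1$, so the eigenvalue ratio is maximized at $j=k_1$ with probability tending to one. The symmetric argument handles $\hat k_2^{(t)}$, so I focus on the first assertion. Throughout I treat $\hat k_2^{(t-1)}\in[k_2,k_{\max}]$ as fixed, and let $\hat{\Cb}^{(t)}$ denote the resulting $p_2\times\hat k_2^{(t-1)}$ initial estimator built from the leading eigenvectors of the scaled column sample covariance matrix in Section~\ref{sec2.3}.

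First I would establish an ``over-specified'' analogue of the sufficient condition \eqref{c2}. When the chosen dimension $\hat k_2^{(t-1)}$ exceeds $k_2$, a careful re-reading of the argument behind Theorem~\ref{thm3} shows that the first $k_2$ columns of $\hat{\Cb}^{(t)}$ still approximate $\Cb$ (up to rotation) at rate $m_1,m_2$ in \eqref{equ3.3}, while the remaining $\hat k_2^{(t-1)}-k_2$ columns lie asymptotically in the orthogonal complement of the column space of $\Cb$. Hence one can write $\hat{\Cb}^{(t)}=[\Cb\hat{\Hb}_2+\Delta_1,\ \Cb_\perp\Ob+\Delta_2]$ where $\Cb_\perp$ is a $p_2\times(\hat k_2^{(t-1)}-k_2)$ matrix orthogonal to $\Cb$, $\Ob$ has orthonormal columns, and $p_2^{-1}\|\Delta_i\|_F^2=O_p(m_1)$. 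Substituting into $\hat{\Yb}_t=p_2^{-1}\Xb_t\hat{\Cb}^{(t)}$ and using Assumption~D, the extra block $p_2^{-1}\Xb_t(\Cb_\perp\Ob+\Delta_2)=p_2^{-1}\Eb_t\Cb_\perp\Ob+o_p(1)$ is purely noise of order $O_p(p_2^{-1/2})$, because $\Rb\Fb_t\Cb^\top\Cb_\perp=0$.

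Next I would analyze the eigenvalues of $\tilde{\Mb}_1^{(t)}=(Tp_1)^{-1}\sum_t\hat{\Yb}_t\hat{\Yb}_t^\top$ by splitting it as $\tilde{\Mb}_1^{(t)}=\Ab+\Bb$, where $\Ab$ is the contribution of the first $k_2$ columns of $\hat{\Yb}_t$ (analyzed exactly as in \eqref{M1p}) and $\Bb$ is the contribution of the extra noise columns. The matrix $\Ab$ has the same leading structure as in the proof of Theorem~\ref{thm1}: its first $k_1$ eigenvalues converge in probability to bounded positive limits (determined by the spectrum of $\bLambda_1$), and its remaining eigenvalues are $O_p(\tilde w_1)$. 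The matrix $\Bb$ has operator norm $O_p(p_2^{-1})$ by Assumption~D.2–D.3 and the structure of $\Cb_\perp$. Hence by Weyl's inequality, $\lambda_j(\tilde{\Mb}_1^{(t)})=\lambda_j(\Ab)+O_p(p_2^{-1})$ for every $j\le k_{\max}+1$, so
\[
\lambda_j(\tilde{\Mb}_1^{(t)})\asymp 1\ \text{for }j\le k_1,\qquad \lambda_j(\tilde{\Mb}_1^{(t)})=O_p(\tilde w_1+p_2^{-1})\ \text{for }k_1<j\le k_{\max}+1.
\]

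With the regularized denominator $\lambda_{j+1}(\tilde{\Mb}_1^{(t)})+c\delta$, the ratio at $j=k_1$ is bounded below by $c_0/[O_p(\tilde w_1+p_2^{-1})+c\delta]\to\infty$ in probability, since $\tilde w_1+p_2^{-1}+\delta=o(1)$. For $j<k_1$ both numerator and denominator are $\asymp 1$, so the ratio is $O_p(1)$; for $j>k_1$ both are $O_p(\tilde w_1+p_2^{-1}+\delta)$, so the ratio is also $O_p(1)$ provided the denominator does not degenerate, which is guaranteed by the additive $c\delta$. Consequently $\mathrm{Pr}(\hat k_1^{(t)}=k_1)\to 1$. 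Running the same argument with rows and columns exchanged (using hypothesis $\hat k_1^{(t)}\in[k_1,k_{\max}]$ and a symmetric decomposition of $\hat{\Rb}^{(t)}$) yields the result for $\hat k_2^{(t)}$. The main obstacle is the first step: controlling the ``noise block'' when $\hat k_2^{(t-1)}>k_2$, where we must show that these spurious eigenvector directions of the initial column-covariance matrix are asymptotically confined to the orthogonal complement of $\Cb$ and cannot create a second gap in the spectrum of $\tilde{\Mb}_1^{(t)}$ that would lead the eigenvalue-ratio criterion astray.
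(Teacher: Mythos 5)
Your endgame (Weyl's inequality plus the eigenvalue-ratio comparison against the $c\delta$ floor) coincides with the paper's, but the core step---controlling the over-specified projection matrix---is precisely where your proposal has a genuine gap, and it is the step you yourself flag as ``the main obstacle.'' Your plan is to decompose the extra columns of $\hat\Cb^{(t)}$ as a piece in the orthogonal complement of $\Cb$ plus an $O_p(m_1)$ remainder, and then to treat the extra block of $\hat\Yb_t$ as pure filtered noise, concluding that its contribution to $\tilde\Mb_1^{(t)}$ has spectral norm $O_p(p_2^{-1})$. The trouble is that these spurious eigenvector directions are random and correlated with the whole noise array $\{\Eb_t\}$, so bounds of the type $\mathbb{E}\|\Eb_t\Cb\|_F^2=O(p_1p_2)$, which rest on a \emph{fixed} loading matrix with bounded entries (Assumptions C and D), cannot simply be re-invoked with $\Cb$ replaced by data-dependent directions; one would need a new concentration argument, uniform over directions or exploiting the eigenvector structure, and none is supplied. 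Moreover, the rates you assert are not correct as stated: even with a correctly specified projection, the junk eigenvalues of the projected matrix are of order $1/\sqrt{Tp_2}+1/(Tp_1)+1/(p_1p_2)$ (roughly the square root of the $\tilde w_1$-type quantities), not $O_p(\tilde w_1)$, and the $O_p(p_2^{-1})$ claim for the extra block is unsubstantiated. This particular misstatement would not by itself destroy consistency, since the $c\delta$ floor absorbs it, but it shows the key bounds were asserted rather than derived.

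The paper closes exactly this gap with a short device worth noting. Working in the mirror direction (estimating $k_2$ when $\hat k_1^{(t)}=k_1+1$, WLOG), it writes $\hat\Rb^{(t)}=(\hat\Rb,\hat\bgamma)$ with $\hat\bgamma/\sqrt{p_1}$ the $(k_1+1)$-th eigenvector of $\hat\Mb_1$, so that $\tilde\Mb_2^{(t)}=\tilde\Mb_2+(Tp_1^2p_2)^{-1}\sum_{t}\Xb_t^\top\hat\bgamma\hat\bgamma^\top\Xb_t$. The extra term is positive semidefinite, so its spectral norm is at most its trace, and the trace equals $p_1^{-1}\hat\bgamma^\top\hat\Mb_1\hat\bgamma=\lambda_{k_1+1}(\hat\Mb_1)=O_p(1/\sqrt{Tp_2}+1/p_1)$ by the lemma already proved for the spectrum of $\hat\Mb_1$. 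Because the spurious directions are eigenvectors of the very matrix used to build the projection, their entire contribution is controlled by the corresponding small eigenvalues---no analysis of where $\hat\bgamma$ sits relative to the loading space and no new noise bounds are required. With that bound in hand, Weyl's inequality and the ratio comparison proceed exactly as in your final paragraph; if you want to salvage your route, replacing your orthogonal-complement decomposition by this trace-of-an-eigenvector argument is the fix.
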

Theorem \ref{thm6} indicates that as long as we start with some $k_1^{(0)}$ and $k_2^{(0)}$ larger than the true $k_1$ and $k_2$, the iterative algorithm can consistently estimate the numbers of factors. The algorithm is computationally very fast because it has a large probability to stop within finite steps.

The finite sample performance of the eigenvalue-ratio method usually depends on the maximized ratio at the true number of factors. A larger ratio implies a better separation of the spiked eigenvalues, which leads to better estimation of the number of factors. The maximized ratio of the above algorithm is shown to be $\min\{\sqrt{Tp_2},\sqrt{Tp_1},p_1\}$ for $k_1$ and $\min\{\sqrt{Tp_1},\sqrt{Tp_2},p_2\}$ for $k_2$ in the proof. However, if we vectorize the data matrices and apply the eigenvalue-ratio approach in \cite{ahn2013eigenvalue}, the maximized ratio for the total number of factors will be of the rate $\min\{\sqrt{T}, p_1p_2\}$. Therefore, the eigenvalue-ratio method for the vectorized model may perform better when $T$ is large but $p_1$ or $p_2$ is small, which is conceivable as we have a pair of factor numbers to be estimated. The estimation of $k_2$ brings new errors to $\hat k_1$ in the iterative algorithm. Actually, we show in the proof that when $k_2$ is given, the maximized eigenvalue-ratio for estimating $k_1$ is of the order $\min\{\sqrt{Tp_2}, Tp_1, p_1p_2\}$, which is even better than that of the vectorized model. As $\hat k_1^{(t)}$ or $\hat k_2^{(t)}$ has large probability to be exactly $k_1$ or $k_2$ after a few iterations,  the iterative algorithm performs impressively well empirically.

\section{Simulation studies}\label{sec4}
\subsection{Data generation}

In this section, we investigate the finite sample performances of the proposed projection procedure. The observed data matrices are generated according to model (\ref{mod1.1}). In detail, we set $k_1=k_2=3$, draw the entries of $\Rb$ and $\Cb$ independently from uniform distribution $\mathcal{U}(-1,1)$, and let
\begin{equation}\label{equ4.1}
\begin{split}
\text{Vec}(\Fb_t)=&\phi \times \text{Vec}(\Fb_{t-1})+\sqrt{1-\phi^2}\times\bepsilon_t,\quad \bepsilon_t\overset{i.i.d.}{\sim}\mathcal{N}({\bf 0},\Ib_{k_1\times k_2}),\\
\text{Vec}(\Eb_t)=&\psi \times \text{Vec}(\Eb_{t-1})+\sqrt{1-\psi^2}\times\text{Vec}(\Ub_t),\quad \Ub_t\overset{i.i.d.}{\sim}\mathcal{MN}({\bf 0},\Ub_E,\Vb_E),
\end{split}
\end{equation}
where $\Ub_t$ is from matrix-normal distribution, i.e., $\text{Vec}(\Ub_t)\overset{i.i.d.}{\sim}\mathcal{N}({\bf 0},\Vb_E\otimes \Ub_E)$.
$\Ub_E$ and $\Vb_E$ are matrices with ones on the diagonal, while the off-diagonal are $1/p_1$ and $1/p_2$, respectively. Thus, by setting $\phi$ and $\psi$ unequal to zero, the simulated factors are temporally correlated, and the idiosyncratic noises contain both temporal and cross-sectional  correlations.  The pair  of factor numbers is assumed to be known except in subsection \ref{sec:4.5}, where we investigate the empirical performances of  Algorithm 2 to estimate the numbers of factors.  All the simulation results hereafter are based on 500 replications if not specifically mentioned.

\subsection{Verifying the convergence rates for loading spaces}
We first  compare the performances of our \textbf{P}rojected \textbf{E}stimation (PE) method with those of the $\alpha$-PCA  method by \cite{Chen2020StatisticalIF} in terms of estimating the loadings. We consider two settings, where Setting A is for estimating the row factor loading matrix $\Rb$ while Setting B is designed for estimating the column loading matrix $\Cb$.

\vspace{0.5em}

\noindent\textbf{Setting A}: $p_1=20$, $T=p_2\in\{20,50,100,150,200\}$, $\phi=\psi=0.1$.

\vspace{0.5em}

\noindent\textbf{Setting B}: $p_2=20$, $T=p_1\in\{20,50,100,150,200\}$, $\phi=\psi=0.1$.

\vspace{0.5em}

In view of  identifiability, we evaluate the performances by the distance between the estimated loading space and true loading space.  That is,
\[
\mathcal{D}(\hat\Rb,\Rb)=\bigg(1-\frac{1}{k_1}\text{tr}(\hat\Qb{\hat\Qb}^\top\Qb{\Qb}^\top)\bigg)^{1/2},
\]
where $\Qb$ and $\hat\Qb$ are the left singular-vector matrices of the true loading $\Rb$ and  its estimator $\hat\Rb$, respectively. $\mathcal{D}(\hat\Cb,\Cb)$ is defined similarly. Here, we abuse the notations with $\hat\Rb$ and $\hat\Cb$ but it shall cause no misunderstanding. The distance $\mathcal{D}(\hat\Rb,\Rb)$ is always between 0 and 1. When the corresponding matrices lie in the same space, they are equal to 0. If the two spaces are orthogonal, then  they are equal to 1. Once Assumptions A--E are satisfied, the squared distances would converge to 0 with the same rates as in Corollary \ref{cor1}. Thus, $\mathcal{D}(\hat\Rb,\Rb)$ and $\mathcal{D}(\hat\Cb,\Cb)$ are particularly suitable to quantify the estimation accuracy of the loading matrices.

Table \ref{tab1} shows the  averaged estimation errors with standard errors in parentheses under Settings A and B. We take $\alpha=-1,0,1$ for the $\alpha$-PCA as in \cite{Chen2020StatisticalIF}. All the methods benefit from large dimensions, and PE always shows lowest estimation errors and standard errors.
Figure \ref{fig2} plots the averaged log errors of the PE and $\alpha$-PCA with $\alpha=0$, which reflects the different convergence rates of the estimators by PE and $\alpha$-PCA . The left plane shows that the log error of the PE method for estimating $\Rb$ is almost linear to $\log (\sqrt{Tp_2})$ with slope $-1$, which matches the rate in Corollary \ref{cor1}. However, for the $\alpha$-PCA method, the log error first decreases with growing $\log(\sqrt{Tp_2})$ but later tends to be invariant. This result is conceivable as the convergence rate of $\hat\Rb$ by $\alpha$-PCA mainly depends on $p_1$ when  $T$ and $p_2$ are sufficiently large. A similar conclusion can be drawn for the column factor loading matrix $\Cb$ from the right panel of Figure \ref{fig2}. We conclude that the projected method leads to more accurate estimation of the loading spaces compared with $\alpha$-PCA  and the numerical results verify the convergence rates in Corollary \ref{cor1}.

\begin{table*}[hbtp]
	\begin{center}
		\small
		\addtolength{\tabcolsep}{2pt}
		\caption{Averaged estimation errors and standard errors (in parentheses) of $\mathcal{D}(\hat\Rb,\Rb)$ and $\mathcal{D}(\hat\Cb,\Cb)$ for Settings A and B (effects of $T,p_1,p_2$), over 500 replications. ``PE'':  proposed projected method. ``($a$)PCA'': $\alpha$-PCA  with $\alpha=a$.}\label{tab1}
		 \renewcommand{\arraystretch}{1.2}
		\scalebox{0.9}{ 	
				 \begin{tabular*}{18cm}{cccccccc}
				\toprule[1.2pt]
				 Evaluation&$T$&$p_1$&$p_2$&PE&(-1)PCA&(0)PCA&(1)PCA\\\midrule[1pt]
				 \multirow{5}*{$\mathcal{D}(\hat\Rb,\Rb)$}&20& \multirow{5}*{20}&20&{\bf0.0934(0.0154)}&0.1166(0.0293)&0.1138(0.0276)&0.1174(0.0279)
\\
				 &50&&50&{\bf0.0358(0.0052)}&0.0599(0.0208)&0.0595(0.0205)&0.0600(0.0203)
\\
				 &100&&100&{\bf0.0175(0.0026)}&0.0479(0.0199)&0.0478(0.0199)&0.0479(0.0199)
\\
				 &150&&150&{\bf0.0116(0.0016)}&0.0430(0.0185)&0.0430(0.0186)&0.0431(0.0186)
\\
				 &200&&200&{\bf0.0088(0.0012)}&0.0446(0.0236)&0.0445(0.0236)&0.0445(0.0236)\\\midrule[1pt]
				  \multirow{5}*{$\mathcal{D}(\hat\Cb,\Cb)$}&20&20& \multirow{5}*{20}&{\bf0.0928(0.0153)}&0.1153(0.0306)&0.1127(0.0299)&0.1162(0.0305)
\\
				 &50&50&&{\bf0.0359(0.0052)}&0.0598(0.0216)&0.0596(0.0218)&0.0603(0.0220)
\\
				 &100&100&&{\bf0.0173(0.0024)}&0.0460(0.0191)&0.0460(0.0191)&0.0461(0.0191)
\\
				 &150&150&&{\bf0.0117(0.0017)}&0.0423(0.0190)&0.0422(0.0189)&0.0422(0.0188)
\\
				 &200&200&&{\bf0.0087(0.0012)}&0.0437(0.0218)&0.0437(0.0218)&0.0437(0.0218)\\
\bottomrule[1.2pt]		
		\end{tabular*}}		
	\end{center}
\end{table*}

 \begin{figure}[hbpt]
	\begin{subfigure}{.5\textwidth}
		\centering
		 \includegraphics[width=7.5cm,height=7.5cm]{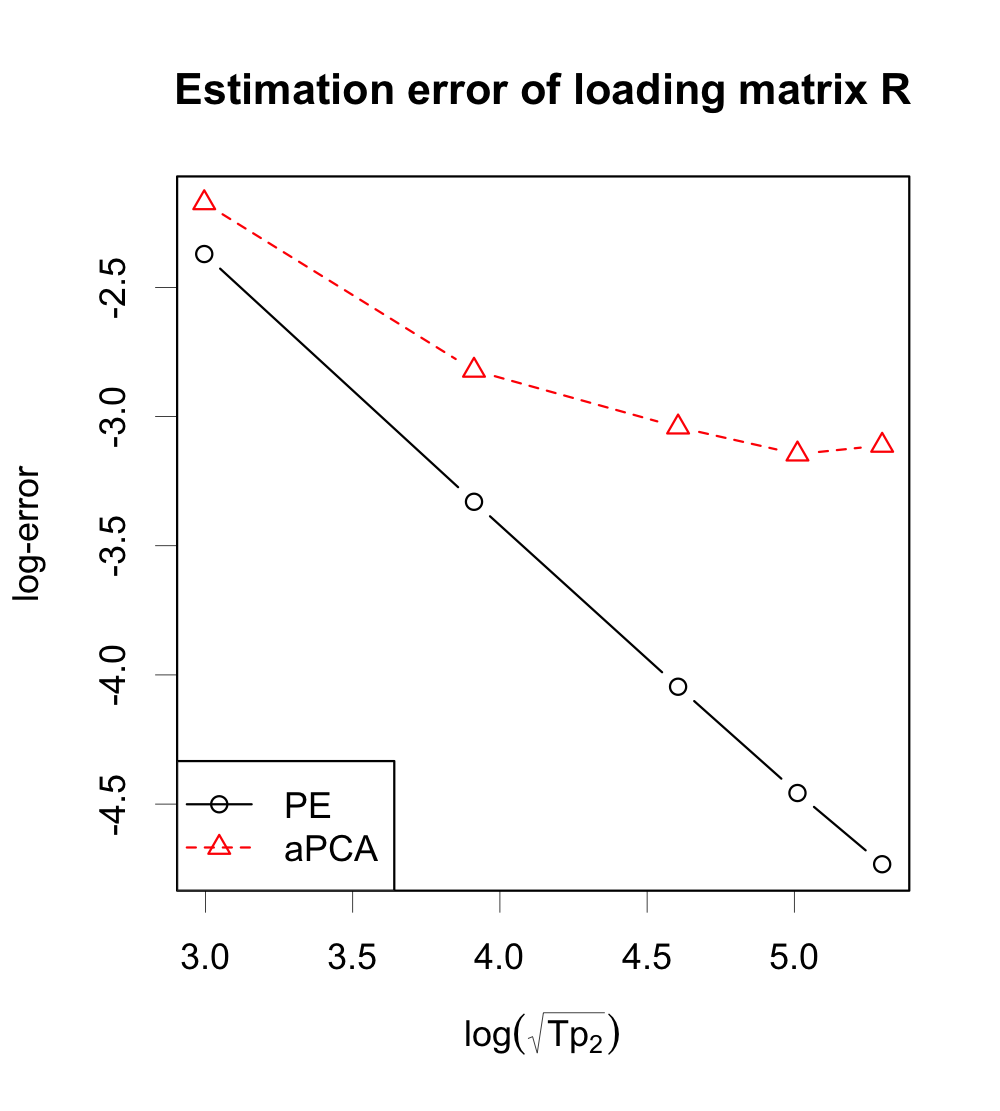}
	\end{subfigure}
	\begin{subfigure}{.5\textwidth}
		\centering
		 \includegraphics[width=7.5cm,height=7.5cm]{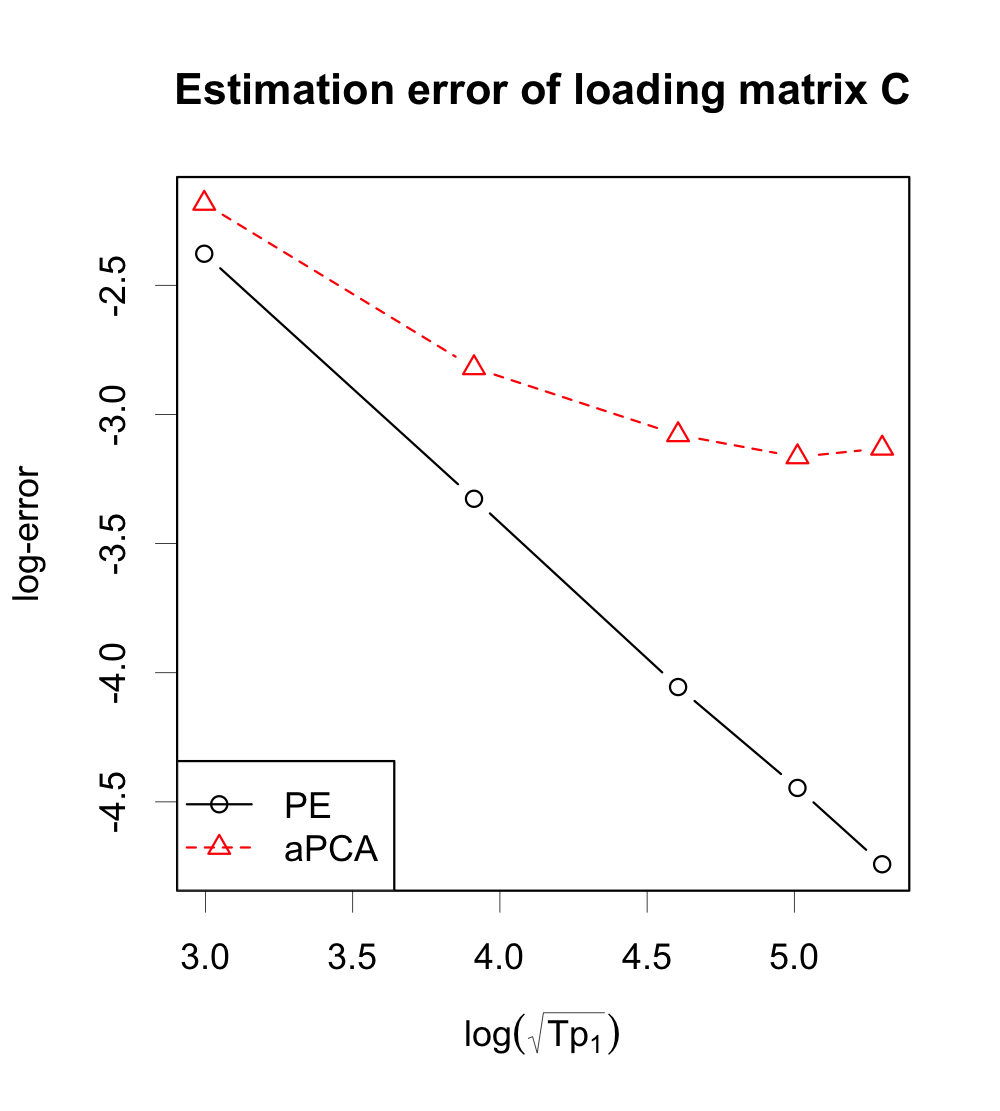}
	\end{subfigure}
	\caption{ Mean log error of estimating loading matrices, over 500 replications. Left: for $\Rb$, $p_1=20$, $T=p_2\in\{20,50,100,150,200\}$. Right: for $\Cb$, $p_2=20$, $T=p_1\in\{20,50,100,150,200\}$. ``PE'': the proposed projected method. ``aPCA'': $\alpha$-PCA  with $\alpha=0$.}\label{fig2}
\end{figure}

\subsection{Verifying the asymptotic normality}

In this section, we  check the asymptotic normality of $\tilde \Rb$ and verify the asymptotic variances in Theorem \ref{thm2} by numerical studies. For data generation, we normalize $\Rb$  as $\sqrt{p_1}$ times its left singular-vector matrix such that the identification condition $\Rb^\top\Rb/p_1=\Ib_{k_1}$ is satisfied. The column loading matrix $\Cb$ is normalized similarly.  Let
\[
\text{Vec}(\Fb_t)\overset{i.i.d.}{\sim}\mathcal{N}({\bf 0},\Db), \quad \text{with } \Db=\text{diag}(1.5,1,0.5,1.5,1,0.5,1.5,1,0.5),
\]
so that the eigenvalues of $\bSigma_1$ in Assumption B are distinct. The errors $\Eb_t$ are generated according to equation (\ref{equ4.1}) with $\psi=0$. Thus, $\Fb_t$ and $\Eb_t$ are both independent across time, which simplifies the calculation of the asymptotic covariance matrix. Actually, under the above setting, as $\min\{T,p_1,p_2\}\rightarrow \infty$, we have
\[
\bSigma_1=\left(\begin{matrix}
4.5&0&0\\
0&3&0\\
0&0&1.5
\end{matrix}\right), \quad \sqrt{Tp_2}(\tilde\bR_{i\cdot}-\bR_{i\cdot}^\top\tilde\Hb_1)\overset{d}{\rightarrow}\mathcal{N}({\bf 0},\bSigma_{\tilde\Rb}),\quad \text{where} \quad  \bSigma_{\tilde\Rb}:=\frac{\text{tr}(\Cb^\top\Vb_{E}\Cb)}{3p_2}\bSigma_1^{-1}.
\]We set $k_1=k_2=3$, $p_1=20$, and $T=p_2=200$.

Figure \ref{fig3} shows the histograms of the first coordinates of  $\sqrt{Tp_2}\bSigma_{\tilde\Rb}^{-1/2}(\tilde\bR_1-\tilde\Hb_1^\top\bR_1)$ and $\sqrt{Tp_2}\bSigma_{\tilde\Rb}^{-1/2}(\hat\bR_1-\hat\Hb_1^\top\bR_1)$ under the above setting with 500 replications. The asymptotic covariance matrices of the initial estimators and the projected estimators are the same theoretically, although the rotational matrices are not identical.  Panels (a) and (b) of Figure \ref{fig3} show that the projected estimator is almost normally distributed, but the estimator by $\alpha$-PCA deviates far from ``normal'' when $p_1=20$. This result is expected because in this case, the condition $Tp_2=o(p_1^2)$ in Theorem \ref{thm4} is not met, but the much looser condition for our projected estimators in Theorem \ref{thm2} is already satisfied. When we increase $p_1$ to 400, both estimators show ``normality" with the same covariance matrix, as demonstrated in panels (c) and (d) of Figure \ref{fig3}.

 \begin{figure}[hbpt]
	\begin{subfigure}{.24\textwidth}
		\centering
		 \includegraphics[width=3.5cm,height=3.5cm]{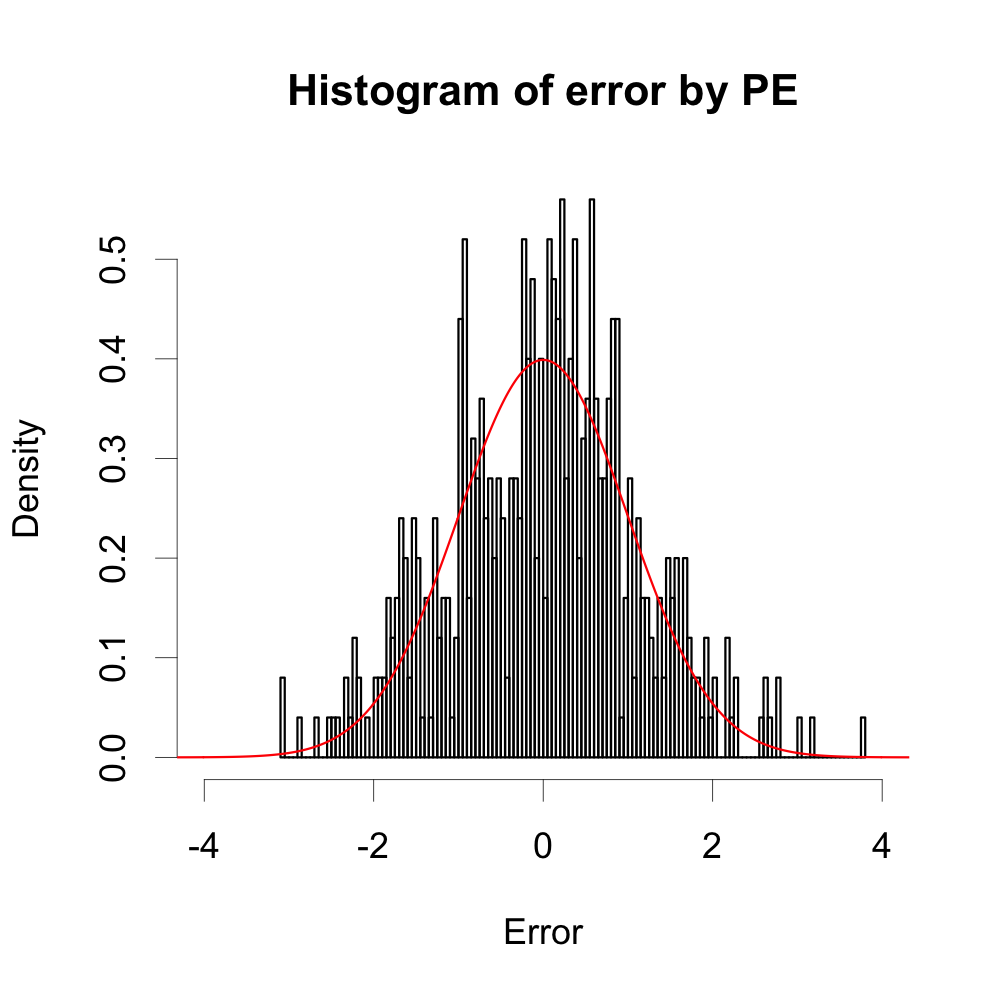}
		\subcaption{}
	\end{subfigure}
	\begin{subfigure}{.24\textwidth}
		\centering
		 \includegraphics[width=3.5cm,height=3.5cm]{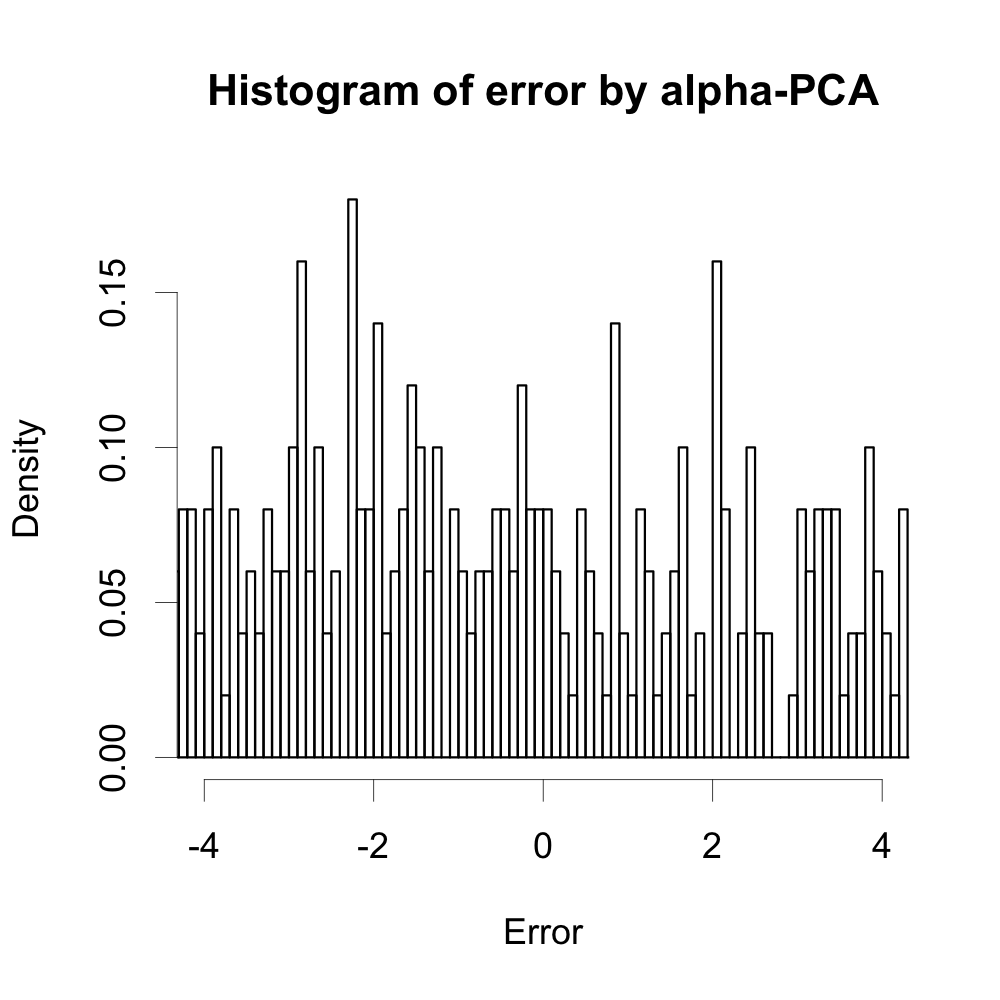}
				\subcaption{}
	\end{subfigure}
	\begin{subfigure}{.24\textwidth}
		\centering
		 \includegraphics[width=3.5cm,height=3.5cm]{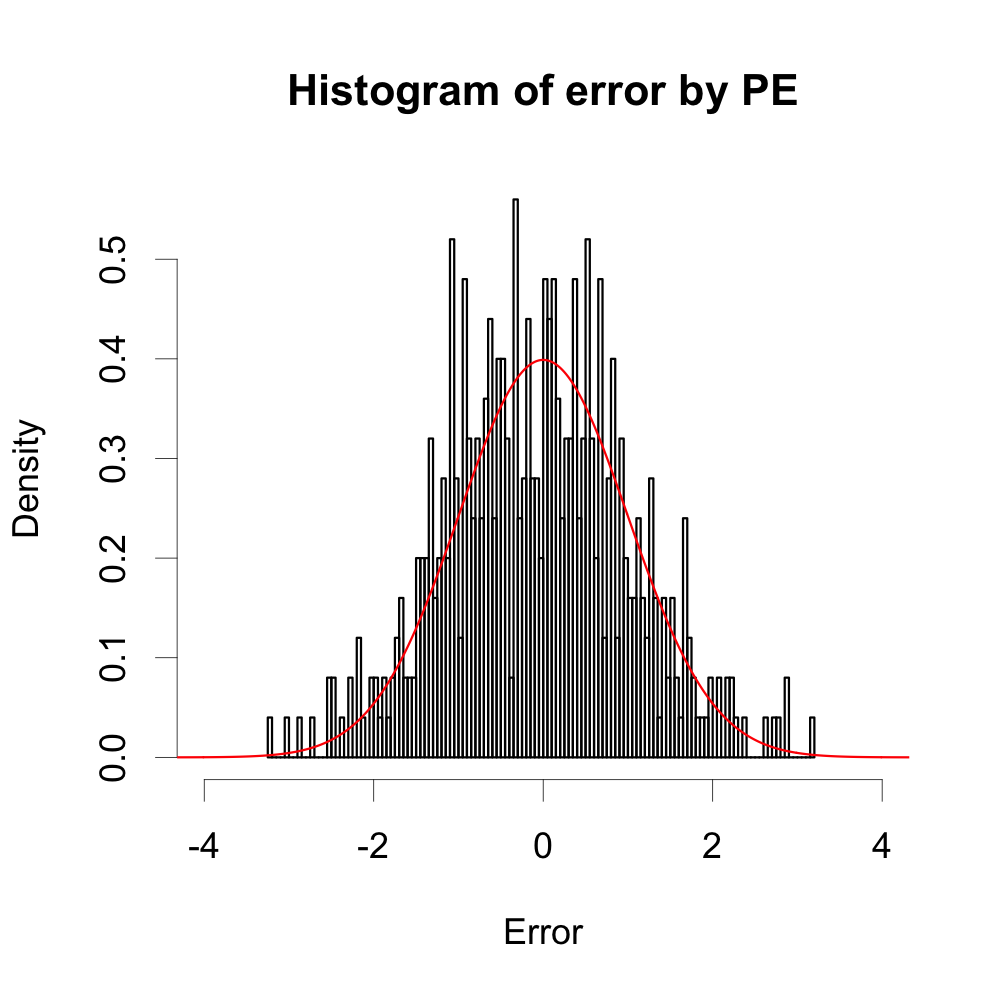}
				\subcaption{}
	\end{subfigure}
	\begin{subfigure}{.24\textwidth}
		\centering
		 \includegraphics[width=3.5cm,height=3.5cm]{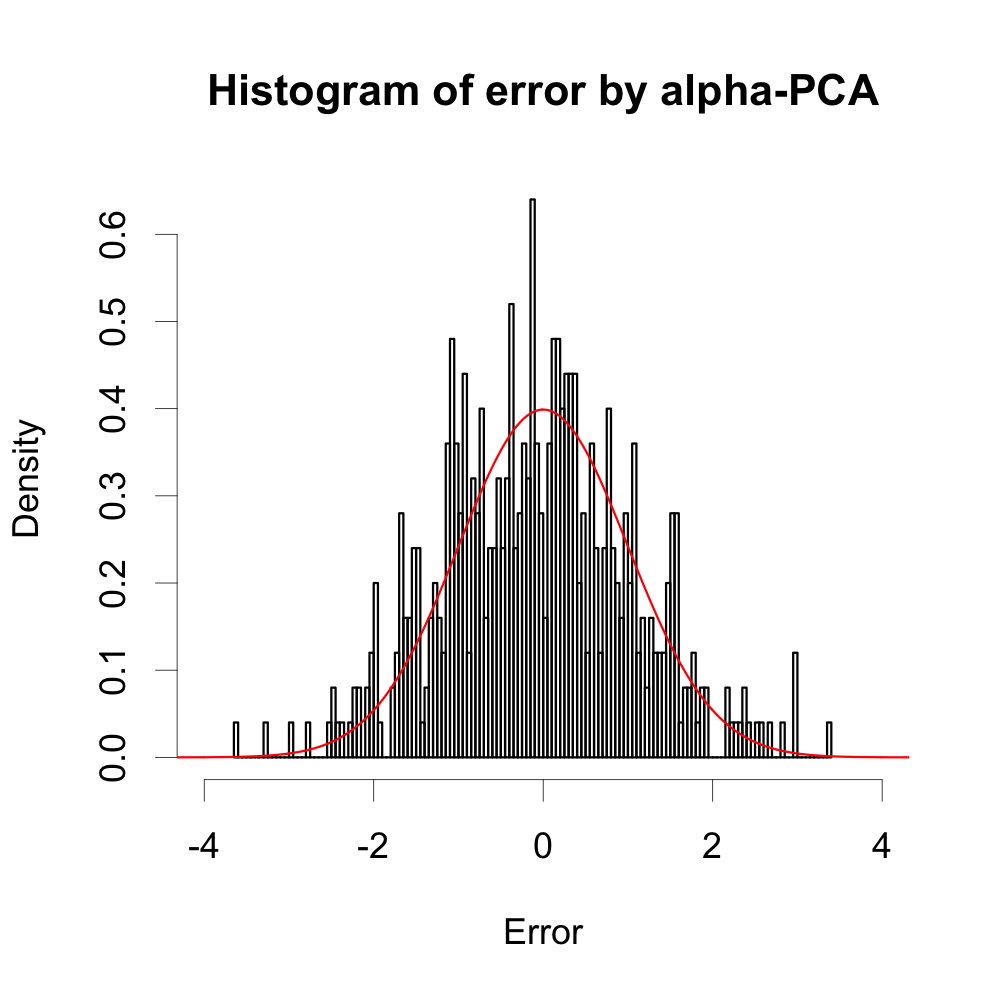}
				\subcaption{}
	\end{subfigure}
	\caption{Histograms of estimation error for $\mathrm{R}_{11}$ after normalization, over 500 replications. $T=p_2=200$. (a). PE, $p_1=20$. (b). $\alpha$-PCA with $\alpha=0$, $p_1=20$. (c). PE, $p_1=400$. (d). $\alpha$-PCA with $\alpha=0$, $p_1=400$. The red real line plots the probability density function of standard normal distribution.}\label{fig3}
\end{figure}

\subsection{Estimation error of common components}
In this subsection, we investigate the empirical performances of the PE and $\alpha$-PCA methods in terms of estimating the common components under Setting A. We evaluate the performance of different methods by the mean squared error, i.e.,
\[
\text{MSE}=\frac{1}{Tp_1p_2}\sum_{t=1}^T\|\hat{\bf S}_t-{\bf S}_t\|_F^2.
\]
We also investigate the effects when we under/overestimate the numbers of factors in this subsection.

Figure \ref{fig4} shows the boxplots of  the MSEs by PE and $\alpha$-PCA with $\alpha=0$ under Setting A over 500 replications. We do not show the results for $\alpha=\pm1$ because the choice of $\alpha$ has a minimal effect on the results under this setting. The left panel of  Figure \ref{fig4} shows the boxplots of MSEs for various $T$ by the PE and $\alpha$-PCA estimates with the true numbers of factors. Both methods perform better as the dimension $T=p_2$ grows, and the PE always leads to slightly lower MSEs and smaller deviations.  The right panel corresponds to the case when we overestimate/underestimate the numbers of factors with $T=p_2=200$. The effect of overestimation is negligible while  underestimating the numbers of factors would result in intolerable MSEs, which is consistent with the loadings for  vector factor models. Detailed numerical results are reported in Table \ref{tab2}.

\begin{figure}[hbpt]
	\begin{subfigure}{.5\textwidth}
		\centering
		 \includegraphics[width=7.5cm,height=7.5cm]{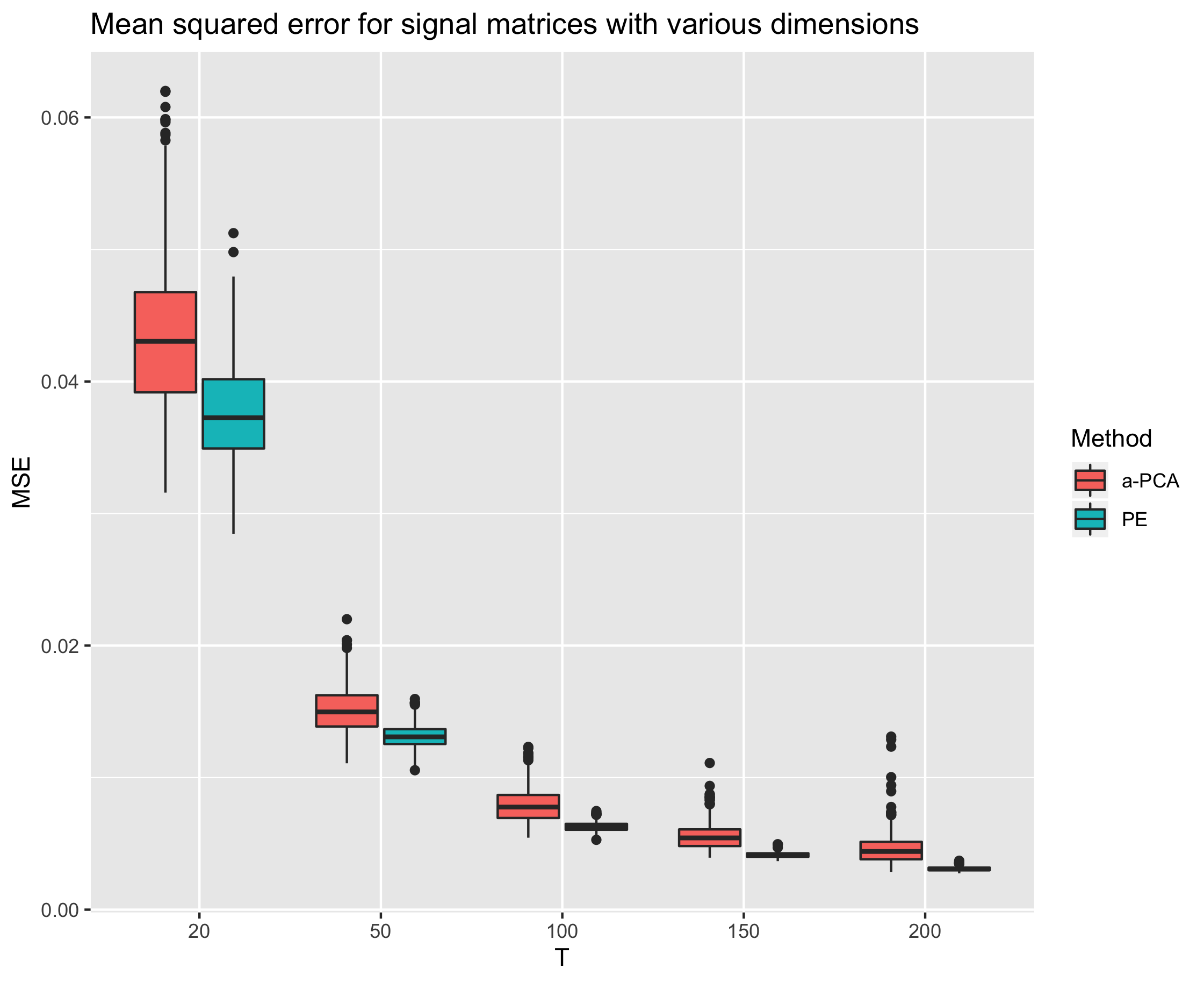}
	\end{subfigure}
	\begin{subfigure}{.5\textwidth}
		\centering
		 \includegraphics[width=7.5cm,height=7.5cm]{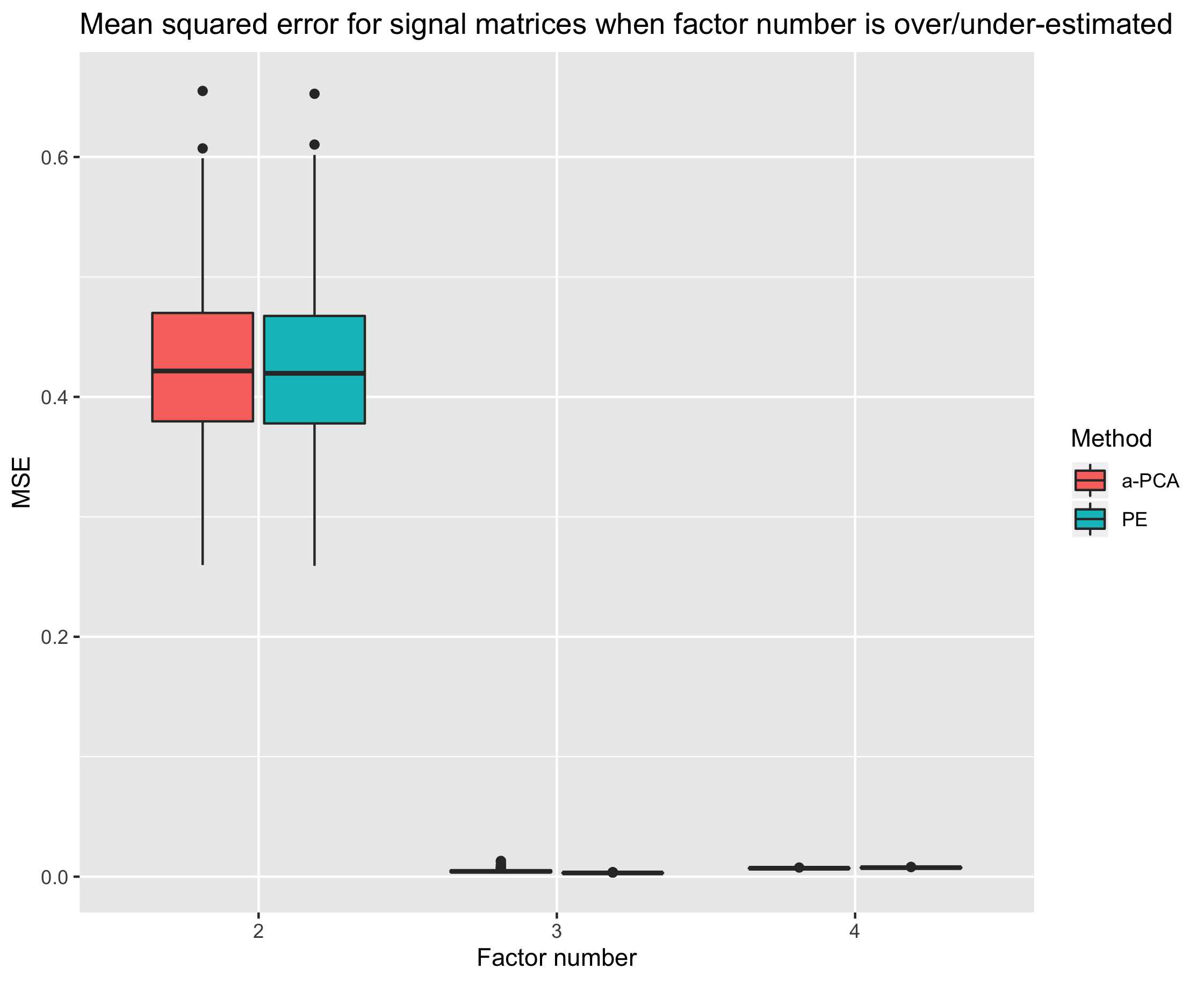}
	\end{subfigure}
	\caption{Boxplots of  MSEs of common components under Setting A over 500 replications. Left: The true numbers of factors are given while $T=p_2$ grows. Right: The effects if we use less ($\hat k_1=\hat k_2=2$) or more ($\hat k_1=\hat k_2=4$) factors in the estimation, $T=p_2=200$. ``PE'': proposed projected method. ``aPCA'': $\alpha$-PCA  with $\alpha=0$.}\label{fig4}
\end{figure}

\begin{table*}[hbtp]
	\begin{center}
		\small
		\addtolength{\tabcolsep}{0pt}
		\caption{Squared errors and standard errors (in parentheses) of MSEs for common components (effects of $T,p_1,p_2$ and under/overestimating the numbers of factors) over 500 replications. ``PE'': the projected method. ``(0)PCA'': $\alpha$-PCA  with $\alpha=0$. The true numbers of factors are $k_1=k_2=3$.  ``$0.000$'' means a very small deviation.}\label{tab2}
		 \renewcommand{\arraystretch}{1.2}
		\scalebox{0.9}{ 	
			 \begin{tabular*}{18.5cm}{ccccccccc}
				\toprule[1.2pt]
				 \multirow{2}*{$T$}&\multirow{2}*{$p_1$}&\multirow{2}*{$p_2$}&\multicolumn{2}{c}{$\hat k_1=\hat k_2=2$}&\multicolumn{2}{c}{$\hat k_1=\hat k_2=3$}&\multicolumn{2}{c}{$\hat k_1=\hat k_2=4$}\\\cline{4-9}
				 &&&PE&(0)PCA&PE&(0)PCA&PE&(0)PCA\\\midrule[1pt]
				20&20&20&{\bf 0.343(0.084)}&0.347(0.084)&{\bf 0.038(0.004)}&0.043(0.006)&0.080(0.007)&{\bf 0.077(0.006)}
\\
				50&20&50&{\bf 0.385(0.069)}&0.388(0.070)&{\bf 0.013(0.001)}&0.015(0.002)&0.031(0.002)&{\bf 0.029(0.001)}
\\
				100&20&100&{\bf 0.406(0.064)}&0.408(0.065)&{\bf 0.006(0.000)}&0.008(0.001)&0.015(0.001)&{\bf 0.014(0.001)}
\\
				150&20&150&{\bf 0.421(0.065)}&0.423(0.065)&{\bf 0.004(0.000)}&0.006(0.001)&0.010(0.000)&{\bf 0.009(0.000)}
\\
				200&20&200&{\bf 0.423(0.065)}&0.425(0.065)&{\bf 0.003(0.000)}&0.005(0.001)&0.008(0.000)&{\bf 0.007(0.000)}\\
				\bottomrule[1.2pt]		
		\end{tabular*}}		
	\end{center}
\end{table*}

\subsection{Estimating the numbers of factors}\label{sec:4.5}
As shown in Table \ref{tab2}, accurate specification of the numbers of factors is critical to the matrix factor model. In this subsection, we  compare the empirical performances of the Vectorized Eigenvalue-Ratio (VER) criterion in \cite{ahn2013eigenvalue}, $\alpha$-PCA based ER method ($\alpha$-PCA-ER) in \cite{Chen2020StatisticalIF},  and the proposed iterative method in Algorithm \ref{alg2} (IterER) in terms of estimating the numbers of factors.

Table \ref{tab3} presents the frequencies of exact estimation and underestimation over 500 replications under Setting A. For the proposed IterER method, we try $c=0,10^{-4},1$ in equation (\ref{equ2}). For the VER criterion in \cite{ahn2013eigenvalue}, we first vectorize the matrix observations and regard the true number of factors as $k_1\times k_2$. We set $k_{\max}=8$ for IterER and $\alpha$-PCA-ER while $k_{\max}=64$ for VER.  We find that the IterER has the highest accuracy and lowest underestimation risk even with small $T$, while the others only work when $T$ is large.  Under this setting, the constant $c$ and $\alpha$ seem to have a minimal effect on the results.


 \begin{table*}[hbtp]
 	\begin{center}
 		\small
 		\addtolength{\tabcolsep}{0pt}
 		\caption{The frequencies of exact estimation and underestimation (in parentheses) of the numbers of factors under Setting A over 500 replications. $p_1=20$, $T=p_2$. ``IterER($c$)'': the iterative algorithm with constant $c$ in the denominator. ``($a$)PCA-ER'': the $\alpha$-PCA-ER in \cite{Chen2020StatisticalIF} with $\alpha=a$. ``VER'': eigenvalue ratio estimation of the vectorized model.}\label{tab3}
 		 \renewcommand{\arraystretch}{1.2}
 		\scalebox{0.85}{ 	
 			 \begin{tabular*}{19.2cm}{cccccccc}
 				\toprule[1.2pt]
 			 $T$&IterER(0)&IterER($10^{-4}$)&IterER(1)&(-1)PCA-ER&(0)PCA-ER&(1)PCA-ER&VER\\\midrule[1pt]
 			20&{\bf 0.996(0.004)}&{\bf0.996(0.004)}&{\bf0.996(0.004)}&0.608(0.388)&0.630(0.368)&0.620(0.380)&0(1)
\\
 			 50&{\bf1(0)}&{\bf1(0)}&{\bf1(0)}&0.880(0.120)&0.884(0.116)&0.882(0.116)&0.766(0.234)
\\
 			 100&{\bf1(0)}&{\bf1(0)}&{\bf1(0)}&0.914(0.086)&0.912(0.088)&0.912(0.088)&0.994(0.006)\\
 			 150&{\bf1(0)}&{\bf1(0)}&{\bf1(0)}&0.908(0.092)&0.910(0.090)&0.906(0.092)&1(0)
\\
 			 200&{\bf1(0)}&{\bf1(0)}&{\bf1(0)}&0.884(0.114)&0.884(0.114)&0.884(0.112)&0.998(0.002)\\
 				\bottomrule[1.2pt]		
 		\end{tabular*}}		
 	\end{center}
 \end{table*}

As one reviewer pointed out, the specification criterion of the numbers of factors deserves more concern when the mean of the data matrix is not zero. If the data are not demeaned suitably, this may enlarge the first eigenvalue of the corresponding matrix and result in underestimation of the numbers of factors. We propose to first demean the vectorized data either by subtracting the sample mean or adopting the double-demeaned strategy by \cite{ahn2013eigenvalue}, and then structure the demeaned vectors into matrices. We further design the following two settings with nonzero mean  to investigate the empirical performances of different methods.

\vspace{0.5em}

\noindent\textbf{Setting D}: (nonzero mean of factors). Data are generated similarly to Setting A except that  $\text{Vec}(\Fb_t)=\phi \times \text{Vec}(\Fb_{t-1})+\sqrt{1-\phi^2}\times\bepsilon_t$, $\bepsilon_t\overset{i.i.d.}{\sim}\mathcal{N}({\bf 1},\Ib_{k_1\times k_2})$.

\vspace{0.5em}

\noindent\textbf{Setting F}: (nonzero mean of entries). Data are generated similarly to Setting A except that $\Xb_t=\bmu+\Rb^\top\Fb_t\Cb+\Eb_t$ with the mean matrix $\bmu=(\mu_{ij})_{p_1\times p_2}$ independently sampled from $\mathcal{N}(0,1)$.

\vspace{0.5em}

Under Setting D, the non-zero mean term is from the factors, so we need not demean the data for $\alpha$-PCA-ER according to \cite{Chen2020StatisticalIF}. For the IterER and the VER criterion, we  try the two aforementioned demean strategies.  Table \ref{tab4} reports the empirical frequencies of exact estimation and underestimation over 500 replications. Both demean strategies work well for the IterER and VER, while the performance of the $\alpha$-PCA-ER  is not satisfactory due to the small value of $p_1$. Note also that taking $\alpha=-1$ is equivalent to subtracting the sample mean for the  $\alpha$-PCA-ER method.

 \begin{table*}[hbtp]
	\begin{center}
		\small
		 \addtolength{\tabcolsep}{0.5pt}
		\caption{The frequencies of exact estimation and underestimation (in parentheses) of the numbers of factors under Setting D: {nonzero mean of factors.} $p_1=20$, $T=p_2$. ``-S'' stands for subtracting the sample mean strategy. ``-D'' stands for the double-demeaned strategy by \cite{ahn2013eigenvalue}.}\label{tab4}
		 \renewcommand{\arraystretch}{1.2}
		\scalebox{0.85}{ 	
			 \begin{tabular*}{19.2cm}{cccccccc}
				\toprule[1.2pt]
				 $T$&IterER(0)-S&IterER(0)-D&(-1)PCA-ER&(0)PCA-ER&(1)PCA-ER&VER-S&VER-D\\\midrule[1pt]
			20&{\bf 0.988(0.012)}&{\bf 0.988(0.012)}&0.636(0.362)&0.218(0.782)&0.060(0.940)&0(0)&0(0)
\\
			50&{\bf 1(0)}&{\bf 1(0)}&0.890(0.108)&0.358(0.642)&0.106(0.894)&0.760(0.240)&0.766(0.234)
\\
			100&{\bf 1(0)}&{\bf 1(0)}&0.904(0.094)&0.400(0.600)&0.102(0.898)&0.994(0.006)&0.994(0.006)
\\
			150&{\bf 1(0)}&{\bf 1(0)}&0.890(0.098)&0.406(0.594)&0.088(0.912)&1(0)&1(0)
\\
			200&{\bf 1(0)}&{\bf 1(0)}&0.916(0.076)&0.366(0.634)&0.086(0.914)&1(0)&1(0)\\
			\bottomrule[1.2pt]		
		\end{tabular*}}		
	\end{center}
\end{table*}

Under Setting F, the non-zero mean term is not from the factors and we always demean the data before estimating the numbers of factors by  all three methods. Table \ref{tab5} reports the numeric results. The parameter $\alpha$  has a minimal effect on the results, hence we only present the results with $\alpha=0$ for the $\alpha$-PCA-ER method. Both demean strategies work well and  the IterER performs  well whether $T$ is small or large.

 \begin{table*}[hbtp]
	\begin{center}
		\small
		\addtolength{\tabcolsep}{4pt}
		\caption{The frequencies of exact estimation and underestimation (in parentheses) of the factor numbers under Setting F: nonzero  mean of individuals. $p_1=20$, $T=p_2$. ``-S'' stands for the subtracting sample mean strategy. ``-D'' stands for the double-demean strategy by \cite{ahn2013eigenvalue}.}\label{tab5}
		 \renewcommand{\arraystretch}{1.2}
		\scalebox{0.85}{ 	
			 \begin{tabular*}{19.2cm}{ccccccc}
				\toprule[1.2pt]
				 $T$&IterER(0)-S&IterER(0)-D&(0)PCA-ER-S&(0)PCA-ER-D&VER-S&VER-D\\\midrule[1pt]
20&{\bf 0.980(0.02)}&{\bf 0.984(0.016)}&0.602(0.398)&0.666(0.334)&0(0)&0(0)
\\
50&{\bf 1(0)}&{\bf 1(0)}&0.878(0.120)&0.890(0.110)&0.754(0.246)&0.758(0.242)
\\
100&{\bf 1(0)}&{\bf 1(0)}&0.924(0.076)&0.924(0.076)&0.988(0.012)&0.988(0.012)\\
150&{\bf 1(0)}&{\bf 1(0)}&0.884(0.112)&0.888(0.110)&0.998(0.002)&0.998(0.002)
\\
200&{\bf 1(0)}&{\bf 1(0)}&0.920(0.076)&0.920(0.078)&0.998(0.002)&0.998(0.002)\\
				\bottomrule[1.2pt]		
		\end{tabular*}}		
	\end{center}
\end{table*}

\subsection{Evaluation of  recursive procedure}\label{sec:iterativealg}
Finally, we consider a recursive projection procedure before ending the simulation studies.  The projected method can be recursively implemented by setting the newly estimated loadings $\tilde\Rb$ and $\tilde\Cb$ as initial projection matrices. A  question arises naturally: can we benefit from more iterative steps?

To this end, we conduct simulaions under Setting A ($p_1=20$, $T=p_2=200$) and Setting B ($p_2=20$, $T=p_1=200$). We start with the (0)-PCA  estimators, denoted as $\hat\Rb^{(1)}$ and $\hat\Cb^{(1)}$. At step $t+1$, we set $\hat\Rb^{(t)}$ and $\hat\Cb^{(t)}$ as the projection matrices to calculate the projected estimators, denoted as $\hat\Rb^{(t+1)}$ and $\hat\Cb^{(t+1)}$. At each step, the estimation errors of the corresponding  loading spaces are recorded.

Figure \ref{fig5} shows the averaged errors at each step over 500 replications. Under Setting A, the red real line shows a significant drop at the second step, corresponding to the reduced  estimation error from $\hat\Rb$ to $\tilde\Rb$. However, the error cannot be further decreased even though more iterative steps are involved.  The blue dashed line is flat because the estimation error of $\Cb$ is dominated by $1/\sqrt{Tp_1}$ at all steps when $p_1$ is small. Similar patterns are observed under Setting B. We conclude that our method performs satisfactorily with a single  projection step.

 \begin{figure}[hbpt]
	\begin{subfigure}{.5\textwidth}
		\centering
		 \includegraphics[width=7.5cm,height=7.5cm]{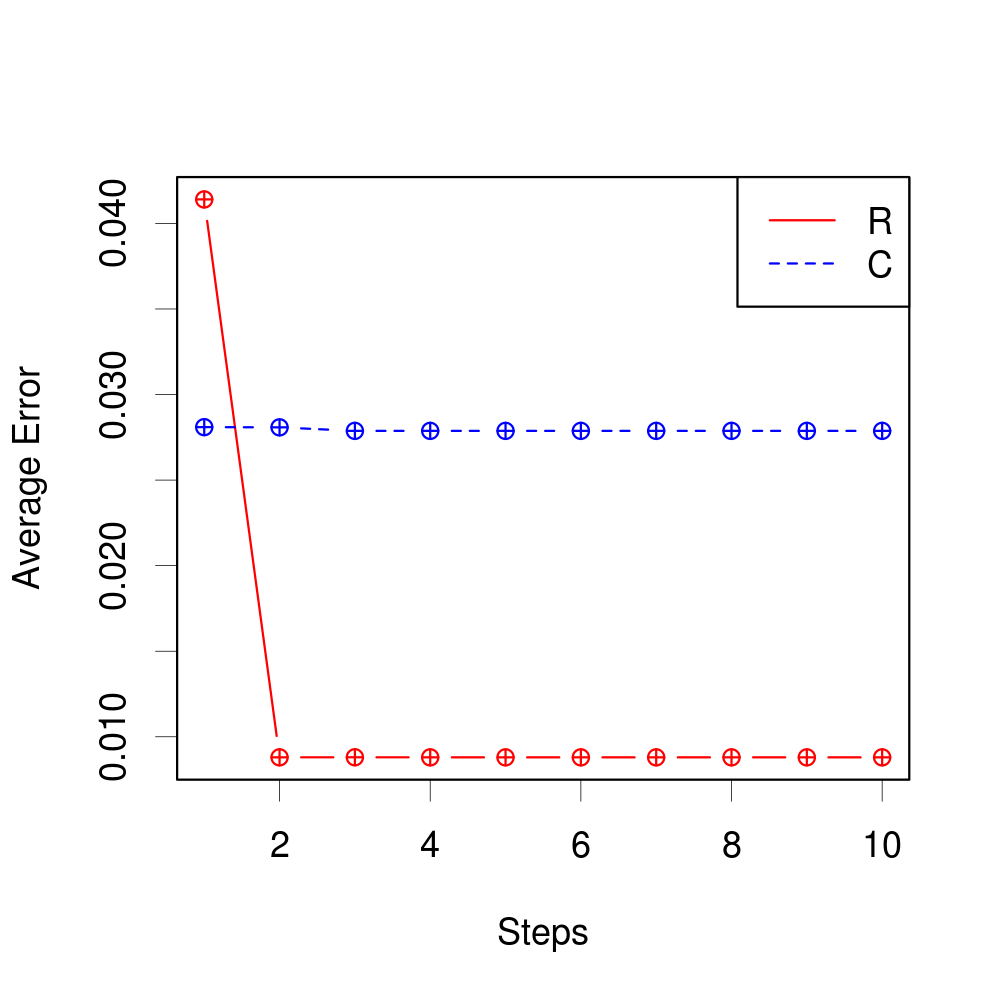}
		\caption{Setting A with $p_1=20$, $T=p_2=200$}
	\end{subfigure}
	\begin{subfigure}{.5\textwidth}
		\centering
		 \includegraphics[width=7.5cm,height=7.5cm]{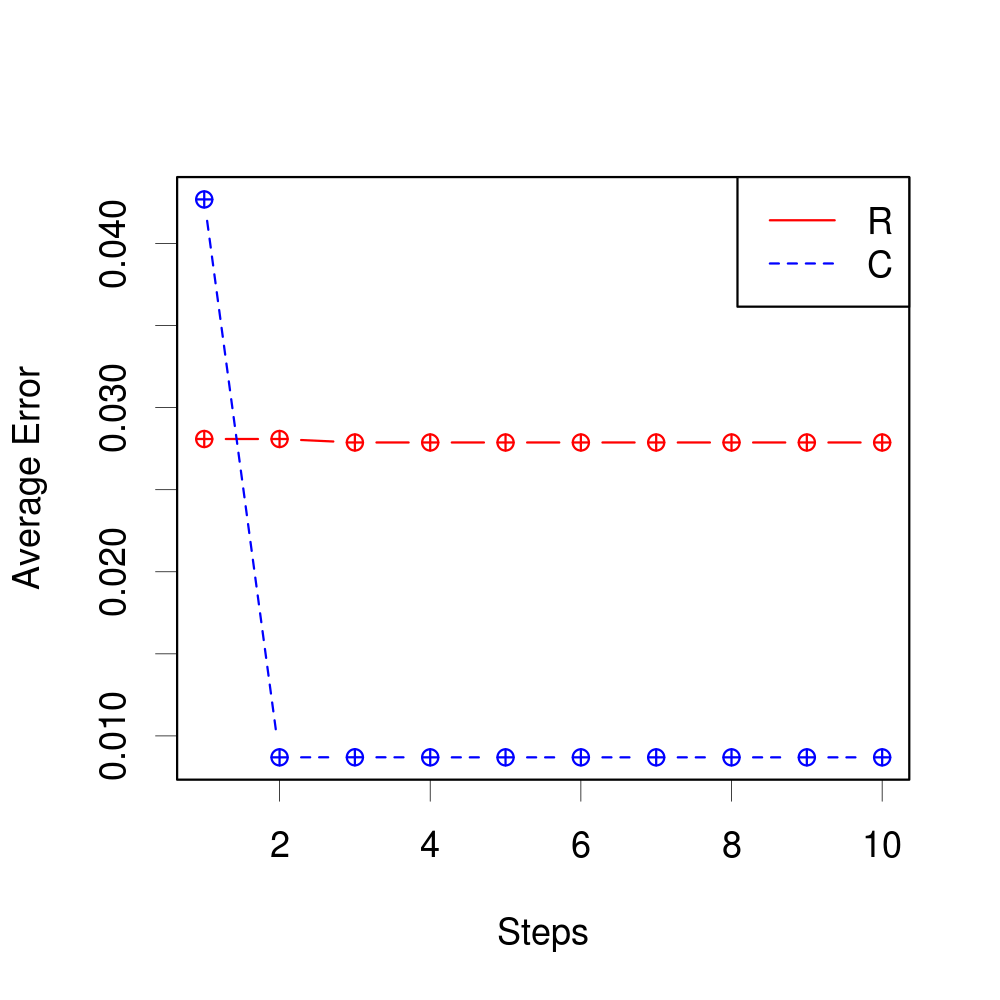}
		\caption{Setting B with $p_2=20$, $T=p_1=200$}
	\end{subfigure}
	\caption{Mean estimation error at each step of the recursive procedure over 500 replications.}\label{fig5}
\end{figure}

\section{Real data analysis}\label{sec5}
\subsection{Fama--French $10\times10$ portfolios}
 For ease of  comparison, in our first real example we use the same dataset as that used by \cite{wang2019factor}. This dataset consists of monthly returns of  100 portfolios, structured into a $10\times10$ matrix according to 10 levels of market capital size (S1-S10) and 10 levels of book-to-equity ratio (BE1-BE10). The monthly returns from January 1964 to December 2019 are collected, covering 672 months. Detailed information can be found on the website \url{http://mba.tuck.dartmouth.edu/pages/faculty/ken.french/data_library.html}.

 Following \cite{wang2019factor}, we adjusted the return series  first  by subtracting the corresponding monthly market excess returns. In the next step, we impute the missing values by linear interpolation for each series. The augmented Dickey--Fuller test rejects the null hypothesis for all the series, indicating stationality.  With the standardized monthly returns, our iterative eigenvalue-ratio method suggests that $k_1=2$ while $k_2=1$. For better illustration, we take $k_1=k_2=2$. The estimated front and back loading matrices after varimax rotation and scaling are reported in Tables \ref{tab6} and \ref{tab7}.

  \begin{table*}[hbpt]
 	\begin{center}
 		\small
 		\addtolength{\tabcolsep}{4pt}
 		\caption{Size loading matrix for Fama--French data set, after varimax rotation and scaling by 30.  ``PE" stands for the projected estimator, ``ACCE" stands for the  approach used by \cite{wang2019factor}, while $\alpha$-PCA represents the method in \cite{Chen2020StatisticalIF} with $\alpha=0$. }\label{tab6}
 		 \renewcommand{\arraystretch}{1}
 		\scalebox{1}{ 		 \begin{tabular*}{16cm}{cc|cccccccccc}
 				\toprule[1.2pt]
 				 Method&Factor&S1&S2&S3&S4&S5&S6&S7&S8&S9&S10\\\hline
 				 \multirow{2}*{PE}&1&\cellcolor {Lavender}-16&\cellcolor {Lavender}-15&\cellcolor {Lavender}-12&\cellcolor {Lavender}-10&\cellcolor {Lavender}-8&-5&-2&-1&4&\cellcolor {Lavender}7\\
 				&2&-5&-1&3&5&\cellcolor {Lavender}8&\cellcolor {Lavender}10&\cellcolor {Lavender}12&\cellcolor {Lavender}13&\cellcolor {Lavender}15&\cellcolor {Lavender}10\\\hline
 				 \multirow{2}*{ACCE}&1&\cellcolor {Lavender}-12&\cellcolor {Lavender}-14&\cellcolor {Lavender}-12&\cellcolor {Lavender}-13&\cellcolor {Lavender}-10&\cellcolor {Lavender}-6&-3&-1&5&\cellcolor {Lavender}9
 				\\
 				 &2&-1&-1&-1&2&5&\cellcolor {Lavender}10&\cellcolor {Lavender}11&\cellcolor {Lavender}18&\cellcolor {Lavender}15&\cellcolor {Lavender}10
 				\\\hline
 	 			 \multirow{2}*{$\alpha$-PCA}&1&\cellcolor {Lavender}14&\cellcolor {Lavender}14&\cellcolor {Lavender}13&\cellcolor {Lavender}11&\cellcolor {Lavender}9&\cellcolor {Lavender}6&4&1&-3&\cellcolor {Lavender}-8
 	\\
 	&2&-4&-2&2&3&6&\cellcolor {Lavender}9&\cellcolor {Lavender}12&\cellcolor {Lavender}13&\cellcolor {Lavender}16&\cellcolor {Lavender}13
 	\\
 				\bottomrule[1.2pt]		
 		\end{tabular*}}		
 	\end{center}
 \end{table*}

 \begin{table*}[hbpt]
 	\begin{center}
 		\small
 		\addtolength{\tabcolsep}{1pt}
 		\caption{Book-to-Equity (BE) loading matrix for Fama--French data set, after varimax rotation and scaling by 30. ``PE" is for the projected estimator, ``ACCE" stands for the  approach in \cite{wang2019factor}, and $\alpha$-PCA is for the method in \cite{Chen2020StatisticalIF} with $\alpha=0$.}\label{tab7}
 		 \renewcommand{\arraystretch}{1}
 		\scalebox{1}{ 		 \begin{tabular*}{16cm}{cc|cccccccccc}
 				\toprule[1.2pt]
 				 Method&Factor&BE1&BE2&BE3&BE4&BE5&BE6&BE7&BE8&BE9&BE10\\\hline
 				 \multirow{2}*{PE}&1&\cellcolor {Lavender}6&1&-4&\cellcolor {Lavender}-7&\cellcolor {Lavender}-10&\cellcolor {Lavender}-11&\cellcolor {Lavender}-12&\cellcolor {Lavender}-12&\cellcolor {Lavender}-12&\cellcolor {Lavender}-10
 				\\
 				&2&\cellcolor {Lavender}20&\cellcolor {Lavender}17&\cellcolor {Lavender}11&\cellcolor {Lavender}8&4&2&0&-1&-1&0\\\hline
 				 \multirow{2}*{ACCE}&1&\cellcolor {Lavender}6&-1&-4&\cellcolor {Lavender}-8&\cellcolor {Lavender}-8&\cellcolor {Lavender}-10&\cellcolor {Lavender}-10&\cellcolor {Lavender}-13&\cellcolor {Lavender}-14&\cellcolor {Lavender}-12
\\
&2&\cellcolor {Lavender}-21&\cellcolor {Lavender}-15&\cellcolor {Lavender}-11&\cellcolor {Lavender}-7&-5&-2&-1&2&3&-1
\\\hline
 				 \multirow{2}*{$\alpha$-PCA}&1&\cellcolor {Lavender}6&2&-4&\cellcolor {Lavender}-7&\cellcolor {Lavender}-10&\cellcolor {Lavender}-11&\cellcolor {Lavender}-12&\cellcolor {Lavender}-13&\cellcolor {Lavender}-12&\cellcolor {Lavender}-11
\\
&2&\cellcolor {Lavender}19&\cellcolor {Lavender}18&\cellcolor {Lavender}12&\cellcolor {Lavender}8&4&2&0&-1&-1&-1
\\ 				
 				\bottomrule[1.2pt]		
 		\end{tabular*}}		
 	\end{center}
 \end{table*}

From these tables, we observe that the PE, $\alpha$-PCA and Auto-Cross-Correlation Estimation (ACCE) method by  \cite{wang2019factor} lead to very similar estimated loadings. From the perspective of size, the small size portfolios load heavily on the first factor while the large size portfolios load mainly on the second factor. From the perspective of book-to-equity, the small BE portfolios load heavily on the second factor while the large BE portfolios load mainly on the first factor. Clearly, the portfolios tend to perform more similarly if they are constructed by public companies with similar size and book-to-equity ratio.

To further compare these methods, we use a similar rolling-validation procedure as in \cite{wang2019factor}. For each year $t$ from 1996 to 2019, we repeatedly use the $n$ (bandwidth) years observations before $t$ to fit the matrix-variate factor model and estimate the two loading matrices. The loadings are then used to estimate the  factors and corresponding residuals of the 12 months in the current year. Specifically, let $\Yb_t^i$ and $\hat\Yb_t^i$ be the observed and estimated price matrix of month $i$ in year $t$, denote $\bar {\Yb}_t$ as the mean price matrix, and further define
\[
\text{MSE}_t=\frac{1}{12\times10\times10}\sum_{i=1}^{12}\|\hat\Yb_t^i-\Yb_t^i\|_F^2, \quad \rho_t=\frac{\sum_{i=1}^{12}\|\hat\Yb_t^i-\Yb_t^i\|_F^2}{\sum_{i=1}^{12}\|\hat\Yb_t^i-\bar\Yb_t\|_F^2},
\]
as the mean squared pricing error and unexplained proportion of total variances, respectively.  During the rolling-validation procedure, the variation of loading space is measured by $v_t:=\mathcal{D}(\hat\Cb_t\otimes \hat\Rb_t,\hat\Cb_{t-1}\otimes \hat\Rb_{t-1})$. The matrix factor model (\ref{mod1.1}) can be written in vector form with $(\Cb\otimes \Rb)$ being the loading matrix.

Table \ref{tab8} reports the means of MSE, $\rho$ and $v$ by PE, ACCE,  $\alpha$-PCA and a conventional PCA estimation applied to the vectorized data. Diversified combinations of bandwidth $n$ and numbers of factors ($k_1=k_2=k$) are compared. On the one hand,	the pricing errors of PE, $\alpha$-PCA, and the vector model are very close especially for large $n$ and $k$, but lower than the ACCE method. On the other hand, in terms of estimating the loading space,  PE  always performs much more stably compared with the other two methods. Financial data are usually heavily-tailed with outliers, so the more robust PE method is  preferred  to control transaction costs and reduce risks.

 \begin{table*}[hbtp]
	\begin{center}
		\small
		\addtolength{\tabcolsep}{0pt}
		\caption{Rolling validation of  Fama--French dataset. $12n$ is the sample size of the training set. $k_1=k_2=k$ is the number of factors. $\overline{MSE}$, $\bar \rho$, $\bar v$ are the mean pricing error, mean unexplained proportion of total variances and mean variation of the estimated loading space. ``PE'' is the projected method. ``ACCE'' is the method in \cite{wang2019factor}. ``$\alpha$-PCA'' is the method in \cite{Chen2020StatisticalIF} with $\alpha=0$. ``Vec'' is the PCA applied to vectorized data. }\label{tab8}
		 \renewcommand{\arraystretch}{1.2}
		\scalebox{0.85}{ 	
			 \begin{tabular*}{19.5cm}{cc|cccc|cccc|cccc}
				\toprule[1.2pt]
				 \multirow{2}{*}{$n$}&\multirow{2}{*}{$k$}&\multicolumn{4}{c|}{$\overline{MSE}$}&\multicolumn{4}{c|}{$\bar{\rho}$}&\multicolumn{4}{c}{$\bar{v}$}\\
				 &&PE&ACCE&$\alpha$-PCA&Vec&PE&ACCE&$\alpha$-PCA&Vec&PE&ACCE&$\alpha$-PCA&Vec\\\midrule[1pt]
5&1&0.869&0.883&{\bf0.863}&0.910&{\bf0.803}&0.827&0.797&0.841&{\bf0.173}&0.303&0.238&0.252
\\
10&1&{\bf0.855}&0.880&0.860&0.939&{\bf0.785}&0.815&0.793&0.864&{\bf0.084}&0.166&0.200&0.180
\\
15&1&{\bf0.853}&0.884&0.861&0.894&{\bf0.783}&0.812&0.793&0.816&{\bf0.063}&0.153&0.232&0.131
\\\hline
5&2&0.596&0.668&0.602&{\bf0.590}&{\bf0.624}&0.674&0.630&0.626&{\bf0.216}&0.473&0.341&0.398
\\
10&2&0.603&0.658&0.613&{\bf0.595}&0.629&0.672&0.638&{\bf0.625}&{\bf0.089}&0.260&0.261&0.212
\\
15&2&0.604&0.639&0.614&{\bf0.590}&0.628&0.653&0.632&{\bf0.617}&{\bf0.059}&0.191&0.175&0.192
\\\hline
5&3&{\bf0.526}&0.566&0.533&0.534&{\bf0.554}&0.593&0.561&0.567&{\bf0.293}&0.512&0.451&0.477
\\
10&3&{\bf0.526}&0.575&0.531&0.532&{\bf0.554}&0.595&0.559&0.564&{\bf0.139}&0.301&0.406&0.301
\\
15&3&{\bf0.522}&0.562&0.526&0.527&0.549&0.584&{\bf0.548}&0.557&{\bf0.090}&0.278&0.345&0.257\\
				\bottomrule[1.2pt]		
		\end{tabular*}}		
	\end{center}
\end{table*}

\subsection{Multinational macroeconomic indices}
In the second real example, we analyze a multinational macroeconomic index dataset collected from OECD using the proposed method. A similar dataset is  studied in \cite{Chen2020StatisticalIF}. It contains 10 macroeconomic indices across 8 countries over 130 quarters from 1988-Q1 to 2020-Q2. The countries are the United States, the United Kingdom, Canada, France, Germany, Norway, Australia and New Zealand. The indices are from 4 major groups, namely consumer price, interest rate, production, and international trade. Logarithm transform and difference operators are applied to each of the series according to \cite{Chen2020StatisticalIF} so that the $\alpha$-mixing assumption is satisfied. Detailed description can be found in our supplementary material. We further standardize each of the transformed series to avoid the effects of non-zero mean or diversified  variances.

The first step is to determine the numbers of row and column factors. The proposed iterative algorithm suggests taking $k_1=1$ and $k_2=5$ for the $ 8\times 10$ matrix-valued observations. For better illustration, we take $k_1=3$ and $k_2=4$ such that the row and column factors can explain nearly 75\% variances of the matrices $\tilde \Mb_1$ and $\tilde\Mb_2$. The  detailed results are reported in Tables \ref{tab9} and \ref{tab10}.

\begin{table*}[hbpt]
	\begin{center}
		\small
		\addtolength{\tabcolsep}{12pt}
		\caption{Row (countries)  loading matrices by PE, ACCE and $\alpha$-PCA ($\alpha=0$) for multinational macroeconomic indices dataset, varimax rotated, and multiplied by 10.  }\label{tab9}
		 \renewcommand{\arraystretch}{1}
		\scalebox{0.75}{ 		 \begin{tabular*}{21.5cm}{cc|cccccccc}
				\toprule[1.2pt]
				 Method&Factor&AUS&NZL&USA&CAN&NOR&DEU&FRA&GBR
				\\\midrule[1pt]
				 \multirow{3}*{PE}&1&0&1&\cellcolor {Lavender}-7&\cellcolor {Lavender}-6&3&-3&-2&-1
\\
				&2&2&-2&1&-1&\cellcolor {Lavender}-7&-2&\cellcolor {Lavender}-5&\cellcolor {Lavender}-5
\\
				&3&\cellcolor {Lavender}8&\cellcolor {Lavender}6&0&-1&0&2&-1&1
				\\\hline
				 \multirow{3}*{ACCE}&1&2&-2&1&-2&\cellcolor {Lavender}-6&0&\cellcolor {Lavender}-6&\cellcolor {Lavender}-5
\\
				&2&\cellcolor {Lavender}7&\cellcolor {Lavender}5&0&0&0&\cellcolor {Lavender}5&0&0
\\
				&3&0&-2&\cellcolor {Lavender}8&\cellcolor {Lavender}4&-2&1&1&2
				\\\hline
				 \multirow{3}*{$\alpha$-PCA}&1&-1&1&\cellcolor {Lavender}-7&\cellcolor {Lavender}-5&3&-3&-2&-1
\\
			&2&	1&-1&0&-1&\cellcolor {Lavender}-7&-2&\cellcolor {Lavender}-5&\cellcolor {Lavender}-4
\\
			&3&	\cellcolor {Lavender}-7&\cellcolor {Lavender}-7&0&1&0&-1&1&-1\\
				\bottomrule[1.2pt]		
		\end{tabular*}}

	\end{center}
\end{table*}

\begin{table*}[hbpt]
	\begin{center}
		\small
		 \addtolength{\tabcolsep}{0.5pt}
		\caption{Column (indices) loading matrices by PE, ACCE, and $\alpha$-PCA ($\alpha=0$) for multinational macroeconomic indices dataset, varimax rotated and multiplied by 10.  }\label{tab10}
		 \renewcommand{\arraystretch}{1}
	
		\scalebox{0.75}{ 		 \begin{tabular*}{21.5cm}{cc|cccccccccc}
				\toprule[1.2pt]
				 Method&Factor&CPI:Tot&CPI:Ener&CPI:NFNE&IR:3-Mon&IR:Long&P:TIEC&P:TM&GDP&IT:Ex&IT:Im
				\\\midrule[1pt]
				 \multirow{4}*{PE}&1&1&-2&3&1&-1&\cellcolor {Lavender}6&\cellcolor {Lavender}7&2&-1&0
\\
				&2&\cellcolor {Lavender}6&\cellcolor {Lavender}7&3&-1&1&0&0&0&0&0
\\
				&3&0&0&-1&\cellcolor {Lavender}-6&\cellcolor {Lavender}-8&0&0&1&0&-1\\
				 &4&1&-2&3&0&0&-1&0&\cellcolor {Lavender}-5&\cellcolor {Lavender}-6&\cellcolor {Lavender}-5\\\hline
				 \multirow{4}*{ACCE}&1&0&0&0&0&1&\cellcolor {Lavender}-7&\cellcolor {Lavender}-7&0&0&0
\\
			&2&	1&0&0&\cellcolor {Lavender}-5&\cellcolor {Lavender}-4&1&-1&-2&\cellcolor {Lavender}-4&\cellcolor {Lavender}-6
\\
			&3&	\cellcolor {Lavender}-4&2&\cellcolor {Lavender}-9&0&0&2&-1&2&0&0
\\
				&4&\cellcolor {Lavender}6&\cellcolor {Lavender}7&0&-1&3&0&0&2&1&-1\\\hline
				 \multirow{4}*{$\alpha$-PCA}&1&0&-1&1&1&-1&\cellcolor {Lavender}7&\cellcolor {Lavender}6&\cellcolor {Lavender}4&0&0
\\
				&2&\cellcolor {Lavender}7&\cellcolor {Lavender}5&\cellcolor {Lavender}5&-1&1&0&1&0&0&0
\\
				&3&0&0&0&\cellcolor {Lavender}-7&\cellcolor {Lavender}-7&1&0&-1&1&0\\
				 &4&0&2&-2&0&0&0&0&2&\cellcolor {Lavender}7&\cellcolor {Lavender}6\\
							 \bottomrule[1.2pt]		
				
				\end{tabular*}}
	\end{center}
\end{table*}

For the row factors, Table \ref{tab9} shows that they are closely related to the geographical location. The neighboring countries tend to load similarly on the factors. From the estimated row loading matrix by the PE method, we observe that the Oceania countries load mainly on the third row factor, the north American countries load mainly on the first row factor, and the European countries load heavily on the second row factor. Therefore, the 7 countries (excluding Germany) naturally divides into 3 groups, which exactly match their geographical locations. The factors discovered by the PE and $\alpha$-PCA are almost the same. However, the factors discovered by the ACCE differ from those of PE and $\alpha$-PCA in the order of the leading geographical factors. Both PE and $\alpha$-PCA associate the first factor with the North American countries while the ACCE does so with the European countries.

For the column factors, the macroeconomic indices  are divided into 4 groups in Table \ref{tab10} which coincides  with the economic interpretation. According to the estimated column loading matrix by the PE method,  the indices corresponding to price load heavily on the second factor, the interest rate indices load  on the third factor, the production indices load mainly on the first factor, while international trade indices load mainly on the fourth factor. We observe that all three methods indicate the same first factor. Again, the PE and $\alpha$-PCA have the same second to fourth factors, but they are different from those of ACCE.

Next, a rolling-validation procedure is applied to each of the  mentioned methods as well as a vectorized PCA approach to further compare their performances. In view of the small sample size in this example, for each quarter $t$ from 2008-Q1 to 2020-Q2, we repeatedly estimate the matrix or vector valued factor models based on 80 observations before $t$. The estimated loadings are then used to calculate the mean squared error and variation of loading space at $t$.

  \begin{table*}[hbtp]
 	\begin{center}
 		\small
 		\addtolength{\tabcolsep}{10pt}
 		\caption{Rolling validation of the multinational macroeconomic index dataset. $k_1=k_2=k$.}\label{tab11}
 		 \renewcommand{\arraystretch}{1.2}
 		\scalebox{0.85}{ 	
 			 \begin{tabular*}{19.5cm}{c|cccc|cccc}
 				\toprule[1.2pt]
 				 \multirow{2}{*}{$k$}&\multicolumn{4}{c|}{$\overline{MSE}$}&\multicolumn{4}{c}{$\bar{v}$}\\
 				 &PE&ACCE&$\alpha$-PCA&Vec&PE&ACCE&$\alpha$-PCA&Vec\\\midrule[1pt]
2&0.749&0.835&0.727&{\bf 0.652}&0.058&{\bf 0.053}&0.327&0.061
\\
3&0.594&0.623&0.604&{\bf 0.486}&{\bf 0.041}&0.047&0.235&0.115
\\
4&0.486&0.519&0.468&{\bf 0.353}&{\bf 0.067}&0.106&0.219&0.125
\\
5&0.351&0.407&0.351&{\bf 0.227}&{\bf 0.045}&0.083&0.182&0.140\\
 				\bottomrule[1.2pt]		
 		\end{tabular*}}		
 	\end{center}
 \end{table*}

The averaged MSE and variation are reported in Table \ref{tab11} with various specified numbers of factors. The results are similar to those of the Fama--French dataset.  The projected estimator seems to be more stable than the other approaches. The PCA method for vectorized data leads to the smallest validation error in this example possibly due to deeper complexity in the sense of more parameters. Furthermore its averaged variation of the loading space is larger than the PE method. The validation errors for the PE and $\alpha$-PCA methods are comparable, but lower than those of ACCE.

  \begin{table*}[hbtp]
	\begin{center}
		\small
		\addtolength{\tabcolsep}{10pt}
		\caption{Mean absolute prediction error for the Consumer Price Index of Canada by different models.  }\label{tab12}
		 \renewcommand{\arraystretch}{1.8}
		\scalebox{0.68}{ 	
			 \begin{tabular*}{22.5cm}{cccccccccc}
				\toprule[1.2pt]
				 $(k_1,k_2)$&\textbf{Model 1}&\textbf{Model 2}&\textbf{Model 3}&\textbf{Model 4: PE}&\textbf{Model 4: ACCE}&\textbf{Model 4: $\alpha$-PCA}\\\midrule[1pt]
(3,3)&0.852&0.8171&0.6552&0.5988&{\bf0.5940}&0.6185
\\
(3,4)&0.852&0.7968&0.6658&{\bf0.6231}&0.6334&0.6594
\\
(4,4)&0.852&0.7968&0.6195&{\bf0.6039}&0.7182&0.6101\\
				\bottomrule[1.2pt]		
		\end{tabular*}}		
	\end{center}
\end{table*}

At last, we evaluate the practical utility of different methods by a rolling prediction procedure. At each quarter $t$, denote $y_t$ as the total CPI (CPI:Tot) in Canada, $\bx_t$ as the vector of all the other 9 indices in Canada, and $\Zb_t$ as the $8\times 9$ matrix observation of all the other 9 indices in all countries. We predict $y_{t+1}$ using the following Auto-Regression (AR) model (Model 1) and Factor-Augmented-Auto-Regression (FAAR) models (Models 2--4).
\begin{description}
	\item[Model 1]  $y_{t+1}=a+by_t+\epsilon_{t+1}$,
	\item[Model 2]  $y_{t+1}=a+by_t+\bbeta^\top\bbf_{1t}+\epsilon_{t+1}$, where $\bbf_{1t}$'s are estimated  from the vector factor model with observations $\{\bx_t\}$.
	\item[Model 3] $y_{t+1}=a+by_t+\bbeta^\top\bbf_{2t}+\epsilon_{t+1}$, where $\bbf_{2t}$'s are estimated from the vector factor model with observations $\{\text{Vec}(\Zb_t)\}$.
	\item[Model 4]  $y_{t+1}=a+by_t+\bbeta^\top\text{Vec}(\Fb_t)+\epsilon_{t+1}$, where $\Fb_t$'s are estimated from the matrix factor model with observations $\{\Zb_t\}$, by the PE, ACCE, and $\alpha$-PCA, respectively.
\end{description}
Model 1 is a simple auto-regression model. In Model 2, we add  common factors into the auto-regression model, which summarize the information of the other macroeconomic indices of Canada. In Model 3 and Model 4, the indices of the other countries are also taken into account and further summarized into several key factors by different methods. The coefficients $a$, $b$ and $\bbeta$ are estimated separately for different models with ordinary least squares.

For each quarter $t$ from 2008-Q1 to 2020-Q2, we always use the 80 neighboring observations before $t$ to train the models and predict $y_{t+1}$. The mean absolute prediction errors are reported in Table \ref{tab12} with different combinations of $k_1$ and $k_2$. In Model 2 and Model 3, the factor numbers are set as $k_2$ and $k_1\times k_2$, respectively. Comparing the prediction errors of Model 1  and Model 2, we conclude that taking the other macroeconomic indices of Canada into account  facilitates the prediction of the CPI, as what's expected. The prediction accuracy is further improved by a large margin by considering cross-country information, by comparing the prediction errors of Model 2 with those of Model 3 and Model 4. Moreover, for Model 4, the PE method tends to have smaller prediction errors than the ACCE and $\alpha$-PCA.

\section{Conclusions and discussions}\label{sec6}
The current study focuses on the estimation of matrix factor models. We start with the column or row sample covariances instead of the auto-cross covariances for the estimation of the front and back loading matrices. A projected approach is proposed to improve the estimation accuracy. Statistical convergence rates and asymptotic distributions of the estimated loadings are provided under mild conditions. An iterative approach is introduced to determine the numbers of factors. Thorough numerical studies and real examples show the advantages of the projected method over existing methods. The matrix factor models can be further extended to analyze high-order tensor data, such as video streaming, which are widely applied in recommender systems. This subject will be addressed in a future study. We are also interested in incorporating the matrix factor structure into estimating large-dimensional covariance matrices or detecting structure breaks.

\section{Acknowledgements}
He's work is supported by the National Key R$\&$D Program
of China (Grant No. 2018YFA0703900), the grant of National Science Foundation of China (Grant No. 11801316),  Natural Science Foundation
of Shandong Province (Grant No. ZR2019QA002) and the Fundamental Research Funds of Shandong University. Kong's work is partially supported by NSF China (Grant Nos. 71971118 and 11831008) and the WRJH-QNBJ Project and Qinglan Project of Jiangsu Province. Zhang's work is partially supported by NSF China (Grant No. 11971116).

\section{Supplementary Material}
The technical proofs of the main results  and details of the datasets  are included in the Supplementary Material.
  \bibliographystyle{Chicago}
  \bibliography{Ref}

\begin{thebibliography}{}

\bibitem[\protect\citeauthoryear{Ahn and Horenstein}{Ahn and
  Horenstein}{2013}]{ahn2013eigenvalue}
Ahn, S.~C. and A.~R. Horenstein (2013).
\newblock Eigenvalue ratio test for the number of factors.
\newblock {\em Econometrica\/}~{\em 81\/}(3), 1203--1227.

\bibitem[\protect\citeauthoryear{Athreya and Lahiri}{Athreya and
  Lahiri}{2006}]{athreya2006measure}
Athreya, K.~B. and S.~N. Lahiri (2006).
\newblock {\em Measure theory and probability theory}.
\newblock Springer Science \& Business Media.

\bibitem[\protect\citeauthoryear{Bai}{Bai}{2003}]{bai2003inferential}
Bai, J. (2003).
\newblock Inferential theory for factor models of large dimensions.
\newblock {\em Econometrica\/}~{\em 71\/}(1), 135--171.

\bibitem[\protect\citeauthoryear{Bai and Ng}{Bai and
  Ng}{2002}]{bai2002determining}
Bai, J. and S.~Ng (2002).
\newblock Determining the number of factors in approximate factor models.
\newblock {\em Econometrica\/}~{\em 70\/}(1), 191--221.

\bibitem[\protect\citeauthoryear{Chamberlain and Rothschild}{Chamberlain and
  Rothschild}{1983}]{chamberlain1983arbitrage}
Chamberlain, G. and M.~Rothschild (1983).
\newblock Arbitrage, factor structure, and mean-variance analysis on large
  asset markets.
\newblock {\em Econometrica\/}~{\em 51\/}(5), 1281--1304.

\bibitem[\protect\citeauthoryear{Chen and Chen}{Chen and
  Chen}{2020}]{Chen2020Modeling}
Chen, E.~Y. and R.~Chen (2020).
\newblock Modeling dynamic transport network with matrix factor models: with an
  application to international trade flow.

\bibitem[\protect\citeauthoryear{Chen, Fan, and Li}{Chen
  et~al.}{2020}]{Chen2020StatisticalIF}
Chen, E.~Y., J.~Fan, and E.~Li (2020).
\newblock Statistical inference for high-dimensional matrix-variate factor
  model.

\bibitem[\protect\citeauthoryear{Chen, Tsay, and Chen}{Chen
  et~al.}{2020}]{chen2019constrained}
Chen, E.~Y., R.~S. Tsay, and R.~Chen (2020).
\newblock Constrained factor models for high-dimensional matrix-variate time
  series.
\newblock {\em Journal of the American Statistical Association\/}, in press.

\bibitem[\protect\citeauthoryear{Davis and Kahan}{Davis and
  Kahan}{1970}]{davis1970rotation}
Davis, C. and W.~M. Kahan (1970).
\newblock The rotation of eigenvectors by a perturbation. \text{III}.
\newblock {\em SIAM Journal on Numerical Analysis\/}~{\em 7\/}(1), 1--46.

\bibitem[\protect\citeauthoryear{Fama and French}{Fama and
  French}{1993}]{fama1993common}
Fama, E.~F. and K.~R. French (1993).
\newblock Common risk factors in the returns on stocks and bonds.
\newblock {\em Journal of Financial Economics\/}~{\em 33\/}(1), 3--56.

\bibitem[\protect\citeauthoryear{Fan, Liao, and Mincheva}{Fan
  et~al.}{2013}]{fan2013large}
Fan, J., Y.~Liao, and M.~Mincheva (2013).
\newblock Large covariance estimation by thresholding principal orthogonal
  complements.
\newblock {\em Journal of the Royal Statistical Society: Series B (Statistical
  Methodology)\/}~{\em 75\/}(4), 603--680.

\bibitem[\protect\citeauthoryear{Fan, Liao, and Shi}{Fan
  et~al.}{2015}]{fan2015risks}
Fan, J., Y.~Liao, and X.~Shi (2015).
\newblock Risks of large portfolios.
\newblock {\em Journal of Econometrics\/}~{\em 186\/}(2), 367--387.

\bibitem[\protect\citeauthoryear{Forni, Hallin, Lippi, and Reichlin}{Forni
  et~al.}{2000}]{forni2000generalized}
Forni, M., M.~Hallin, M.~Lippi, and L.~Reichlin (2000).
\newblock The generalized dynamic-factor model: Identification and estimation.
\newblock {\em Review of Economics and Statistics\/}~{\em 82\/}(4), 540--554.

\bibitem[\protect\citeauthoryear{He, Kong, Yu, and Zhang}{He
  et~al.}{2020}]{He2020Large}
He, Y., X.~Kong, L.~Yu, and X.~Zhang (2020).
\newblock Large-dimensional factor analysis without moment constraints,
  manuscript.
\newblock {\em Journal of Business and Economic Statistics, to appear\/}.

\bibitem[\protect\citeauthoryear{Kong, Wang, Xing, Xu, and Ying}{Kong
  et~al.}{2019}]{kong2019factor}
Kong, X., J.~Wang, J.~Xing, C.~Xu, and C.~Ying (2019).
\newblock Factor and idiosyncratic empirical processes.
\newblock {\em Journal of the American Statistical Association\/}~{\em
  114\/}(527), 1138--1146.

\bibitem[\protect\citeauthoryear{Lam and Yao}{Lam and
  Yao}{2012}]{lam2012factor}
Lam, C. and Q.~Yao (2012).
\newblock Factor modeling for high-dimensional time series: inference for the
  number of factors.
\newblock {\em The Annals of Statistics\/}~{\em 40\/}(2), 694--726.

\bibitem[\protect\citeauthoryear{Lam, Yao, and Bathia}{Lam
  et~al.}{2011}]{lam2011estimation}
Lam, C., Q.~Yao, and N.~Bathia (2011).
\newblock Estimation of latent factors for high-dimensional time series.
\newblock {\em Biometrika\/}~{\em 98\/}(4), 901--918.

\bibitem[\protect\citeauthoryear{Lettau and Pelger}{Lettau and
  Pelger}{2020}]{lettau2020factors}
Lettau, M. and M.~Pelger (2020).
\newblock Factors that fit the time series and cross-section of stock returns.
\newblock {\em The Review of Financial Studies\/}~{\em 33\/}(5), 2274--2325.

\bibitem[\protect\citeauthoryear{Liu and Chen}{Liu and
  Chen}{2020}]{liu2020threshold}
Liu, X. and R.~Chen (2020).
\newblock Threshold factor models for high-dimensional time series.
\newblock {\em Journal of Econometrics\/}~{\em 216\/}(1), 53--70.

\bibitem[\protect\citeauthoryear{Ross}{Ross}{1977}]{ross1977capital}
Ross, S.~A. (1977).
\newblock The capital asset pricing model (\text{CAPM}), short-sale
  restrictions and related issues.
\newblock {\em The Journal of Finance\/}~{\em 32\/}(1), 177--183.

\bibitem[\protect\citeauthoryear{Stock and Watson}{Stock and
  Watson}{2002}]{stock2002forecasting}
Stock, J.~H. and M.~W. Watson (2002).
\newblock Forecasting using principal components from a large number of
  predictors.
\newblock {\em Journal of the American statistical association\/}~{\em
  97\/}(460), 1167--1179.

\bibitem[\protect\citeauthoryear{Virta, Li, Nordhausen, and Oja}{Virta
  et~al.}{2017}]{virta2017independent}
Virta, J., B.~Li, K.~Nordhausen, and H.~Oja (2017).
\newblock Independent component analysis for tensor-valued data.
\newblock {\em Journal of Multivariate Analysis\/}~{\em 162}, 172 -- 192.

\bibitem[\protect\citeauthoryear{Wang, Liu, and Chen}{Wang
  et~al.}{2019}]{wang2019factor}
Wang, D., X.~Liu, and R.~Chen (2019).
\newblock Factor models for matrix-valued high-dimensional time series.
\newblock {\em Journal of Econometrics\/}~{\em 208\/}(1), 231--248.

\bibitem[\protect\citeauthoryear{Yu, Wang, and Samworth}{Yu
  et~al.}{2015}]{Yu2015useful}
Yu, Y., T.~Wang, and R.~J. Samworth (2015).
\newblock {A useful variant of the Davis-Kahan theorem for statisticians}.
\newblock {\em Biometrika\/}~{\em 102\/}(2), 315--323.

\end{thebibliography}

\newpage
\begin{center}
\begin{large}
	{\bf Supplementary Material for ``Projected Estimation for Large-dimensional Matrix  Factor Models" }\\
\end{large}
	Long Yu\\
		School of Management, Fudan University, China\\
		Yong He\\
		Shandong University, China\\
		Xin-bing Kong\\
		Nanjing Audit University, China\\
		and\\
		Xinsheng Zhang\\
		School of Management, Fudan University,  China
\end{center}

\begin{appendices}
The supplementary material provides all the detailed proofs of the theorems and description of real data in the main text. It's structured as follows. Section \ref{seca} is for the consistency of the projected estimators, corresponding to Theorem \ref{thm1}. In Section \ref{secb}, we prove that the initial estimators satisfy the sufficient conditions (\ref{c1}) and (\ref{c2}), and show the convergence rates, corresponding to Theorem \ref{thm3}. Sections \ref{secc} and \ref{secd} are the proofs of Theorems \ref{thm4} and \ref{thm2}, i.e., the limiting distributions of the initial and projected estimators, respectively. Section \ref{sece} proves the consistency of factor and signal matrices in Theorem \ref{thm5}. Section \ref{secf} proves the consistency of the estimated factor numbers in Theorem \ref{thm6}.  Section \ref{secg} provides detailed description of the two data sets corresponding to our real applications.

	As $k_1,k_2$ are both fixed constants, without loss of generality, we  assume $k_1=k_2=1$ in some parts of the proof as long as it simplifies the notations. In that case, $\Rb$ and $\Cb$ are vectors rather than matrices and we use $R_i$ and $C_j$ to denote the corresponding entries. Similarly,  the common factors are denoted as $F_t$ instead of the bold $\Fb_t$. The roman characters $\mathcal{\uppercase\expandafter{\romannumeral1}},\mathcal{\uppercase\expandafter{\romannumeral2}},\ldots$ are used repeatedly in the proof but represent different terms in separate  sections.
	
	\section{Proof of Theorem \ref{thm1}: consistency of the projected estimators}\label{seca}
	\begin{proof}
		We focus on $\tilde\Cb$ and the results of $\tilde\Rb$ can be obtained by parallel procedure.  Note that  by definition, we have
	\begin{equation}\label{equa1}
	\begin{split}
	 \tilde\Mb_2=&\frac{1}{Tp_1^2p_2}\sum_{t=1}^{T}\Xb_t^\top\hat\Rb\hat\Rb^\top\Xb_t\\
	 =&\frac{1}{Tp_1^2p_2}\sum_{t=1}^{T}(\Cb\Fb_t^\top\Rb^\top+\Eb_t^\top)\hat\Rb\hat\Rb^\top(\Rb\Fb_t\Cb^\top+\Eb_t)\\
	 =&\frac{1}{Tp_1^2p_2}\sum_{t=1}^{T}\Cb\Fb_t^\top\Rb^\top\hat\Rb\hat\Rb^\top\Rb\Fb_t\Cb^\top+\frac{1}{Tp_1^2p_2}\sum_{t=1}^{T}\Eb_t^\top\hat\Rb\hat\Rb^\top\Rb\Fb_t\Cb^\top\\
	 &+\frac{1}{Tp_1^2p_2}\sum_{t=1}^{T}\Cb\Fb_t^\top\Rb^\top\hat\Rb\hat\Rb^\top\Eb_t+\frac{1}{Tp_1^2p_2}\sum_{t=1}^{T}\Eb_t^\top\hat\Rb\hat\Rb^\top\Eb_t\\
	 :=&\mathcal{\uppercase\expandafter{\romannumeral1}}+\mathcal{\uppercase\expandafter{\romannumeral2}}+\mathcal{\uppercase\expandafter{\romannumeral3}}+\mathcal{\uppercase\expandafter{\romannumeral4}}.
	\end{split}
	\end{equation}
	Denote $\tilde\bLambda_2$ as the diagonal matrix composed of the leading $k_2$ eigenvalues of $\tilde\Mb_2$, and
	\[
	 \tilde\Hb_2=\frac{1}{Tp_1^2p_2}\sum_{t=1}^{T}\Fb_t^\top\Rb^\top\hat\Rb\hat\Rb^\top\Rb\Fb_t\Cb^\top\tilde\Cb\tilde\bLambda_2^{-1},
	\]
	then by the definition of $\tilde\Cb$ we have $\tilde\Cb\tilde\bLambda_2=\tilde\Mb_2\tilde\Cb$, and
	\begin{equation}\label{equa2}
	 \tilde\Cb-\Cb\tilde\Hb_2=(\mathcal{\uppercase\expandafter{\romannumeral2}}+\mathcal{\uppercase\expandafter{\romannumeral3}}+\mathcal{\uppercase\expandafter{\romannumeral4}})\tilde\Cb\tilde\bLambda_2^{-1}.
	\end{equation}
	We will show that the diagonal entries of $\tilde\bLambda_2$ converge to some distinct positive constants in Lemma \ref{lema2}, then $\|\tilde\Hb_2\|=O_p(1)$. Further in Lemma \ref{lema3} we will have
	\[
	 \frac{1}{p_2}\|\mathcal{\uppercase\expandafter{\romannumeral2}}\tilde\Cb\|_F^2=O_p\bigg(\frac{1}{Tp_1}+w_2\bigg),\quad 		 \frac{1}{p_2}\|\ \mathcal{\uppercase\expandafter{\romannumeral3}}\tilde\Cb\|_F^2=O_p\bigg(\frac{1}{Tp_1}+w_2\bigg),
	\]
	while Lemma \ref{lema4} shows that
	\[
	\frac{1}{p_2}	 \|\mathcal{\uppercase\expandafter{\romannumeral4}\tilde\Cb}\|_F^2\lesssim\frac{1}{Tp_1p_2}+\frac{1}{p_1^2p_2^2}+w_1^2\times\bigg(\frac{1}{p_2^2}+\frac{1}{Tp_2}\bigg)+o_p(1)\times \frac{1}{p_2}\|\tilde\Cb-\Cb\tilde\Hb_2\|_F^2.
	\]
	Hence, the consistency in Theorem \ref{thm1} follows directly. It remains to show that $\tilde\Hb_2^\top\tilde\Hb_2\overset{p}{\rightarrow}\Ib_{k_2}$. This is easy because
	\[
	 \bigg\|\frac{1}{p_2}\Cb^\top(\tilde\Cb-\Cb\tilde\Hb_2)\bigg\|_F\le \bigg(\frac{\|\Cb\|_F^2}{p_2}\frac{\|\tilde\Cb-\Cb\tilde\Hb_2\|_F^2}{p_2}\bigg)^{1/2}= o_p(1),\quad\bigg\|\frac{1}{p_2}\tilde\Cb^\top(\tilde\Cb-\Cb\tilde\Hb_2)\bigg\|_F=o_p(1).
	\]
Note that $p_2^{-1}\tilde\Cb^\top\tilde\Cb=\Ib_{k_2}$ while $p_2^{-1}\Cb^\top\Cb\rightarrow\Ib_{k_2}$, then
\[
\Ib_{k_2}=\frac{1}{p_2}\tilde\Cb^\top\Cb\tilde\Hb_2+{\bf o_p(1)}=\tilde\Hb_2^\top\tilde\Hb_2+{\bf o_p(1)}.
\]
The row-wise consistency holds with the proof of asymptotic normality in Section \ref{secd}, which concludes Theorem \ref{thm1}.
\end{proof}

Before we move to the detailed investigation of the mentioned lemmas, the next Lemma  \ref{lema1} provides some technical bounds.
\begin{lemma}\label{lema1}
	Under Assumptions A to E, as $\min\{T,p_1,p_2\}\rightarrow\infty$, we have \\
(1). $\sum_{t=1}^{T}\mathbb{E}\|\Eb_t^\top\Rb\|_F^2=O(Tp_1p_2)$, $\sum_{t=1}^{T}\mathbb{E}\|\Eb_t\Cb\|_F^2=O(Tp_1p_2)$,\\
			(2). $\mathbb{E}\|\sum_{t=1}^T\Fb_t\Cb^\top\Eb_t^\top\|_F^2\le O(Tp_1p_2)$, $\mathbb{E}\|\sum_{t=1}^T\Fb_t^\top\Rb\Eb_t\|_F^2\le O(Tp_1p_2)$,\\ $\mathbb{E}\|\sum_{t=1}^T\Fb_t\Cb^\top\Eb_t^\top\Rb\|_F^2\le O(Tp_1p_2)$, $\mathbb{E}\|\sum_{t=1}^T\Fb_t^\top\Rb^\top\Eb_t\Cb\|_F^2\le O(Tp_1p_2)$,\\
	(3) it holds that for any $i\le p_1$,
	\[
\mathbb{E}\bigg\|\sum_{t=1}^{T}\Eb_t\be_{t,i\cdot}\bigg\|^2=O(Tp_1p_2+T^2p_2^2),\quad \mathbb{E}\bigg\|\sum_{t=1}^{T}\Rb^\top\Eb_t\be_{t,i\cdot}\bigg\|^2=O(Tp_1p_2+T^2p_2^2),\quad
	\]
\end{lemma}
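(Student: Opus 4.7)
The overall strategy is to expand each squared Frobenius norm coordinate-wise, interchange expectation and summation, and then invoke the weak-dependence controls in Assumptions C, D, and E. Throughout I will use that $\|\Rb\|_{\max}$ and $\|\Cb\|_{\max}$ are bounded (Assumption C) so that entries of the loading matrices can be pulled out as constants, and that the factor numbers $k_1,k_2$ are fixed.

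For part (1), I would write
\[
\|\Eb_t^\top\Rb\|_F^2=\sum_{j=1}^{p_2}\sum_{k=1}^{k_1}\Bigl(\sum_{i=1}^{p_1}e_{t,ij}R_{ik}\Bigr)^2=\sum_{j,k}\sum_{i,l}R_{ik}R_{lk}\,e_{t,ij}e_{t,lj},
\]
take expectations, bound $|R_{ik}R_{lk}|\le \bar r^{2}$, and then sum in $i,l$ using Assumption D.2(2) which controls $\sum_{l,h}|\mathbb{E}e_{t,lj}e_{t,ih}|\le c$; summing over $j$ gives a factor $p_2$ and summing over $t$ produces the claimed $O(Tp_1p_2)$. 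The bound for $\sum_t\mathbb{E}\|\Eb_t\Cb\|_F^2$ is symmetric.

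For part (2), the key is Assumption E.1: for any unit vectors $\bv,\bw$, $\mathbb{E}\|T^{-1/2}\sum_t\Fb_t\bv^\top\Eb_t\bw\|^2\le c$. I would decompose
\[
\sum_{t=1}^{T}\Fb_t\Cb^\top\Eb_t^\top=\sum_{i=1}^{p_1}\sum_{t=1}^{T}\Fb_t\Cb^\top\Eb_t^\top\be_i,
\]
where $\be_i$ ranges over standard basis vectors, so that $\|\sum_t\Fb_t\Cb^\top\Eb_t^\top\|_F^2=\sum_i\|\sum_t\Fb_t\Cb^\top\Eb_t^\top\be_i\|^2$. Writing $\Cb=\sum_{j}\bC_{\cdot j}$ column-wise and using $\|\Cb_{\cdot j}\|\asymp\sqrt{p_2}$, each inner term has the form covered by Assumption E.1 multiplied by $p_2$; summing over the $p_1$ coordinates of $\be_i$ yields $O(Tp_1p_2)$. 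The three remaining bounds in (2) follow by the same device (replacing $\Cb^\top\Eb_t^\top$ by $\Rb^\top\Eb_t$, $\Cb^\top\Eb_t^\top\Rb$, or $\Rb^\top\Eb_t\Cb$); when both $\Rb$ and $\Cb$ appear the extra sums over their fixed-dimensional indices merely contribute constants.

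For part (3), fix $i$ and expand
\[
\mathbb{E}\Bigl\|\sum_{t=1}^{T}\Eb_t\be_{t,i\cdot}\Bigr\|^2=\sum_{l=1}^{p_1}\sum_{t,s=1}^{T}\sum_{j,h=1}^{p_2}\mathbb{E}\bigl(e_{t,lj}e_{t,ij}\,e_{s,lh}e_{s,ih}\bigr).
\]
I would split the expectation into its mean part $\mathbb{E}(e_{t,lj}e_{t,ij})\mathbb{E}(e_{s,lh}e_{s,ih})$ plus a covariance remainder. The mean part is handled by Assumption D.2(2) twice, giving $O(T^2p_2^2)$ after summing over $t,s,l$ (with $p_1$ absorbed through the constant bound on $\sum_l$), and the covariance part is exactly the shape controlled by Assumption D.3(1), yielding $O(Tp_1p_2)$. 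Adding the two contributions gives the first bound $O(Tp_1p_2+T^2p_2^2)$. The second bound $\mathbb{E}\|\sum_t\Rb^\top\Eb_t\be_{t,i\cdot}\|^2$ is obtained the same way after inserting $\Rb^\top$ and using $\|\Rb\|_{\max}\le\bar r$ so that sums over $l$ still cost at most a constant.

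The main obstacle is bookkeeping: part (3) requires splitting the fourth-order moment into mean and covariance parts so that D.2 controls the cross-time product of second moments while D.3(1) controls the genuine fourth-order covariances; one has to be careful that the indices appearing in D.3(1) (namely $t,i,l_1,j$ and $s,l_2,h$) align with the indices $(t,l,j)$ and $(s,l,h)$ produced by the expansion, and that the remaining sums over $l$ and $j$ are absorbed by the dimensional factors promised in the statement rather than blowing up.
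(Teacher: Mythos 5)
Your proposal is correct and follows essentially the same route as the paper: part (1) by entrywise expansion with $|R_{ik}R_{lk}|\le \bar r^2$ and the sub-sums of Assumption D.2, part (2) by reducing to the unit-vector form of Assumption E.1 after normalizing the columns of $\Cb$ (or $\Rb$) by $\sqrt{p_2}$ (or $\sqrt{p_1}$) — which is exactly what the paper's one-line appeal to E.1 amounts to — and part (3) by the same mean/covariance split, with D.3(1) giving the $O(Tp_1p_2)$ piece and the product of second moments giving $O(T^2p_2^2)$. One small bookkeeping remark on part (3): because the index $l$ is shared by the two second-moment factors, D.2(2) can only be applied to one of them (the other is bounded by a constant via the moment bounds), but this is precisely the paper's computation and still yields your claimed $O(T^2p_2^2)$.
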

\begin{proof} We assume $k_1=k_2=1$ in the proof. \\
	(1). For any $t,j$, by Assumption D.2
	\[
	\mathbb{E}(\be_{t,\cdot j}^\top\Rb)^2=\sum_{i_1,i_2}\mathbb{E}(R_{i_1}R_{i_2}e_{t,i_1j}e_{t,i_2j})\le \bar r^2\sum_{i_1,i_2}\bigg|\mathbb{E}(e_{t,i_1j}e_{t,i_2j})\bigg|\le c\bar r^2p_1.
	\]
	Hence, it's easy that $\sum_t\mathbb{E}\|\Eb_t^\top\Rb\|_F^2=O(Tp_1p_2)$. $\sum_t\mathbb{E}\|\Eb_t\Cb\|_F^2=O(Tp_1p_2)$ holds similarly. \\
	(2). The results hold directly by Assumption E.1.\\
	(3). On one hand, use Assumptions D then we  have
	\[
	\begin{split}
	 &\mathbb{E}\bigg\|\sum_{t=1}^{T}\Eb_t\be_{t,i\cdot}\bigg\|^2=\sum_{i_1}\mathbb{E}\bigg(\sum_t\sum_j(e_{t,ij}e_{t,i_1j}-\mathbb{E}e_{t,ij}e_{t,i_1j})\bigg)^2+\sum_{i_1}\bigg(\sum_t\sum_j\mathbb{E}e_{t,ij}e_{t,i_1j}\bigg)^2\\
	 \le&\sum_{i_1,t,j_1}\sum_{s,j_2}\text{Cov}(e_{t,ij_1}e_{t,i_1j_1},e_{s,ij_2}e_{s,i_1j_2})+\sum_{i_1,t,s,j_1,j_2}|\mathbb{E}e_{t,ij_1}e_{t,i_1j_1}||\mathbb{E}e_{s,ij_2}e_{s,i_1j_2}|\\
	 \le&cTp_1p_2+c\sum_{t,s,j_1,j_2}\bigg(\sum_{i_1}|\mathbb{E}e_{t,ij_1}e_{t,i_1j_1}|\bigg)=O(Tp_1p_2+T^2p_2^2).
	\end{split}
	\]
	On the other hand, assume $k_1=1$ so that
		\[
	\begin{split}
	 &\mathbb{E}\bigg\|\sum_{t=1}^{T}\Rb^\top\Eb_t\be_{t,i\cdot}\bigg\|^2\\
	 =&\mathbb{E}\bigg(\sum_t\sum_{i_1}\sum_j(R_{i_1}e_{t,ij}e_{t,i_1j}-\mathbb{E}R_{i_1}e_{t,ij}e_{t,i_1j})\bigg)^2+\bigg(\sum_t\sum_{i_1}\sum_j\mathbb{E}R_{i_1}e_{t,ij}e_{t,i_1j}\bigg)^2\\
	 \le&\sum_{t,i_1,j_1}\sum_{s,i_2,j_2}|R_{i_1}R_{i_2}|\bigg|\text{Cov}(e_{t,ij_1}e_{t,i_1j_1},e_{s,ij_2}e_{s,i_2j_2})\bigg|+\bigg(\sum_t\sum_{i_1}\sum_j|R_{i_1}|\bigg|\mathbb{E}e_{t,ij}e_{t,i_1j}\bigg|\bigg)^2\\
	\le&O(Tp_1p_2+T^2p_2^2),
	\end{split}
	\]
which concludes the lemma.
\end{proof}

\begin{lemma}\label{lema2}
	Under Assumptions A-F and the sufficient condition, as $\min\{T,p_1,p_2\}\rightarrow\infty$, for $j\le k_2$ we have
	\[
	 \lambda_j(\tilde\Mb_2)=\lambda_j(\bSigma_2)+o_p(1),
	\]
	where $\tilde \Mb_2=(Tp_2)^{-1}\sum_{t=1}^{T}\hat\Zb_t\hat\Zb_t^\top$ with $\hat\Zb_t=p_1^{-1}\Xb_t^\top\hat\Rb$.
\end{lemma}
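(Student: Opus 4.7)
The plan is to combine the decomposition $\tilde\Mb_2=\mathcal{\uppercase\expandafter{\romannumeral1}}+\mathcal{\uppercase\expandafter{\romannumeral2}}+\mathcal{\uppercase\expandafter{\romannumeral3}}+\mathcal{\uppercase\expandafter{\romannumeral4}}$ recorded in (\ref{equa1}) with Weyl's inequality. I will show that the signal term $\mathcal{\uppercase\expandafter{\romannumeral1}}$ has $k_2$ leading eigenvalues converging to $\lambda_1(\bSigma_2),\ldots,\lambda_{k_2}(\bSigma_2)$, while the three noise terms are $o_p(1)$ in spectral norm, so the eigenvalue convergence transfers to $\tilde\Mb_2$.

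For the signal term, the sufficient condition (\ref{c1})(a) together with $p_1^{-1}\Rb^\top\Rb\to\Ib_{k_1}$ (Assumption C) and $\hat\Hb_1\hat\Hb_1^\top\overset{p}{\to}\Ib_{k_1}$ give $p_1^{-1}\hat\Rb^\top\Rb=\hat\Hb_1^\top+o_p(1)$ after writing $\hat\Rb=\Rb\hat\Hb_1+(\hat\Rb-\Rb\hat\Hb_1)$ and using $\|\hat\Rb-\Rb\hat\Hb_1\|_F\lesssim\sqrt{p_1 w_1}$. Taking a product then yields $p_1^{-2}\Rb^\top\hat\Rb\hat\Rb^\top\Rb=\Ib_{k_1}+o_p(1)$. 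Plugging this into $\mathcal{\uppercase\expandafter{\romannumeral1}}$ and using Assumption B to replace $T^{-1}\sum_t\Fb_t^\top\Fb_t$ by $\bSigma_2$ gives $\mathcal{\uppercase\expandafter{\romannumeral1}}=p_2^{-1}\Cb\bSigma_2\Cb^\top+o_p(1)$ in spectral norm. The non-zero eigenvalues of $p_2^{-1}\Cb\bSigma_2\Cb^\top$ coincide with those of $(p_2^{-1}\Cb^\top\Cb)\bSigma_2$, and $p_2^{-1}\Cb^\top\Cb\to\Ib_{k_2}$ by Assumption C, so they converge to $\lambda_j(\bSigma_2)$ for $j\le k_2$.

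For the noise terms, note that $\mathcal{\uppercase\expandafter{\romannumeral3}}=\mathcal{\uppercase\expandafter{\romannumeral2}}^\top$, so it suffices to bound $\mathcal{\uppercase\expandafter{\romannumeral2}}$ and $\mathcal{\uppercase\expandafter{\romannumeral4}}$. Decomposing $\hat\Rb=\Rb\hat\Hb_1+\bDelta$ with $\bDelta=\hat\Rb-\Rb\hat\Hb_1$, the dominant contribution to $\mathcal{\uppercase\expandafter{\romannumeral4}}$ is $(Tp_1^2p_2)^{-1}\sum_t\Eb_t^\top\Rb\hat\Hb_1\hat\Hb_1^\top\Rb^\top\Eb_t$, whose trace is at most $\|\hat\Hb_1\|^2(Tp_1^2p_2)^{-1}\sum_t\|\Eb_t^\top\Rb\|_F^2=O_p(p_1^{-1})$ by Lemma \ref{lema1}(1). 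The cross pieces in $\mathcal{\uppercase\expandafter{\romannumeral4}}$ involving $\bDelta$ are bounded by Cauchy--Schwarz using $p_1^{-1}\|\bDelta\|_F^2=O_p(w_1)$ and the previous bound on $\sum_t\|\Eb_t^\top\Rb\|_F^2$. For $\mathcal{\uppercase\expandafter{\romannumeral2}}$, the dominant piece after using $p_1^{-1}\Rb^\top\Rb=\Ib_{k_1}+o(1)$ reduces to a multiple of $(Tp_1p_2)^{-1}\sum_t\Eb_t^\top\Rb\Fb_t\Cb^\top$, whose Frobenius norm is $O_p(1/\sqrt{Tp_1})$ by Lemma \ref{lema1}(2) and $\|\Cb\|\lesssim\sqrt{p_2}$; the $\bDelta$-pieces are handled via (\ref{c1})(a)--(b), both of which tend to zero.

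Combining, $\|\mathcal{\uppercase\expandafter{\romannumeral2}}+\mathcal{\uppercase\expandafter{\romannumeral3}}+\mathcal{\uppercase\expandafter{\romannumeral4}}\|=o_p(1)$, and Weyl's inequality delivers $\lambda_j(\tilde\Mb_2)=\lambda_j(\mathcal{\uppercase\expandafter{\romannumeral1}})+o_p(1)=\lambda_j(\bSigma_2)+o_p(1)$ for $j\le k_2$. The main obstacle is the bookkeeping for the cross pieces containing $\bDelta$: condition (\ref{c1})(a) only provides a Frobenius-norm rate for $\bDelta$, so each cross piece must be reduced to an inner product that multiplies $\|\bDelta\|_F$ by a moment bound from Lemma \ref{lema1}, and one must verify that the coarser sufficient condition (\ref{c1})(a) alone (without invoking (\ref{c1})(b)) is enough for an $o_p(1)$ spectral-norm bound, rather than the sharper rates needed in Theorem \ref{thm1}.
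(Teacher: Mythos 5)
Your proposal follows essentially the same route as the paper's proof of Lemma \ref{lema2}: the same decomposition (\ref{equa1}), the same identification of term $\mathcal{\uppercase\expandafter{\romannumeral1}}$ with $p_2^{-1}\Cb\bSigma_2\Cb^\top$ up to $o_p(1)$, trace/Cauchy--Schwarz bounds of order $O_p(p_1^{-1}+w_1)$ for term $\mathcal{\uppercase\expandafter{\romannumeral4}}$, and Weyl's inequality to transfer the eigenvalues. Your closing worry is moot: condition (\ref{c1})(b) is part of the sufficient condition assumed in the lemma, and the paper itself invokes it to bound the $\hat\Rb-\Rb\hat\Hb_1$ piece of term $\mathcal{\uppercase\expandafter{\romannumeral2}}$ by $O_p(\sqrt{w_2})$, so there is no need to make (\ref{c1})(a) alone suffice.
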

\begin{proof}
	Recall that  by equation (\ref{equa1}),  $\tilde\Mb_2=\mathcal{\uppercase\expandafter{\romannumeral1}}+\mathcal{\uppercase\expandafter{\romannumeral2}}+\mathcal{\uppercase\expandafter{\romannumeral3}}+\mathcal{\uppercase\expandafter{\romannumeral4}}$.
	Firstly, without loss of generality we assume $k_1=k_2=1$, then
	\begin{small}
	\[
	\begin{split}
	 \mathcal{\uppercase\expandafter{\romannumeral1}}=&\frac{1}{Tp_1^2p_2}\Cb\bigg(\sum_{t=1}^{T}\Fb_t^\top\Rb^\top\hat\Rb\hat\Rb^\top\Rb\Fb_t\bigg)\Cb^\top\\
	=& \frac{1}{Tp_1^2p_2}\Cb\bigg(\sum_{t=1}^{T}\Fb_t^\top\Rb^\top(\Rb\hat\Hb_1+\hat\Rb-\Rb\hat\Hb_1)(\Rb\hat\Hb_1+\hat\Rb-\Rb\hat\Hb_1)^\top\Rb\Fb_t\bigg)\Cb^\top\\
	= &\frac{1}{Tp_1^2p_2}\Cb\bigg(\sum_{t=1}^{T}\Fb_t^\top\Rb^\top\Rb\hat\Hb_1\hat\Hb_1^\top\Rb^\top\Rb\Fb_t\bigg)\Cb^\top+\bigg(\frac{1}{Tp_2}\Cb\sum_{t=1}^{T}\Fb_t^\top\Fb_t\Cb^\top\bigg)\times
	\\
	 &\bigg(\frac{1}{p_1^2}\Rb^\top\Rb\hat\Hb_1(\hat\Rb-\Rb\hat\Hb_1)^\top\Rb+\frac{1}{p_1^2}\Rb^\top(\hat\Rb-\Rb\hat\Hb_1)\hat\Hb_1^\top\Rb^\top\Rb+\frac{1}{p_1^2}\Rb^\top(\hat\Rb-\Rb\hat\Hb_1)(\hat\Rb-\Rb\hat\Hb_1)^\top\Rb\bigg).
	\end{split}
	\]
	\end{small}
Note that $p_1^{-1}\Rb^\top\Rb\rightarrow\Ib_{k_1}$ and $\hat\Hb_1\hat\Hb_1^\top\overset{p}{\rightarrow}\Ib_{k_1}$, then by sufficient condition (\ref{c1}) (a) and Weyl's inequality,  $\lambda_j(\mathcal{\uppercase\expandafter{\romannumeral1}})=\lambda_j(p_2^{-1}\Cb\bSigma_2\Cb^\top)+o_p(1)$ for $j\le k_2$.  The leading $k_2$ eigenvalues of $p_2^{-1}\Cb\bSigma_2\Cb^\top$ are the same as those of $\bSigma_2$ as $p_2\rightarrow\infty$. Hence, $\lambda_j(\mathcal{\uppercase\expandafter{\romannumeral1}})=\lambda_j(\bSigma_2)+o_p(1)$ for $j\le k_2$. Since $\text{rank}(\mathcal{\uppercase\expandafter{\romannumeral1}})\le k_2$, $\lambda_j(\mathcal{\uppercase\expandafter{\romannumeral1}})=0$ for $j>k_2$.
	Secondly, let $k_1=k_2=1$, then
	\[
	\begin{split}
	 \|\mathcal{\uppercase\expandafter{\romannumeral2}}\|\lesssim&\bigg\|\frac{1}{Tp_1p_2}\sum_t\Eb_t^\top\hat\Rb\Fb_t\Cb^\top\bigg\|_F\\
	\lesssim& \bigg\|\frac{1}{Tp_1p_2}\sum_t\Eb_t^\top\Rb\Fb_t\Cb^\top\bigg\|_F+\bigg\|\frac{1}{Tp_1p_2}\sum_t\Eb_t^\top(\hat\Rb-\Rb\hat\Hb_1)\Fb_t\bigg\|_F\|\Cb\|_F
	\\
	 =&O_p\bigg(\frac{1}{\sqrt{Tp_1}}+\sqrt{w_2}\bigg).
	\end{split}
	\]
 Similarly,
 \[
 \|\mathcal{\uppercase\expandafter{\romannumeral3}}\|=O_p\bigg(\frac{1}{\sqrt{Tp_1}}+\sqrt{w_2}\bigg).
 \]
 Lastly, it's not hard that
 \[
 \begin{split}
 \|\mathcal{\uppercase\expandafter{\romannumeral4}}\|\le&\frac{1}{Tp_1^2p_2}\sum_t\|\Eb_t^\top\hat\Rb\|_F^2\lesssim\frac{1}{Tp_1^2p_2}\sum_t\|\Eb_t^\top\Rb\|_F^2+\frac{1}{Tp_1^2p_2}\sum_t\|\Eb_t\|_F^2\|\hat\Rb-\Rb\hat\Hb_1\|_F^2\\
 =&O_p(p_1^{-1}+w_1)=o_p(1).
 \end{split}
 \]
 The lemma then holds by Weyl's theorem.
\end{proof}

\begin{lemma}\label{lema3}
	Under Assumptions A-E and the sufficient condition, as $\min\{T,p_1,p_2\}\rightarrow\infty$,  it holds that
	\[
	 \frac{1}{p_2}\|\mathcal{\uppercase\expandafter{\romannumeral2}}\tilde\Cb\|_F^2=O_p\bigg(\frac{1}{Tp_1}+w_2\bigg),\quad \frac{1}{p_2}\|\mathcal{\uppercase\expandafter{\romannumeral3}}\tilde\Cb\|_F^2=O_p\bigg(\frac{1}{Tp_1}+w_2\bigg).
	\]
\end{lemma}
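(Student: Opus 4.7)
My plan is to attack both terms by the same device: expand $\hat\Rb$ via the identity $\hat\Rb=\Rb\hat\Hb_1+\hat\bD_R$ with $\hat\bD_R:=\hat\Rb-\Rb\hat\Hb_1$, and use $p_1^{-1}\Rb^\top\Rb\to \Ib_{k_1}$ and $\hat\Hb_1\hat\Hb_1^\top\overset{p}{\to}\Ib_{k_1}$ to isolate a single ``clean'' leading piece plus remainders controlled by the sufficient condition. Since $\mathcal{\uppercase\expandafter{\romannumeral3}}$ is the transpose of $\mathcal{\uppercase\expandafter{\romannumeral2}}$ (with the same building blocks on either side), it is enough to handle $\mathcal{\uppercase\expandafter{\romannumeral2}}\tilde\Cb$ carefully and quote the result for $\mathcal{\uppercase\expandafter{\romannumeral3}}\tilde\Cb$.

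Concretely, I write
\[
\hat\Rb\hat\Rb^\top\Rb=\hat\Rb\hat\Hb_1^\top(\Rb^\top\Rb)+\hat\Rb\hat\bD_R^\top\Rb
=p_1\,\hat\Rb\hat\Hb_1^\top+\hat\Rb\hat\Hb_1^\top(\Rb^\top\Rb-p_1\Ib)+\hat\Rb\hat\bD_R^\top\Rb,
\]
so that $\mathcal{\uppercase\expandafter{\romannumeral2}}\tilde\Cb$ splits as
\[
\frac{1}{Tp_1p_2}\sum_t\Eb_t^\top\hat\Rb\hat\Hb_1^\top\Fb_t(\Cb^\top\tilde\Cb)+\text{(two remainders)}.
\]
Expanding $\hat\Rb=\Rb\hat\Hb_1+\hat\bD_R$ once more in the leading piece gives a dominant term
$(Tp_1p_2)^{-1}\sum_t\Eb_t^\top\Rb(\hat\Hb_1\hat\Hb_1^\top)\Fb_t(\Cb^\top\tilde\Cb)$
plus an $\hat\bD_R$-type cross term. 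For the dominant term, Lemma \ref{lema1}(2) yields $\|\sum_t\Eb_t^\top\Rb\Fb_t\|_F=O_p(\sqrt{Tp_1p_2})$, and $\|\Cb^\top\tilde\Cb\|\le \|\Cb\|\|\tilde\Cb\|=O(p_2)$, so after squaring, multiplying by $\|\hat\Hb_1\|^4=O_p(1)$ and dividing by $(Tp_1p_2)^2\cdot p_2$, I obtain $O_p(1/(Tp_1))$ as required.

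For the $\hat\bD_R$-based remainders, I use the sufficient condition (\ref{c1})(b), which gives $\|(Tp_1)^{-1}\sum_t\Eb_t^\top\hat\bD_R\Fb_t\|_F^2=O_p(p_2w_2)$. The typical term to bound is $(Tp_1p_2)^{-1}\sum_t\Eb_t^\top\hat\bD_R\hat\Hb_1^\top\Fb_t(\Cb^\top\tilde\Cb)$; absorbing the $k_1\times k_1$ matrix $\hat\Hb_1^\top$ into the factor (i.e.\ viewing $\hat\Hb_1^\top\Fb_t$ as a linear reshuffling of the rows of $\Fb_t$ with $O_p(1)$ coefficients) the same $O_p(p_2w_2)$ bound applies, and then $\|\Cb^\top\tilde\Cb\|=O(p_2)$ and division by $p_2^2$ yield the $O_p(w_2)$ contribution. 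The remainder coming from $(\Rb^\top\Rb-p_1\Ib)$ is handled identically after noting $\|p_1^{-1}\Rb^\top\Rb-\Ib\|=o(1)$, producing a term of strictly smaller order than the two already bounded.

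The main obstacle is the last step, namely justifying that the $\hat\Hb_1^\top$ sandwiched between $\hat\bD_R$ and $\Fb_t$ inside the sum does not spoil the sufficient condition bound. I plan to handle this by writing $\hat\Hb_1^\top=\sum_{a,b}(\hat\Hb_1^\top)_{ab}\eb_a\eb_b^\top$ so that $\sum_t\Eb_t^\top\hat\bD_R\hat\Hb_1^\top\Fb_t$ becomes a finite linear combination (with $O_p(1)$ coefficients, since $k_1$ is fixed and $\|\hat\Hb_1\|=O_p(1)$) of quantities of the form $\sum_t\Eb_t^\top\hat\bD_R\eb_a\eb_b^\top\Fb_t$. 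Each such quantity shares the structural form bounded by the argument behind (\ref{c1})(b), so the Frobenius-norm bound $O_p(\sqrt{Tp_1}\sqrt{p_2w_2})$ persists up to a fixed constant. Finally, $\mathcal{\uppercase\expandafter{\romannumeral3}}\tilde\Cb=(\tilde\Cb^\top\mathcal{\uppercase\expandafter{\romannumeral2}})^\top$ in spirit (the roles of left and right multiplication by $\Eb_t$ swap), and the same decomposition plus Lemma \ref{lema1}(2) deliver the identical rate $O_p((Tp_1)^{-1}+w_2)$, completing the lemma.
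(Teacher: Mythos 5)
Your overall strategy is the paper's: isolate the clean term $\sum_t\Eb_t^\top\Rb\Fb_t$ (Lemma \ref{lema1}(2)), control the deviation term through the sufficient condition (\ref{c1})(b), and absorb $\Cb^\top\tilde\Cb=O_p(p_2)$ and the $k_1\times k_1$ factors as bounded multipliers; your handling of the dominant term, of the cross term $\sum_t\Eb_t^\top(\hat\Rb-\Rb\hat\Hb_1)\hat\Hb_1^\top\Fb_t\Cb^\top\tilde\Cb$, of the sandwiched $\hat\Hb_1$ (elementwise expansion, where the paper simply invokes its standing $k_1=k_2=1$ convention), and your reduction of the third term to the transposed argument are all fine. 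The gap sits in your first decomposition. Writing $\hat\Rb\hat\Rb^\top\Rb=p_1\hat\Rb\hat\Hb_1^\top+\hat\Rb\hat\Hb_1^\top(\Rb^\top\Rb-p_1\Ib_{k_1})+\hat\Rb(\hat\Rb-\Rb\hat\Hb_1)^\top\Rb$ produces the remainder
\[
\frac{1}{Tp_1^2p_2}\sum_{t=1}^T\Eb_t^\top\hat\Rb\,(\hat\Rb-\Rb\hat\Hb_1)^\top\Rb\,\Fb_t\,\Cb^\top\tilde\Cb,
\]
in which the loading error is \emph{not} adjacent to $\Eb_t$. The sufficient condition (\ref{c1})(b) only controls $\sum_t\Eb_t^\top(\hat\Rb-\Rb\hat\Hb_1)\Fb_t$, so your plan ``apply (\ref{c1})(b) to the error-based remainders'' does not cover this term, and it is not of strictly smaller order for free: a naive Cauchy--Schwarz over $t$, using $\|(\hat\Rb-\Rb\hat\Hb_1)^\top\Rb\|=O_p(p_1\sqrt{w_1})$, contributes a term of order $w_1/p_1$ (and worse) to $p_2^{-1}\|\cdot\|_F^2$, which is not $O_p\{(Tp_1)^{-1}+w_2\}$ under the sufficient condition alone (e.g.\ when $w_1$ decays slowly and $w_2$ fast, or even for the initial-estimator rates when $T\gg p_1^2$).

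The term is controllable, but it needs an extra round of the same machinery that your write-up omits: expand the remaining $\Eb_t^\top\hat\Rb=\Eb_t^\top\Rb\hat\Hb_1+\Eb_t^\top(\hat\Rb-\Rb\hat\Hb_1)$ inside this remainder and apply Lemma \ref{lema1}(2) (or Assumption E.1) and (\ref{c1})(b) once more; the coefficient $\|(\hat\Rb-\Rb\hat\Hb_1)^\top\Rb\|/p_1=O_p(\sqrt{w_1})=o_p(1)$ then yields a contribution of order $w_1\{(Tp_1)^{-1}+w_2\}=o_p\{(Tp_1)^{-1}+w_2\}$. The paper avoids creating this term altogether: working under its scalar convention it pulls $\hat\Rb^\top\Rb/p_1$ and $\Cb^\top\tilde\Cb/p_2$ out of the time sum as $O_p(1)$ factors \emph{without} decomposing $\hat\Rb^\top\Rb$, and splits only the copy of $\hat\Rb$ adjacent to $\Eb_t^\top$ (as $\hat\Rb\hat\Hb_1^{-1}-\Rb+\Rb$), so exactly the two pieces you know how to bound appear. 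Either adopt that rearrangement or add the re-expansion argument above; as written, your proposal leaves the displayed remainder unbounded.
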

\begin{proof}
	Firstly, with $k_1=k_2=1$, we have
	\[
	 \mathcal{\uppercase\expandafter{\romannumeral2}}\tilde\Cb=\frac{\hat\Hb_1\hat\Rb^\top\Rb}{p_1}\frac{\Cb^\top\tilde\Cb}{p_2}\frac{1}{Tp_1}\sum_{t=1}^{T}\Eb_t^\top(\hat\Rb\hat\Hb_1^{-1}-\Rb+\Rb)\Fb_t.
	\]
	By Lemma \ref{lema1} (2),
	\[
	 \frac{1}{p_2}\bigg\|\frac{1}{Tp_1}\sum_{t=1}^{T}\Eb_t^\top\Rb\Fb_t\bigg\|_F^2=O_p\bigg(\frac{1}{Tp_1}\bigg).
	\]
	Hence, by sufficient condition (\ref{c1}) (b) and  triangular inequality,
	\[
	 \frac{1}{p_2}\|\mathcal{\uppercase\expandafter{\romannumeral2}}\tilde\Cb\|_F^2=O_p\bigg(\frac{1}{Tp_1}+w_2\bigg).
	\]
	The proof of $\mathcal{\uppercase\expandafter{\romannumeral3}}$ is similar and omitted here.
\end{proof}

\begin{lemma}\label{lema4}
	Under Assumptions A-F and the sufficient condition, as $\min\{T,p_1,p_2\}\rightarrow\infty$, it holds that
	\[
	\frac{1}{p_2}	 \|\mathcal{\uppercase\expandafter{\romannumeral4}\tilde\Cb}\|_F^2\lesssim\frac{1}{Tp_1p_2}+\frac{1}{p_1^2p_2^2}+w_1^2\times\bigg(\frac{1}{p_2^2}+\frac{1}{Tp_2}\bigg)+o_p(1)\times \frac{1}{p_2}\|\tilde\Cb-\Cb\tilde\Hb_2\|_F^2.
	\]
\end{lemma}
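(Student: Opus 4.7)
The strategy is a two-level ``truth-plus-deviation'' expansion. Set $a:=\hat\Rb-\Rb\hat\Hb_1$, so that
$\hat\Rb\hat\Rb^\top=\Rb\hat\Hb_1\hat\Hb_1^\top\Rb^\top+\Rb\hat\Hb_1 a^\top+a\hat\Hb_1^\top\Rb^\top+aa^\top$,
which decomposes $\mathcal{\uppercase\expandafter{\romannumeral4}}\tilde\Cb$ into a ``pure'' block $T_1$ free of $a$, two ``cross'' blocks $T_2,T_3$ each carrying one copy of $a$, and a ``residual'' block $T_4$ carrying $aa^\top$. Within each block I further split $\tilde\Cb=\Cb\tilde\Hb_2+(\tilde\Cb-\Cb\tilde\Hb_2)$; the piece paired with $\Cb\tilde\Hb_2$ is a deterministic functional of the noise and will be controlled by direct moment calculation, while the piece paired with $(\tilde\Cb-\Cb\tilde\Hb_2)$ will be converted into a multiple of $\|\tilde\Cb-\Cb\tilde\Hb_2\|_F^2$ via operator-norm bounds, feeding the self-referencing term on the right-hand side of the lemma.

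For the dominant block $T_1$ paired with $\Cb\tilde\Hb_2$, since $\|\hat\Hb_1\|$ and $\|\tilde\Hb_2\|$ are $O_p(1)$, it suffices to bound $p_2^{-1}\|(Tp_1^2p_2)^{-1}\sum_t\Eb_t^\top\Rb\Rb^\top\Eb_t\Cb\|_F^2$. I split this as $\|\mathbb{E}[\cdot]\|_F^2/p_2+\|(\cdot)-\mathbb{E}[\cdot]\|_F^2/p_2$. The expectation is controlled entry-wise by Assumption D.2(2): for fixed $(i_1,j_1)$ the sum $\sum_{i_2,j_2}|\mathbb{E}e_{t,i_1j_1}e_{t,i_2j_2}|$ is bounded, so each entry of $\sum_t\mathbb{E}[\Eb_t^\top\Rb\Rb^\top\Eb_t\Cb]$ is $O(Tp_1)$ and its squared Frobenius norm is $O(T^2p_1^2p_2)$, giving the $1/(p_1^2p_2^2)$ contribution after normalization. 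The centered fluctuation has variance controlled by Assumption D.3: expanding term-by-term and matching indices to the four covariance-of-products bounds produces a total variance $O(Tp_1^2p_2)$, which gives the $1/(Tp_1p_2)$ contribution. For the complementary piece $T_1$ paired with $(\tilde\Cb-\Cb\tilde\Hb_2)$, I use $\|(Tp_1^2p_2)^{-1}\sum_t\Eb_t^\top\Rb\Rb^\top\Eb_t\|_{\mathrm{op}}\le(Tp_1^2p_2)^{-1}\sum_t\|\Eb_t^\top\Rb\|_F^2=O_p(1/p_1)=o_p(1)$ by Lemma \ref{lema1}(1), delivering the desired $o_p(1)\cdot p_2^{-1}\|\tilde\Cb-\Cb\tilde\Hb_2\|_F^2$ contribution.

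The cross and residual blocks $T_2,T_3,T_4$ use sufficient condition (\ref{c1})(a), $\|a\|_F^2=O_p(p_1 w_1)$. Applying the matrix Cauchy--Schwarz $\|\sum_t X_tY_t\|_F\le(\sum_t\|X_t\|_F^2)^{1/2}(\sum_t\|Y_t\|_F^2)^{1/2}$ together with Lemma \ref{lema1}(1), the piece of $T_2$ paired with $\Cb\tilde\Hb_2$ is at most $(Tp_1^2p_2)^{-1}\|a\|_{\mathrm{op}}\sqrt{\sum_t\|\Eb_t^\top\Rb\|_F^2}\sqrt{\sum_t\|\Eb_t\Cb\|_F^2}=O_p(\sqrt{w_1/p_1})$, so its squared norm divided by $p_2$ is $O_p(w_1/(p_1p_2))$, which is absorbed into $w_1^2/(Tp_2)+1/(p_1^2p_2^2)$ under $w_1=o(1)$ and the rates provided by Theorem \ref{thm3}. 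The analogous split for $T_4$ produces a contribution of order $w_1^2$ (since $\|a\|_F^2=O_p(p_1 w_1)$ appears squared), matching $w_1^2(p_2^{-2}+(Tp_2)^{-1})$. The portions of $T_2,T_3,T_4$ paired with $(\tilde\Cb-\Cb\tilde\Hb_2)$ are bounded by the same operator-norm strategy as in the previous paragraph and contribute further $o_p(1)\cdot p_2^{-1}\|\tilde\Cb-\Cb\tilde\Hb_2\|_F^2$ terms. The main technical obstacle is the variance expansion in the centered fluctuation of $T_1$: the resulting eight-fold sum over $(s,t,i_1,i_2,l_1,l_2,j_2,h_2)$ must be carefully reorganized by the coincidence pattern of indices so that each summand is matched to exactly one of the four bounds supplied by Assumption D.3; once that combinatorial bookkeeping is done, the remaining estimates are routine applications of Cauchy--Schwarz and Lemma \ref{lema1}.
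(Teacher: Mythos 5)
Your overall architecture coincides with the paper's: the same expansion of $\hat\Rb\hat\Rb^\top$ around $\Rb\hat\Hb_1$, the same split of $\tilde\Cb$ into $\Cb\tilde\Hb_2+(\tilde\Cb-\Cb\tilde\Hb_2)$, the same expectation-plus-fluctuation treatment of the pure block via Assumptions D.2--D.3, and the same operator-norm trick to generate the $o_p(1)\cdot p_2^{-1}\|\tilde\Cb-\Cb\tilde\Hb_2\|_F^2$ term. The genuine gap is in the cross and residual blocks. Bounding them by per-$t$ Cauchy--Schwarz together with Lemma \ref{lema1}(1) only yields, after squaring and normalizing, $O_p\big(w_1/(p_1p_2)\big)$ for the cross piece and $O_p\big(w_1^2/p_2\big)$ (not $w_1^2\{p_2^{-2}+(Tp_2)^{-1}\}$, which you assert without derivation) for the $aa^\top$ piece. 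Neither is dominated by the right-hand side of the lemma for generic $w_1$, and appealing to the Theorem \ref{thm3} rates is not legitimate here since the lemma must hold for any $w_1$ satisfying the sufficient condition (and is used that way inside Theorem \ref{thm1}); even with $w_1\asymp p_1^{-2}+(Tp_2)^{-1}$, the term $w_1/(p_1p_2)$ contains $1/(p_1^3p_2)$, which exceeds both $1/(Tp_1p_2)$ and $1/(p_1^2p_2^2)$ as soon as $T\gg p_1^2$ and $p_2\gg p_1$. The missing idea is that the cross and residual blocks also require concentration over $t$ rather than a per-$t$ norm bound: the paper pulls out $\|\hat\Rb-\Rb\hat\Hb_1\|_F^2$ entrywise and then uses the uniform moment bounds $\mathbb{E}\|\sum_t e_{t,ij}\Rb^\top\Eb_t\Cb\|^2=O(Tp_1p_2+T^2)$ and $\mathbb{E}\|\sum_t\Cb^\top\be_{t,i\cdot}\be_{t,\cdot j}^\top\|^2=O(Tp_1p_2+T^2p_1)$, obtained from Assumptions D.3(2) and D.2(1)--(2); this is what produces the absorbable contributions $w_1\{(Tp_1p_2)^{-1}+(p_1^2p_2^2)^{-1}\}+w_1^2\{p_2^{-2}+(Tp_2)^{-1}\}$, whereas your bounds are larger by factors of order $T$ or $p_2$.

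A smaller internal inconsistency: for the centered fluctuation of the pure block you quote a total variance $O(Tp_1^2p_2)$ and claim it ``gives the $1/(Tp_1p_2)$ contribution,'' but under the normalization $(Tp_1^2p_2)^{-2}p_2^{-1}$ that figure would give $1/(Tp_1^2p_2^2)$; the bound actually obtainable from Assumption D.3(2), after the reduction $\mathbb{E}\|\sum_t\Eb_t^\top\Rb\Rb^\top\Eb_t\Cb\|_F^2\le\|\Rb\|_F^2\sum_{i,j}\mathbb{E}\|\sum_t\Rb^\top\be_{t,\cdot j}\be_{t,i\cdot}^\top\Cb\|^2$, is $O(Tp_1^3p_2^2+T^2p_1^2p_2)$, which is what yields $1/(Tp_1p_2)+1/(p_1^2p_2^2)$. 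This slip does not affect the final rate for that block, but it indicates the eight-fold index bookkeeping you defer was not actually carried out, and it is precisely that bookkeeping (and its analogue for the cross/residual blocks) that the proof needs.
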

\begin{proof}
	Some simple calculations lead to
	\[
	\begin{split}
	 \mathcal{\uppercase\expandafter{\romannumeral4}}\tilde\Cb=&\frac{1}{Tp_1^2p_2}\sum_{t=1}^{T}\Eb_t^\top\hat\Rb\hat\Rb^\top\Eb_t\tilde\Cb\\
	 =&\frac{1}{Tp_1^2p_2}\bigg(\sum_{t=1}^{T}\Eb_t^\top\Rb\hat\Hb_1\hat\Hb_1^\top\Rb^\top\Eb_t\Cb\tilde\Hb_2+\sum_{t=1}^{T}\Eb_t^\top\Rb\hat\Hb_1\hat\Hb_1^\top\Rb^\top\Eb_t(\tilde\Cb-\Cb\tilde\Hb_2)\\
	 &+\sum_{t=1}^{T}\Eb_t^\top(\hat\Rb-\Rb\hat\Hb_1)\hat\Hb_1^\top\Rb^\top\Eb_t\tilde\Cb+\sum_{t=1}^{T}\Eb_t^\top\hat\Rb(\hat\Rb-\Rb\hat\Hb_1)^\top\Eb_t\tilde\Cb\bigg)\\
	 :=&\mathcal{\uppercase\expandafter{\romannumeral5}}+\mathcal{\uppercase\expandafter{\romannumeral6}}+\mathcal{\uppercase\expandafter{\romannumeral7}}+\mathcal{\uppercase\expandafter{\romannumeral8}}.
	\end{split}
	\]
	For $\mathcal{\uppercase\expandafter{\romannumeral5}}$, note that
	\[
	 \mathbb{E}\bigg\|\sum_{t=1}^{T}\Eb_t^\top\Rb\Rb^\top\Eb_t\Cb\bigg\|_F^2\le \|\Rb\|_F^2\sum_{i,j}\mathbb{E}\bigg\|\sum_t\Rb^\top\be_{t,\cdot j}\be_{t,i\cdot}^\top\Cb\bigg\|^2,
	\]
	while for any $i,j$,
\begin{equation}\label{equa3}
\begin{split}
	 &\mathbb{E}\bigg\|\sum_t\Rb^\top\be_{t,\cdot j}\be_{t,i\cdot}^\top\Cb\bigg\|^2\\
	\lesssim& \mathbb{E}\bigg\|\sum_t(\Rb^\top\be_{t,\cdot j}\be_{t,i\cdot}^\top\Cb-\mathbb{E}\Rb^\top\be_{t,\cdot j}\be_{t,i\cdot}^\top\Cb)\bigg\|^2+\bigg\|\sum_t\mathbb{E}\Rb^\top\be_{t,\cdot j}\be_{t,i\cdot}^\top\Cb\bigg\|^2\\
	\lesssim& \sum_{t,i_1,j_1}\sum_{s,i_2,j_2}\bigg|\text{Cov}(e_{t,i_1j}e_{t,ij_1},e_{s,i_2j}e_{s,ij_2})\bigg|+\bigg(\sum_t\sum_{i_1}\sum_{j_1}\bigg|\mathbb{E}e_{t,i_1j}e_{t,ij_1}\bigg|\bigg)^2\\
	\le &O(Tp_1p_2+T^2).\quad\text{ \Big(by Assumptions D.3(2) and D.2(2)\Big)}
\end{split}
\end{equation}
	Hence,
	\[
	 \|\mathcal{\uppercase\expandafter{\romannumeral5}}\|_F^2=O_p\bigg(\frac{1}{Tp_1}+\frac{1}{p_1^2p_2}\bigg).
	\]
	For $\mathcal{\uppercase\expandafter{\romannumeral6}}$, we have
	\[
	 \|\mathcal{\uppercase\expandafter{\romannumeral6}}\|_F^2\lesssim \frac{c}{T^2p_1^4p_2^2}\bigg(\sum_t\|\Eb_t^\top\Rb\|_F^2\bigg)\times \bigg(\sum_t\|\Eb_t^\top\Rb\|_F^2\bigg)\|\tilde\Cb-\Cb\tilde\Hb_2\|_F^2\le\frac{c}{p_1^2}\|\tilde\Cb-\Cb\tilde\Hb_2\|_F^2.
	\]
	Next, for $\mathcal{\uppercase\expandafter{\romannumeral7}}$ we have
	\[
	\begin{split}
	 &\|\mathcal{\uppercase\expandafter{\romannumeral7}}\|_F^2\\
	 \le& \frac{c}{T^2p_1^4p_2^2}\bigg(\bigg\|\sum_t\Eb_t^\top(\hat\Rb-\Rb\hat\Hb_1)\Rb^\top\Eb_t\Cb\bigg\|_F^2+\sum_t\|\Eb_t^\top(\hat\Rb-\Rb\hat\Hb_1)\|_F^2\sum_t\|\Rb^\top\Eb_t\|_F^2\|\tilde\Cb-\Cb\tilde\Hb_2\|_F^2\bigg)\\
	 \lesssim&\frac{1}{T^2p_1^4p_2^2}\bigg\|\sum_t\Eb_t^\top(\hat\Rb-\Rb\hat\Hb_1)\Rb^\top\Eb_t\Cb\bigg\|_F^2+\frac{w_1}{p_1}\times\|\tilde\Cb-\Cb\tilde\Hb_2\|_F^2.
	\end{split}
	\]
	Assume $k_1=k_2=1$, then for the first term
	\[
	\begin{split}
	 \bigg\|\sum_t\Eb_t^\top(\hat\Rb-\Rb\hat\Hb_1)\Rb^\top\Eb_t\Cb\bigg\|_F^2\le&\sum_{i,j}\bigg\|\sum_{t}e_{t,ij}\Rb^\top\Eb_t\Cb\bigg\|_F^2\|\hat\Rb-\Rb\hat\Hb_1\|_F^2,
	\end{split}
	\]
	while for any $i,j$,
	\[
	\begin{split}
	 &\mathbb{E}\bigg\|\sum_{t}e_{t,ij}\Rb^\top\Eb_t\Cb\bigg\|_F^2\\
	 \le&\sum_{t,i_1,j_1}\sum_{s,i_2,j_2}\bigg|\text{Cov}(e_{t,ij}e_{t,i_1j_1},e_{s,ij}e_{s,i_2j_2})\bigg|+\bigg\|\sum_{t}\mathbb{E}e_{t,ij}\Rb^\top\Eb_t\Cb\bigg\|_F^2\\
	\lesssim& Tp_1p_2+\bigg(\sum_t\sum_{i_1,j_1}\bigg|\mathbb{E}e_{t,ij}e_{t,i_1j_1}\bigg|\bigg)^2\quad \Big(\text{by Assumption D.3(2)}\Big)\\
	=&O(Tp_1p_2+T^2).\quad \Big(\text{by Assumption D.2(1)}\Big)
	\end{split}
	\]
	Hence,
	\[
	 \|\mathcal{\uppercase\expandafter{\romannumeral7}}\|_F^2=w_1\times \bigg(\frac{1}{Tp_1}+\frac{1}{p_1^2p_2}\bigg)+\frac{w_1}{p_1}\times\|\tilde\Cb-\Cb\tilde\Hb_2\|_F^2.
	\]
	For  $\mathcal{\uppercase\expandafter{\romannumeral8}}$,
	\[
	\begin{split}
	 \|\mathcal{\uppercase\expandafter{\romannumeral8}}\|_F^2\le& \frac{c}{T^2p_1^4p_2^2}\bigg(\sum_t\|\Eb_t^\top\hat\Rb\|_F^2\bigg)\times \sum_t\|(\hat\Rb-\Rb\hat\Hb_1)^\top\Eb_t\|_F^2\|\tilde\Cb-\Cb\tilde\Hb_2\|_F^2\\
	 &+\frac{c}{T^2p_1^4p_2^2}\bigg\|\sum_t\Eb_t^\top\hat\Rb(\hat\Rb-\Rb\hat\Hb_1)^\top \Eb_t\Cb\bigg\|_F^2\\
	 \lesssim&\bigg(\frac{1}{p_1}+w_1\bigg)\times w_1\times \|\tilde\Cb-\Cb\tilde\Hb_2\|_F^2+\frac{1}{T^2p_1^4p_2^2}\bigg\|\sum_t\Eb_t^\top\hat\Rb(\hat\Rb-\Rb\hat\Hb_1)^\top \Eb_t\Cb\bigg\|_F^2.
	\end{split}
	\]
	Again let's assume $k_1=k_2=1$, then
	\[
	\begin{split}
	 &\frac{1}{T^2p_1^4p_2^2}\bigg\|\sum_t\Eb_t^\top\hat\Rb(\hat\Rb-\Rb\hat\Hb_1)^\top \Eb_t\Cb\bigg\|_F^2\le\frac{\|\hat\Rb-\Rb\hat\Hb_1\|_F^2}{T^2p_1^4p_2^2}\sum_{i=1}^{p_1}\sum_{j=1}^{p_2}\bigg\|\sum_{t}\Cb^\top\be_{t,i\cdot}\be_{t,\cdot j}^\top\hat\Rb\bigg\|_F^2\\
	 \lesssim&\frac{w_1}{T^2p_1^3p_2^2}\sum_{i=1}^{p_1}\sum_{j=1}^{p_2}\bigg\|\sum_{t}\Cb^\top\be_{t,i\cdot}\be_{t,\cdot j}^\top\Rb\bigg\|_F^2+\frac{w_1^2}{T^2p_1^2p_2^2}\sum_{i=1}^{p_1}\sum_{j=1}^{p_2}\bigg\|\sum_{t}\Cb^\top\be_{t,i\cdot}\be_{t,\cdot j}^\top\bigg\|_F^2.
	\end{split}
	\]
	By equation (\ref{equa3}) we have for any $i,j$, $\mathbb{E}\|\sum_{t}\Cb^\top\be_{t,i\cdot}\be_{t,\cdot j}^\top\Rb\|_F^2= O(Tp_1p_2+T^2)$, while
	\[
	\begin{split}
	 &\mathbb{E}\bigg\|\sum_{t}\Cb^\top\be_{t,i\cdot}\be_{t,\cdot j}^\top\bigg\|_F^2\le\sum_{i_1=1}^{p_1}\mathbb{E}
	 \bigg(\sum_t\sum_{j_1}C_{j_1}e_{t,ij_1}e_{t,i_1j}\bigg)^2\\
	 \le&\sum_{i_1}\sum_{t,j_1}\sum_{s,j_2}\bigg|\text{Cov}(e_{t,i_1j}e_{t,ij_1},e_{s,i_1j}e_{s,ij_2})\bigg|+\sum_{i_1}
	 \bigg(\sum_t\sum_{j_1}\bigg|\mathbb{E}e_{t,ij_1}e_{t,i_1j}\bigg|\bigg)^2\\
	\le &O(Tp_1p_2+T^2p_1).\quad \Big(\text{by Assumptions D.3(2) and D.2(2)}\Big)
	\end{split}
	\]
	Consequently,
	\[
	 \|\mathcal{\uppercase\expandafter{\romannumeral8}}\|_F^2\lesssim w_1\times\bigg(\frac{1}{Tp_1}+\frac{1}{p_1^2p_2}\bigg)+w_1^2\times\bigg(\frac{1}{p_2}+\frac{1}{T}\bigg)+o_p(1)\times \|\tilde\Cb-\Cb\tilde\Hb_2\|_F^2,
	\]
	Combine the above results so that
	\[
	\frac{1}{p_2}	 \|\mathcal{\uppercase\expandafter{\romannumeral4}\tilde\Cb}\|_F^2\lesssim\frac{1}{Tp_1p_2}+\frac{1}{p_1^2p_2^2}+w_1^2\times\bigg(\frac{1}{p_2^2}+\frac{1}{Tp_2}\bigg)+o_p(1)\times \frac{1}{p_2}\|\tilde\Cb-\Cb\tilde\Hb_2\|_F^2.
	\]
	and the lemma holds.
\end{proof}

\section{Proof of Theorem \ref{thm3}: verifying the sufficient condition for initial estimator}\label{secb}
\begin{proof}
We only need to prove the results for $\hat\Rb$ because  $\hat\Cb$ is estimated by a parallel procedure.   Expand $\Xb_t=\Rb\Fb_t\Cb^\top$ in $\hat\Mb_1$, then
\begin{equation}\label{equb1}
\begin{split}
\hat\Mb_1
=&\frac{1}{Tp_1p_2}\sum_{t=1}^{T}(\Rb\Fb_t\Cb^\top+\Eb_t)(\Rb\Fb_t\Cb^\top+\Eb_t)^\top\\
=&\frac{1}{Tp_1p_2}\bigg(\sum_{t=1}^{T}\Rb\Fb_t\Cb^\top\Cb\Fb_t^\top\Rb^\top+\sum_{t=1}^{T}\Rb\Fb_t\Cb^\top\Eb_t^\top+\sum_{t=1}^{T}\Eb_t\Cb\Fb_t^\top\Rb^\top+\sum_{t=1}^{T}\Eb_t\Eb_t^\top\bigg)\\
:=&\mathcal{\uppercase\expandafter{\romannumeral1}}+\mathcal{\uppercase\expandafter{\romannumeral2}}+\mathcal{\uppercase\expandafter{\romannumeral3}}+\mathcal{\uppercase\expandafter{\romannumeral4}}.
\end{split}
\end{equation}

Define $\hat\bLambda_1$ as the $k_1\times k_1$ diagonal matrix with $\hat\Lambda_{1,jj}=\lambda_j(\hat\Mb_1)$, then
\[
\hat\Rb\hat\bLambda_1=\hat\Mb_1\hat\Rb.
\]
Further define $\hat\Hb_1=(Tp_1p_2)^{-1}\sum_{t=1}^{T}\Fb_t\Cb^\top\Cb\Fb_t^\top\Rb^\top\hat\Rb\hat\bLambda_1^{-1}$, then
\begin{equation}\label{equb2}
\hat \Rb-\Rb\hat\Hb_1=(\mathcal{\uppercase\expandafter{\romannumeral2}}+\mathcal{\uppercase\expandafter{\romannumeral3}}+\mathcal{\uppercase\expandafter{\romannumeral4}})\hat\Rb\hat\bLambda_1^{-1}.
\end{equation}
We will show that the diagonal entries of $\hat\bLambda_1$ converge to some positive constants in Lemma \ref{lemb1}. Hence,  by the fact that $T^{-1}\sum_t\Fb_t\Fb_t^\top\overset{p}{\rightarrow}\bSigma_1$, $p_2^{-1}\Cb^\top\Cb\rightarrow\Ib_{k_2}$, $\|\Rb\|_F^2\asymp\|\hat\Rb\|_F^2\asymp p_1$,  we have $\|\hat\Hb_1\|=O_p(1)$. Next, detailed calculations of  $\mathcal{\uppercase\expandafter{\romannumeral2}},\mathcal{\uppercase\expandafter{\romannumeral3}}$ and $\mathcal{\uppercase\expandafter{\romannumeral4}}$ in Lemma \ref{lemb2} lead to
\[
		\begin{split}
\frac{1}{p_1}\|\mathcal{\uppercase\expandafter{\romannumeral2}}\hat\Rb\|_F^2=& O_p\bigg(\frac{1}{Tp_1p_2}\bigg)+o_p(1)\times \frac{1}{p_1}\|\hat\Rb-\Rb\hat\Hb_1\|_F^2,\\
\frac{1}{p_1}\|\mathcal{\uppercase\expandafter{\romannumeral3}}\hat\Rb\|_F^2=&O_p\bigg(\frac{1}{Tp_2}\bigg),\\
\frac{1}{p_1}\|\mathcal{\uppercase\expandafter{\romannumeral4}}\hat\Rb\|_F^2=& O_p\bigg(\frac{1}{p_1^2}+\frac{1}{Tp_1p_2}\bigg)+o_p(1)\times \frac{1}{p_1}\|\hat\Rb-\Rb\hat\Hb_1\|_F^2.		
\end{split}
\]
 Hence, the convergence rate of $w_1$  follows. The rate of $w_2$ is verified separately in Lemma \ref{lemb3} using similar technique.

To complete the proof, it remains to show that  $\hat\Hb_1^\top\hat\Hb_1\overset{p}{\rightarrow}\Ib_{k_1}$. By Cauchy-Schwartz inequality,
\[
\bigg\|\frac{1}{p_1}\Rb^\top(\hat\Rb-\Rb\hat\Hb_1)\bigg\|_F^2\le\frac{\|\Rb\|_F^2}{p_1}\frac{\|\hat\Rb-\Rb\hat\Hb_1\|_F^2}{p_1}=o_p(1)\quad\text{and}\quad  \bigg\|\frac{1}{p_1}\hat\Rb^\top(\hat\Rb-\Rb\hat\Hb_1)\bigg\|_F^2=o_p(1).
\]
Note that $p_1^{-1}\hat\Rb^\top\hat\Rb=\Ib_r$ and $p_1^{-1}\Rb^\top\Rb\rightarrow\Ib_{k_1}$, then
\[
\Ib_{k_1}=\frac{1}{p_1}\hat\Rb^\top\Rb\hat\Hb_1+{\bf o_p(1)}=\hat\Hb_1^\top\hat\Hb_1+{\bf o_p(1)},
\]
which concludes Theorem \ref{thm3}.
\end{proof}

Now we show the detailed proofs of the mentioned lemmas.

	\begin{lemma}\label{lemb1}
		Under Assumptions A, B and C, as $\min\{T,p_1,p_2\}\rightarrow\infty$,
		\[
		\lambda_j(\hat\Mb_1)=\left\{
		\begin{aligned}
		&\lambda_j(\bSigma_1)+o_p(1),& j\le k_1,\\
		 &O_p\bigg(\frac{1}{\sqrt{Tp_2}}+\frac{1}{p_1}\bigg),&j>k_1.\\
		\end{aligned}
		\right..
		\]
	\end{lemma}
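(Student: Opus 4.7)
The plan is to apply Weyl's inequality to the four-term decomposition $\hat\Mb_1=\mathcal{\uppercase\expandafter{\romannumeral1}}+\mathcal{\uppercase\expandafter{\romannumeral2}}+\mathcal{\uppercase\expandafter{\romannumeral3}}+\mathcal{\uppercase\expandafter{\romannumeral4}}$ already exhibited in \eqref{equb1}, after (i) identifying the limiting spectrum of the rank-$k_1$ signal term $\mathcal{\uppercase\expandafter{\romannumeral1}}$ and (ii) producing sharp operator-norm bounds on the remaining three perturbations. Specifically, I will show that $\lambda_j(\mathcal{\uppercase\expandafter{\romannumeral1}})\to\lambda_j(\bSigma_1)$ for $j\le k_1$ and $\lambda_j(\mathcal{\uppercase\expandafter{\romannumeral1}})=0$ for $j>k_1$, while $\|\mathcal{\uppercase\expandafter{\romannumeral2}}\|+\|\mathcal{\uppercase\expandafter{\romannumeral3}}\|+\|\mathcal{\uppercase\expandafter{\romannumeral4}}\|=O_p((Tp_2)^{-1/2}+p_1^{-1})$. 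Weyl's inequality then gives the stated consistency and noise rate simultaneously.

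For the signal term, I would rewrite $\mathcal{\uppercase\expandafter{\romannumeral1}}=\frac{\Rb}{\sqrt{p_1}}\Big(\frac{1}{T}\sum_t\Fb_t\frac{\Cb^\top\Cb}{p_2}\Fb_t^\top\Big)\frac{\Rb^\top}{\sqrt{p_1}}$. Since the nonzero eigenvalues of $\mathcal{\uppercase\expandafter{\romannumeral1}}$ coincide with those of the $k_1\times k_1$ matrix $\big(\frac{1}{T}\sum_t\Fb_t\frac{\Cb^\top\Cb}{p_2}\Fb_t^\top\big)\frac{\Rb^\top\Rb}{p_1}$, Assumption C ($p_1^{-1}\Rb^\top\Rb\to\Ib_{k_1}$, $p_2^{-1}\Cb^\top\Cb\to\Ib_{k_2}$) and Assumption B ($T^{-1}\sum_t\Fb_t\Fb_t^\top\xrightarrow{p}\bSigma_1$) yield $\lambda_j(\mathcal{\uppercase\expandafter{\romannumeral1}})\xrightarrow{p}\lambda_j(\bSigma_1)$ for $j\le k_1$. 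The vanishing of $\lambda_j(\mathcal{\uppercase\expandafter{\romannumeral1}})$ for $j>k_1$ is automatic from $\mathrm{rank}(\mathcal{\uppercase\expandafter{\romannumeral1}})\le k_1$.

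For the cross terms, since $\mathcal{\uppercase\expandafter{\romannumeral3}}=\mathcal{\uppercase\expandafter{\romannumeral2}}^\top$, it suffices to bound $\mathcal{\uppercase\expandafter{\romannumeral2}}$. I would invoke part (2) of Lemma \ref{lema1} to get $\mathbb{E}\|\sum_t\Fb_t\Cb^\top\Eb_t^\top\|_F^2=O(Tp_1p_2)$, whence $\|\mathcal{\uppercase\expandafter{\romannumeral2}}\|\le\|\mathcal{\uppercase\expandafter{\romannumeral2}}\|_F\le\frac{\|\Rb\|}{Tp_1p_2}\|\sum_t\Fb_t\Cb^\top\Eb_t^\top\|_F=O_p((Tp_2)^{-1/2})$ because $\|\Rb\|\asymp\sqrt{p_1}$. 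For the pure-noise term $\mathcal{\uppercase\expandafter{\romannumeral4}}$, I would split $\mathcal{\uppercase\expandafter{\romannumeral4}}=\mathbb{E}\mathcal{\uppercase\expandafter{\romannumeral4}}+(\mathcal{\uppercase\expandafter{\romannumeral4}}-\mathbb{E}\mathcal{\uppercase\expandafter{\romannumeral4}})$. The max-absolute-row-sum bound together with Assumption D.2(1) restricted to $s=t$ gives $\max_{i_1}\sum_{i_2}|(\mathbb{E}\mathcal{\uppercase\expandafter{\romannumeral4}})_{i_1i_2}|\le(Tp_1p_2)^{-1}\cdot Tp_2\cdot c=c/p_1$, hence $\|\mathbb{E}\mathcal{\uppercase\expandafter{\romannumeral4}}\|=O(p_1^{-1})$. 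For the fluctuation, Assumption D.3(1) (restricted by fixing $l_2=i_2$) yields $\sum_{t_2,k_2}|\mathrm{Cov}(e_{t_1,i_1k_1}e_{t_1,i_2k_1},e_{t_2,i_1k_2}e_{t_2,i_2k_2})|\le c$, so $\mathrm{Var}(\mathcal{\uppercase\expandafter{\romannumeral4}}_{i_1i_2})\lesssim(Tp_1^2p_2)^{-1}$ and therefore $\mathbb{E}\|\mathcal{\uppercase\expandafter{\romannumeral4}}-\mathbb{E}\mathcal{\uppercase\expandafter{\romannumeral4}}\|_F^2\lesssim(Tp_2)^{-1}$, giving $\|\mathcal{\uppercase\expandafter{\romannumeral4}}-\mathbb{E}\mathcal{\uppercase\expandafter{\romannumeral4}}\|=O_p((Tp_2)^{-1/2})$.

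The main obstacle is the spectral-norm bound on $\mathcal{\uppercase\expandafter{\romannumeral4}}$: a naive $\|\cdot\|_F$ bound yields only $O_p(1)$, which is useless, so one must exploit the specific weak-correlation assumptions in D.2 and D.3 to separately control the deterministic part (via a row-sum/Gershgorin-style argument) and the random fluctuation (via a variance estimate, where the key is that many terms in the covariance sum vanish). Once these bounds and the limit of $\lambda_j(\mathcal{\uppercase\expandafter{\romannumeral1}})$ are in hand, assembling the conclusion via Weyl is a one-line step.
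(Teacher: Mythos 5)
Your proposal is correct and follows essentially the same route as the paper's proof: the same four-term decomposition, the rank-$k_1$ signal argument with Assumptions B and C, the bound on the cross terms via Lemma \ref{lema1}(2) and $\|\Rb\|\asymp\sqrt{p_1}$, and the split of the noise term into its expectation (controlled by a row-sum bound from Assumption D.2) and its fluctuation (controlled in Frobenius norm via Assumption D.3(1)), finished with Weyl's inequality. The only cosmetic difference is that you exploit $\mathcal{\uppercase\expandafter{\romannumeral3}}=\mathcal{\uppercase\expandafter{\romannumeral2}}^\top$ explicitly, whereas the paper simply treats the two cross terms "similarly"; like the paper's own proof, your argument in fact invokes Assumption D (and, through Lemma \ref{lema1}(2), Assumption E.1) beyond the A--C listed in the lemma statement.
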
	
	\begin{proof}
		Recall that
		\[
		 \hat\Mb_1=\mathcal{\uppercase\expandafter{\romannumeral1}}+\mathcal{\uppercase\expandafter{\romannumeral2}}+\mathcal{\uppercase\expandafter{\romannumeral3}}+\mathcal{\uppercase\expandafter{\romannumeral4}}.
		\]
		We will study the spectral norms of these four terms and show that $\mathcal{\uppercase\expandafter{\romannumeral1}}$ is the main term. Firstly, by Assumptions B and C, we have
		\[
		 \frac{1}{Tp_2}\sum_{t}\Fb_t\Cb^\top\Cb\Fb_t^\top\overset{p}{\rightarrow}\bSigma_1,
		\]
		while the leading $k_1$ eigenvalues of $p_1^{-1}\Rb\bSigma_1\Rb^\top$ are asymptotically equal to those of $\bSigma_1$. Hence,  $\lambda_j(\mathcal{\uppercase\expandafter{\romannumeral1}})=\lambda_j(\bSigma_1)+o_p(1)$ for $j\le k_1$ while $\lambda_j(\mathcal{\uppercase\expandafter{\romannumeral1}})=0$ for $j> k_1$ because $\text{rank}(\mathcal{\uppercase\expandafter{\romannumeral1}})\le k_1$. Secondly by Cauchy-Schwartz inequality and Lemma \ref{lema1} (2),
		\[
		 \|\mathcal{\uppercase\expandafter{\romannumeral2}}\|\le\frac{1}{\sqrt{Tp_2}}\bigg\|\frac{1}{\sqrt{Tp_1p_2}}\sum_{t=1}^{T}\Fb_t\Cb^\top\Eb_t^\top\bigg\|\le O\bigg(\frac{1}{\sqrt{Tp_2}}\bigg).
		\]
		 Similarly, $\|\mathcal{\uppercase\expandafter{\romannumeral3}}\|\le O_p(1/\sqrt{Tp_2})$. Lastly for $\mathcal{\uppercase\expandafter{\romannumeral4}}$, denote $\Ub_{\Eb}=(Tp_1p_2)^{-1}\sum_t\mathbb{E}(\Eb_t\Eb_t^\top)$, then
		 \[
		 \begin{split}
		 &\mathbb{E}\bigg\|\mathcal{\uppercase\expandafter{\romannumeral4}}-\Ub_{\Eb}\bigg\|_F^2=\frac{1}{T^2p_1^2p_2^2}\sum_{i_1,i_2}\mathbb{E}\bigg(\sum_{t,j}e_{t,i_1j}e_{t,i_2j}-\mathbb{E}e_{t,i_1j}e_{t,i_2j}\bigg)^2\\
		 \le &\frac{1}{T^2p_1^2p_2^2}\sum_{i_1,i_2}\sum_{t,j_1}\sum_{s,j_2}\bigg|\text{Cov}(e_{t,i_1j_1}e_{t,i_2j_1},e_{s,i_1j_2}e_{s,i_2j_2})\bigg|\\
		 \le& \frac{c}{Tp_2}.\quad \text{\Big(by Assumption D.3(1)\Big)}
		 \end{split}
		 \]
		 Meanwhile, by Assumption D.2(1), for any $i\le p_1$,
		 \[
		 \sum_{i_1}\bigg|\sum_t\sum_j\mathbb{E}(e_{t,ij}e_{t,i_1j})\bigg|\le \sum_{t,j}\sum_{i_1}\bigg|\mathbb{E}(e_{t,ij}e_{t,i_1j})\bigg|\le cTp_2.
		 \]
		 Hence, $\|\Ub_{\Eb}\|_1=\|\Ub_{\Eb}\|_{\infty}\le O(p_1^{-1})$, which further implies $\|\Ub_{\Eb}\|\le O(p_1^{-1})$. Therefore,
		 \[
		 \|\mathcal{\uppercase\expandafter{\romannumeral4}}\|=O_p\bigg(\frac{1}{\sqrt{Tp_2}}+\frac{1}{p_1}\bigg).
		 \]
	The lemma holds with Weyl's inequality.
	\end{proof}

	\begin{lemma}\label{lemb2}
		Under Assumptions A, B and C, as $\min\{T,p_1,p_2\}\rightarrow\infty$, it holds that
		\[
		\begin{split}
			 \frac{1}{p_1}\|\mathcal{\uppercase\expandafter{\romannumeral2}}\hat\Rb\|_F^2=& O_p\bigg(\frac{1}{Tp_1p_2}\bigg)+o_p(1)\times \frac{1}{p_1}\|\hat\Rb-\Rb\hat\Hb_1\|_F^2,\\
			 \frac{1}{p_1}\|\mathcal{\uppercase\expandafter{\romannumeral3}}\hat\Rb\|_F^2=&O_p\bigg(\frac{1}{Tp_2}\bigg),\\
			 \frac{1}{p_1}\|\mathcal{\uppercase\expandafter{\romannumeral4}}\hat\Rb\|_F^2=& O_p\bigg(\frac{1}{p_1^2}+\frac{1}{Tp_1p_2}\bigg)+o_p(1)\times \frac{1}{p_1}\|\hat\Rb-\Rb\hat\Hb_1\|_F^2,			
		\end{split}
		\]
		where $\mathcal{\uppercase\expandafter{\romannumeral2}},\mathcal{\uppercase\expandafter{\romannumeral3}}$ and $\mathcal{\uppercase\expandafter{\romannumeral4}}$ are defined in equation (\ref{equb1})
	\end{lemma}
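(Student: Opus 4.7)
The overall strategy is, for each of the three terms $\mathcal{\uppercase\expandafter{\romannumeral2}},\mathcal{\uppercase\expandafter{\romannumeral3}},\mathcal{\uppercase\expandafter{\romannumeral4}}$, to write $\hat\Rb=\Rb\hat\Hb_1+(\hat\Rb-\Rb\hat\Hb_1)$, bound the ``clean'' piece against $\Rb\hat\Hb_1$ by means that are $o_p(1)$ in operator norm so that they can be absorbed into the $o_p(1)\cdot p_1^{-1}\|\hat\Rb-\Rb\hat\Hb_1\|_F^2$ remainder exhibited in the statement. Throughout I will use $\|\Rb\|_F^2\asymp p_1$, $\|\hat\Rb\|_F^2=k_1 p_1$, $\|\hat\Hb_1\|=O_p(1)$, and the fact (established in the proof of Lemma \ref{lemb1}) that $\|\mathcal{\uppercase\expandafter{\romannumeral2}}\|,\|\mathcal{\uppercase\expandafter{\romannumeral3}}\|=O_p((Tp_2)^{-1/2})$ and $\|\mathcal{\uppercase\expandafter{\romannumeral4}}-\Ub_{\Eb}\|=O_p((Tp_2)^{-1/2})$ with $\|\Ub_\Eb\|=O(p_1^{-1})$.

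For term $\mathcal{\uppercase\expandafter{\romannumeral2}}$, I first bound $\mathcal{\uppercase\expandafter{\romannumeral2}}\Rb\hat\Hb_1=(Tp_1p_2)^{-1}\Rb(\sum_t\Fb_t\Cb^\top\Eb_t^\top)\Rb\hat\Hb_1$, which by Lemma \ref{lema1}(2) has Frobenius norm squared of order $p_1/(Tp_2)$, so dividing by $p_1$ yields $O_p((Tp_1p_2)^{-1})$. The remaining piece $\mathcal{\uppercase\expandafter{\romannumeral2}}(\hat\Rb-\Rb\hat\Hb_1)$ is controlled by $\|\mathcal{\uppercase\expandafter{\romannumeral2}}\|^2\|\hat\Rb-\Rb\hat\Hb_1\|_F^2=o_p(1)\cdot\|\hat\Rb-\Rb\hat\Hb_1\|_F^2$ as desired. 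Term $\mathcal{\uppercase\expandafter{\romannumeral3}}\hat\Rb$ is handled without any decomposition: $\|\mathcal{\uppercase\expandafter{\romannumeral3}}\hat\Rb\|_F\le(Tp_1p_2)^{-1}\|\sum_t\Eb_t\Cb\Fb_t^\top\|_F\|\Rb^\top\hat\Rb\|$, and Lemma \ref{lema1}(2) together with $\|\Rb^\top\hat\Rb\|\le\|\Rb\|\|\hat\Rb\|=O(p_1)$ gives $p_1^{-1}\|\mathcal{\uppercase\expandafter{\romannumeral3}}\hat\Rb\|_F^2=O_p((Tp_2)^{-1})$.

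Term $\mathcal{\uppercase\expandafter{\romannumeral4}}$ is the delicate one, and is the step I expect to be the main obstacle. A naive bound $\|\mathcal{\uppercase\expandafter{\romannumeral4}}\hat\Rb\|_F\le\|\mathcal{\uppercase\expandafter{\romannumeral4}}\|\cdot\|\hat\Rb\|_F$ only produces $p_1^{-1}\|\mathcal{\uppercase\expandafter{\romannumeral4}}\hat\Rb\|_F^2=O_p(p_1^{-2}+(Tp_2)^{-1})$, which is short of the target rate $p_1^{-2}+(Tp_1p_2)^{-1}$ by a factor of $p_1$. To recover the extra $p_1^{-1}$ I decompose $\mathcal{\uppercase\expandafter{\romannumeral4}}=\Ub_\Eb+(\mathcal{\uppercase\expandafter{\romannumeral4}}-\Ub_\Eb)$ and, within each summand, split $\hat\Rb=\Rb\hat\Hb_1+(\hat\Rb-\Rb\hat\Hb_1)$. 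The deterministic piece $\Ub_\Eb\Rb\hat\Hb_1$ gives $p_1^{-1}\|\Ub_\Eb\|^2\|\Rb\|_F^2=O(p_1^{-2})$. The two cross pieces $\Ub_\Eb(\hat\Rb-\Rb\hat\Hb_1)$ and $(\mathcal{\uppercase\expandafter{\romannumeral4}}-\Ub_\Eb)(\hat\Rb-\Rb\hat\Hb_1)$ are both of the form $o_p(1)\cdot p_1^{-1}\|\hat\Rb-\Rb\hat\Hb_1\|_F^2$. The critical remaining piece is $(\mathcal{\uppercase\expandafter{\romannumeral4}}-\Ub_\Eb)\Rb\hat\Hb_1$: rather than bounding it in operator norm, I compute the second moment of $\sum_t(\Eb_t\Eb_t^\top-\mathbb{E}\Eb_t\Eb_t^\top)\Rb$ entrywise. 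Expanding the inner product and invoking Assumption D.3(1) (fixing $t,i,i_1,j_1$ and summing the covariances over $s,i_2,j_2$ is bounded by a constant), the outer sum over $i,t,i_1,j_1$ is of order $Tp_1^2p_2$. Dividing by $(Tp_1p_2)^2\cdot p_1$ yields the target $(Tp_1p_2)^{-1}$ rate, which is exactly the gain that Assumption D.3(1) is designed to produce.

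Collecting all three bounds gives the claim. The only subtle step is the variance calculation in the last paragraph: treating $\mathcal{\uppercase\expandafter{\romannumeral4}}-\Ub_\Eb$ via its (slower) spectral rate would lose a factor of $p_1$, and one must exploit Assumption D.3(1) together with the multiplication by $\Rb$ to obtain the sharp rate. All the remaining estimates are routine Cauchy--Schwarz applications combined with the moment bounds in Lemma \ref{lema1} and the spectral estimates already established during the proof of Lemma \ref{lemb1}.
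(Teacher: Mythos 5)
Your proposal is correct and takes essentially the same route as the paper: for each term you split $\hat\Rb=\Rb\hat\Hb_1+(\hat\Rb-\Rb\hat\Hb_1)$, bound the clean pieces via the moment bounds of Lemma \ref{lema1} and absorb the remainder into $o_p(1)\times p_1^{-1}\|\hat\Rb-\Rb\hat\Hb_1\|_F^2$, and your handling of $\mathcal{\uppercase\expandafter{\romannumeral4}}$ (centering at $\Ub_{\Eb}$ with $\|\Ub_{\Eb}\|=O(p_1^{-1})$ plus a direct covariance-sum computation under Assumption D.3(1)) is only a matrix-level reorganization of the paper's row-wise appeal to Lemma \ref{lema1}(3), whose proof is exactly this centering argument. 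One minor slip: for the clean piece of $\mathcal{\uppercase\expandafter{\romannumeral2}}$ one has $\|\mathcal{\uppercase\expandafter{\romannumeral2}}\Rb\hat\Hb_1\|_F^2=O_p\big(1/(Tp_2)\big)$ rather than $p_1/(Tp_2)$, and it is this value which, after dividing by $p_1$, gives your (correct) stated rate $O_p\big((Tp_1p_2)^{-1}\big)$.
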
	
	\begin{proof}
		Firstly, by equation (\ref{equb1}),
		\begin{equation}\label{equb3}
		\begin{split}
		 &\frac{1}{p_1}\|\mathcal{\uppercase\expandafter{\romannumeral2}}\hat\Rb\|_F^2\le\frac{\|\Rb\|_F^2}{p_1}\bigg\|\frac{1}{Tp_1p_2}\sum_{t=1}^{T}\Fb_t\Cb^\top\Eb_t^\top\hat\Rb\bigg\|_F^2\\
		\lesssim& \bigg\|\frac{1}{Tp_1p_2}\sum_{t=1}^{T}\Fb_t\Cb^\top\Eb_t^\top\Rb\bigg\|_F^2\|\hat\Hb_1\|_F^2+\bigg\|\frac{1}{Tp_1p_2}\sum_{t=1}^{T}\Fb_t\Cb^\top\Eb_t^\top\bigg\|_F^2\|\hat\Rb-\Rb\hat\Hb_1\|_F^2.
		\end{split}
		\end{equation}
		It's easy that $\|\hat\Hb_1\|_F^2=O_p(1)$, further by Lemma \ref{lema1} (2), we have
		\[
		 \frac{1}{p_1}\|\mathcal{\uppercase\expandafter{\romannumeral2}}\hat\Rb\|_F^2= O_p\bigg(\frac{1}{Tp_1p_2}\bigg)+o_p(1)\times \frac{1}{p_1}\|\hat\Rb-\Rb\hat\Hb_1\|_F^2.
		\]
		
	Secondly, it's not hard that
	\[
	 \frac{1}{p_1}\|\mathcal{\uppercase\expandafter{\romannumeral3}}\hat\Rb\|_F^2\le \frac{1}{p_1}\bigg\|\frac{1}{Tp_1p_2}\sum_{t=1}^{T}\Eb_t\Cb\Fb_t^\top\bigg\|_F^2\|\hat\Rb\|_F^2\|\Rb\|_F^2=O_p\bigg(\frac{1}{Tp_2}\bigg).
	\]
	
	Thirdly, use Lemma \ref{lema1} (3) so that
			 \begin{equation}\label{equb4}
	\begin{split}
	 \frac{1}{p_1}\|\mathcal{\uppercase\expandafter{\romannumeral4}}\hat\Rb\|_F^2=&\frac{1}{p_1}\sum_{i=1}^{p_1}\bigg\|\frac{1}{Tp_1p_2}\sum_{t=1}^{T}\Rb^\top\Eb_t\be_{t,i\cdot}+\frac{1}{Tp_1p_2}\sum_{t=1}^{T}(\hat\Rb-\Rb\hat\Hb_1)^\top\Eb_t\be_{t,i\cdot}\bigg\|^2\\
	=& O_p\bigg(\frac{1}{p_1^2}+\frac{1}{Tp_1p_2}\bigg)+o_p(1)\times \frac{1}{p_1}\|\hat\Rb-\Rb\hat\Hb_1\|_F^2.
	\end{split}
	\end{equation}
	The lemma follows.
	\end{proof}
	
	\begin{lemma}\label{lemb3}
	Under Assumptions A, B and C, as $\min\{T,p_1,p_2\rightarrow\infty\}$, we have
	\[
\frac{1}{p_2}\bigg\|\frac{1}{Tp_1}\sum_{s=1}^{T}\Eb_s^\top(\hat\Rb-\Rb\hat\Hb_1)\Fb_s\bigg\|^2=O_p\bigg(\frac{1}{Tp_1^2}+\frac{1}{T^2p_2^2}\bigg).
	\]
\end{lemma}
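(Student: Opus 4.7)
The plan is to start from identity (\ref{equb2}) and substitute $\hat\Rb-\Rb\hat\Hb_1=(\mathcal{\uppercase\expandafter{\romannumeral2}}+\mathcal{\uppercase\expandafter{\romannumeral3}}+\mathcal{\uppercase\expandafter{\romannumeral4}})\hat\Rb\hat\bLambda_1^{-1}$ into the target quantity. This decomposes it as
\[
\frac{1}{Tp_1}\sum_{s=1}^{T}\Eb_s^\top(\hat\Rb-\Rb\hat\Hb_1)\Fb_s=\mathcal{S}_1+\mathcal{S}_2+\mathcal{S}_3,
\]
where $\mathcal{S}_1,\mathcal{S}_2,\mathcal{S}_3$ correspond to the $\mathcal{\uppercase\expandafter{\romannumeral2}}$, $\mathcal{\uppercase\expandafter{\romannumeral3}}$, and $\mathcal{\uppercase\expandafter{\romannumeral4}}$ pieces of (\ref{equb2}) respectively. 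Since $\|\hat\Hb_1\|\vee\|\hat\bLambda_1^{-1}\|=O_p(1)$ by Lemma~\ref{lemb1}, these factors can be absorbed throughout.

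For each $\mathcal{S}_k$ I would further write $\hat\Rb=\Rb\hat\Hb_1+(\hat\Rb-\Rb\hat\Hb_1)$ in the rightmost slot, producing a ``main'' piece in which the inner $\hat\Rb$ is replaced by $\Rb\hat\Hb_1$, plus a ``remainder'' piece still involving $\hat\Rb-\Rb\hat\Hb_1$. The remainder pieces can be controlled by Cauchy--Schwarz using the already-established rate $\|\hat\Rb-\Rb\hat\Hb_1\|_F^2/p_1=O_p(w_1)$ from the proof of Theorem~\ref{thm3}, the spectral norm bounds $\|\mathcal{\uppercase\expandafter{\romannumeral2}}\|\vee\|\mathcal{\uppercase\expandafter{\romannumeral3}}\|\vee\|\mathcal{\uppercase\expandafter{\romannumeral4}}\|=O_p(1/\sqrt{Tp_2}+1/p_1)$ from the proof of Lemma~\ref{lemb1}, and the moment estimates in Lemma~\ref{lema1}; routine calculation shows that they contribute strictly higher-order terms than the target.

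The main pieces are bilinear forms in the noise and factor sums. For $\mathcal{S}_{1,\mathrm{main}}$, I would rewrite it as $(T^2p_1^2p_2)^{-1}\sum_s(\Eb_s^\top\Rb)\mathbf{B}_T\Fb_s$ with $\mathbf{B}_T=\big(\sum_t\Fb_t\Cb^\top\Eb_t^\top\Rb\big)\hat\Hb_1\hat\bLambda_1^{-1}$ independent of $s$, then bound each factor by Lemma~\ref{lema1}(1)--(2) to get an estimate strictly dominated by $1/(Tp_1^2)$. For $\mathcal{S}_{2,\mathrm{main}}$, the relation $\Rb^\top\Rb/p_1\to\Ib_{k_1}$ reduces it to a double sum in the cross-noise product $\Eb_s^\top\Eb_t$, which factors as $(\sum_sF_s\Eb_s^\top)(\sum_tF_t\Eb_t)\Cb$; the two averages can be bounded using Assumption~E.1 together with an operator-norm control on $\sum_sF_s\Eb_s^\top$, producing in particular the $1/(T^2p_2^2)$ summand. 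For $\mathcal{S}_{3,\mathrm{main}}$, the triple noise product $\sum_t\Eb_t\Eb_t^\top$ is decomposed as its mean $\sum_t\mathbb{E}\Eb_t\Eb_t^\top$ (operator norm $O(Tp_2)$ via $\|\Ub_{\Eb}\|=O(1/p_1)$ from the proof of Lemma~\ref{lemb1}) plus a centered deviation (Frobenius norm $O_p(p_1\sqrt{Tp_2})$ via Assumption~D.3(1)); the mean part combines with the Assumption~E.1 bound $\|\sum_sF_s\Ub_{\Eb}\Rb\Fb_s\|=O_p(\sqrt{Tp_2/p_1})$ and the centered part is bounded by Cauchy--Schwarz, both contributions lying strictly below the target.

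The main obstacle will be the cross-noise term in $\mathcal{S}_2$ and the triple-noise term in $\mathcal{S}_3$, where a naive Cauchy--Schwarz applied to $\|\hat\Rb-\Rb\hat\Hb_1\|_F^2/p_1=O_p(w_1)$ would lose a factor of $T$ by ignoring the cancellation inherent in these quadratic noise forms. The sharper operator-norm bounds on $\sum_sF_s\Eb_s^\top$ and on the centered deviation of $\sum_t\Eb_t\Eb_t^\top$, both already implicit in the proof of Lemma~\ref{lemb1} via Assumption~D.3, together with the sparsity $\|\Ub_{\Eb}\|=O(1/p_1)$ coming from the weak cross-correlation Assumption~D.2(1), are what match the target rate. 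After combining the main and remainder contributions across the three summands, we obtain the desired bound $O_p(1/(Tp_1^2)+1/(T^2p_2^2))$.
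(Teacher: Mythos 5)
Your overall strategy coincides with the paper's: both start from (\ref{equb2}), split the target into the three pieces $\mathcal{S}_1,\mathcal{S}_2,\mathcal{S}_3$ coming from $\mathcal{\uppercase\expandafter{\romannumeral2}},\mathcal{\uppercase\expandafter{\romannumeral3}},\mathcal{\uppercase\expandafter{\romannumeral4}}$, and your treatment of $\mathcal{S}_1$ and of the remainder terms (via $w_1$ and Cauchy--Schwarz) is essentially what the paper does. The genuine gap is in $\mathcal{S}_2$. After using $\Rb^\top\Rb/p_1\rightarrow\Ib_{k_1}$, its main part is (in the scalar-factor case) $\frac{1}{T^2p_1p_2}\big(\sum_sF_s\Eb_s^\top\big)\big(\sum_tF_t\Eb_t\big)\Cb$, and you propose to bound the two factors separately, invoking ``an operator-norm control on $\sum_sF_s\Eb_s^\top$''. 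No such control is available: Assumption E.1 only bounds second moments of the bilinear forms $\sum_t\Fb_t\bv^\top\Eb_t\bw$ for fixed unit vectors $\bv,\bw$, which yields the Frobenius bound $\|\sum_sF_s\Eb_s^\top\|_F=O_p(\sqrt{Tp_1p_2})$ but no spectral-norm bound sharper than that, and none is established anywhere in the paper; upgrading pointwise second-moment bounds to a uniform bound over the unit sphere would need assumptions well beyond A--E. With the Frobenius bound your factorization gives $\frac{1}{p_2}\|\mathcal{S}_{2,\mathrm{main}}\|^2=O_p\big(\frac{1}{T^2p_2}\big)$, which exceeds the target $O_p\big(\frac{1}{Tp_1^2}+\frac{1}{T^2p_2^2}\big)$ whenever $p_1^2\gg Tp_2$ --- exactly the regime this lemma is needed for, since it is what makes $w_2$ small enough for the projected estimator to improve on the initial one. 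The paper avoids factorizing: it writes the squared Frobenius norm of the whole double sum coordinate-wise in terms of $\bar\bzeta_{ij}=\text{Vec}(\sum_tF_te_{t,ij}/\sqrt{T})$ and invokes the fourth-order covariance conditions of Assumption E.2, which capture the cancellation over the row index and the column index and give the bound $O(p_1+p_1^2/p_2)$, i.e.\ $O_p\big(\frac{1}{T^2p_1p_2}+\frac{1}{T^2p_2^2}\big)$ after normalization.

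A second, smaller soft spot is the centered part of $\mathcal{S}_3$: recording only $\|\Delta\|_F=O_p(p_1\sqrt{Tp_2})$ for $\Delta=\sum_t(\Eb_t\Eb_t^\top-\mathbb{E}\Eb_t\Eb_t^\top)$ and then applying Cauchy--Schwarz forces $\|\Delta\Rb\|_F\lesssim\|\Delta\|_F\|\Rb\|\asymp p_1^{3/2}\sqrt{Tp_2}$, which again lands at the insufficient rate $\frac{1}{T^2p_2}$. What is needed, and what the paper obtains through Lemma \ref{lema1}(3) and equation (\ref{equb4}), is the loading-weighted bound $\|\Delta\Rb\|_F=O_p(p_1\sqrt{Tp_2})$ without the extra $\sqrt{p_1}$, which requires applying the covariance conditions of Assumption D to the $\Rb$-weighted sums rather than to the unweighted Frobenius norm. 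Your handling of the mean part via $\|\Ub_{\Eb}\|=O(1/p_1)$, and of $\mathcal{S}_1$, is fine and matches the paper's bounds.
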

\begin{proof}
By equation (\ref{equb2}),
\[
\begin{split}
\sum_{s=1}^{T}\Eb_s^\top(\hat\Rb-\Rb\hat\Hb_1)\Fb_s=\sum_{s=1}^{T}\Eb_s^\top(\mathcal{\uppercase\expandafter{\romannumeral2}}+\mathcal{\uppercase\expandafter{\romannumeral3}}+\mathcal{\uppercase\expandafter{\romannumeral4}})\hat\Rb\hat\bLambda_1^{-1}\Fb_s.
\end{split}
\]
Assume $k_1=k_2=1$, then $\hat\bLambda_1^{-1}$ can be ignored. Firstly,
\[
\begin{split}
\bigg\| \sum_{s=1}^T\Eb_s^\top\mathcal{\uppercase\expandafter{\romannumeral2}}\hat\Rb\Fb_s\bigg\|_F^2=&\bigg\|\frac{1}{Tp_1p_2} \sum_{s=1}^T\Eb_s^\top\bigg(\Rb\sum_{t=1}^{T}\Fb_t\Cb^\top\Eb_t^\top\bigg)\hat\Rb\Fb_s\bigg\|_F^2\\
\lesssim &\bigg\|\frac{1}{Tp_1p_2}\sum_{t=1}^{T}\Fb_t\Cb^\top\Eb_t^\top\hat\Rb\bigg\|_F^2\times\bigg\|\sum_{s=1}^T\Eb_s^\top\Rb\Fb_s\bigg\|_F^2.
\end{split}
\]
Using equation (\ref{equb3}) and Lemma \ref{lema1} (2), it's easy to verify
\[
\frac{1}{p_2}\bigg\|\frac{1}{Tp_1}\sum_{s=1}^T\Eb_s^\top\mathcal{\uppercase\expandafter{\romannumeral2}}\hat\Rb\Fb_s\bigg\|_F^2=O_p\bigg(\frac{1}{T^2p_1^2p_2}+\frac{1}{T^3p_1p_2^2}\bigg).
\]

Secondly, if $k_1=k_2=1$,
\[
\begin{split}
&\mathbb{E}\bigg\|\sum_{s=1}^T\Eb_s^\top\mathcal{\uppercase\expandafter{\romannumeral3}}\hat\Rb\Fb_s\bigg\|_F^2=\mathbb{E}\bigg\|\frac{1}{Tp_1p_2} \sum_{s=1}^T\Eb_s^\top\bigg(\sum_{t=1}^{T}\Eb_t\Cb\Fb_t^\top\Rb^\top\bigg)\hat\Rb\Fb_s\bigg\|_F^2\\
=&  \frac{1}{T^2p_2^2}\sum_{j=1}^{p_2}\mathbb{E}\bigg(\sum_{t=1}^{T}\sum_{s=1}^T\sum_{i=1}^{p_1}\sum_{j_1=1}^{p_2}C_{j_1}F_tF_se_{s,ij}e_{t,ij_1}\bigg)^2\\
\lesssim&\frac{1}{p_2^2}\sum_{j=1}^{p_2}\mathbb{E}\bigg(\sum_{i=1}^{p_1}\sum_{j_1=1}^{p_2}C_{j_1}(\zeta_{ij}\zeta_{ij_1}-\mathbb{E}\zeta_{ij}\zeta_{ij_1})\bigg)^2+\frac{1}{p_2^2}\sum_{j=1}^{p_2}\bigg(\sum_{i=1}^{p_1}\sum_{j_1=1}^{p_2}\Big|\mathbb{E}\zeta_{ij}\zeta_{ij_1}\Big|\bigg)^2\\
=&O\bigg(p_1+\frac{p_1^2}{p_2}\bigg). \quad \Big( \text{by Assumption E.2}\Big)
\end{split}
\]
Therefore,
\[
\frac{1}{p_2}\bigg\|\frac{1}{Tp_1}\sum_{s=1}^T\Eb_s^\top\mathcal{\uppercase\expandafter{\romannumeral3}}\hat\Rb\Fb_s\bigg\|_F^2=O_p\bigg(\frac{1}{T^2p_1p_2}+\frac{1}{T^2p_2^2}\bigg).
\]

	Thirdly, by Assumption E.1, Lemma \ref{lema1} (3) and the decomposition in  equation (\ref{equb4}),
		\[
\begin{split}
\frac{1}{p_2}\bigg\|\frac{1}{Tp_1}\sum_{s=1}^T\Eb_s^\top\mathcal{\uppercase\expandafter{\romannumeral4}}\hat\Rb\Fb_s\bigg\|_F^2\le\frac{1}{T^2p_1^2p_2}\bigg\|\sum_s\Eb_s^\top\otimes\Fb_s\bigg\|_F^2\times \bigg\|\mathcal{\uppercase\expandafter{\romannumeral4}}\hat\Rb\bigg\|_F^2=O_p\bigg(\frac{1}{Tp_1^2}+\frac{1}{T^3p_2^2}\bigg).
\end{split}
\]

As a result,
\[
\frac{1}{p_2}\bigg\|\frac{1}{Tp_1}\sum_{s=1}^{T}\Eb_s^\top(\hat\Rb-\Rb\hat\Hb_1)\Fb_s\bigg\|^2=O_p\bigg(\frac{1}{Tp_1^2}+\frac{1}{T^2p_2^2}\bigg),
\]
and the lemma holds.
\end{proof}

\section{Proof of Theorem \ref{thm4}: asymptotic distribution for initial estimator}\label{secc}
\begin{proof}
		Based on equation (\ref{equb2}),
	\[
	 \hat\bR_i-\hat\Hb_1^\top\bR_i=\frac{1}{Tp_1p_2}\sum_{t=1}^{T}\hat\bLambda_1^{-1}(\hat\Rb^\top\Eb_t\Cb\Fb_t^\top\bR_i+\hat\Rb^\top\Rb\Fb_t\Cb^\top\be_{t,i\cdot}+\hat\Rb^\top\Eb_t\be_{t,i\cdot}).
	\]
	Use the decomposition in equation (\ref{equb3}), then
	\[
	 \bigg\|\frac{1}{Tp_1p_2}\sum_{t=1}^{T}\hat\bLambda_1^{-1}\hat\Rb^\top\Eb_t\Cb\Fb_t^\top\bR_i\bigg\|\lesssim  \bigg\|\frac{1}{Tp_1p_2}\sum_{t=1}^{T}\hat\Rb^\top\Eb_t\Cb\Fb_t^\top\bigg\|=O_p\bigg(\frac{1}{\sqrt{Tp_1p_2}}+\frac{1}{Tp_2}\bigg).
	\]
	By Assumption D,
	\[
	 \bigg\|\frac{1}{Tp_1p_2}\sum_{t=1}^{T}\hat\bLambda_1^{-1}\hat\Rb^\top\Rb\Fb_t\Cb^\top\be_{t,i\cdot}\bigg\|\le\|\hat\bLambda_1^{-1}\|\bigg\|\frac{1}{p_1}\hat\Rb^\top\Rb\bigg\|_F \bigg\|\frac{1}{Tp_2}\sum_{t=1}^{T}\Fb_t\Cb^\top\be_{t,i\cdot}\bigg\|=O_p\bigg(\frac{1}{\sqrt{Tp_2}}\bigg).
	\]	
	On the other hand, by Lemma \ref{lema1} (3),  we have
	\[
	\begin{split}
	\bigg\| \frac{1}{Tp_1p_2}\sum_{t=1}^{T}\hat\bLambda_1^{-1}\hat\Rb^\top\Eb_t\be_{t,i\cdot}\bigg\|\lesssim&\bigg\| \frac{1}{Tp_1p_2}\sum_{t=1}^{T}(\hat\Rb-\Rb\hat\Hb_1)^\top\Eb_t\be_{t,i\cdot}\bigg\|+\bigg\| \frac{1}{Tp_1p_2}\sum_{t=1}^{T}\Rb^\top\Eb_t\be_{t,i\cdot}\bigg\|\\
	 =&O_p\bigg(\frac{1}{p_1}+\frac{1}{Tp_2}\bigg).
	\end{split}
	\]
	Therefore,  when  $\min\{T,p_1,p_2\}\rightarrow\infty$ and $Tp_2=o(p_1^2)$, we have
	\[
	 \sqrt{Tp_2}(\hat\bR_i-\hat\Hb_1\bR_i)=\hat\bLambda_1^{-1}\frac{\hat\Rb^\top\Rb}{p_1} \frac{1}{\sqrt{Tp_2}}\sum_{t=1}^{T}\Fb\Cb^\top\be_{t,i\cdot}+{\bf o_p(1)}.
	\]
	 By Lemma \ref{lemb1}, $\hat\bLambda_1\stackrel{p}{\rightarrow}\bLambda_1$.  By Theorem \ref{thm1}, $\hat\Hb_1=p_1^{-1}\Rb^\top\hat\Rb+{\bf o_p(1)}$.  Assume that $\bSigma_1$ has the spectral decomposition $\bSigma_1=\bGamma_1\bLambda_1\bGamma_1^\top$, then by the definition of $\hat\Hb_1$ we have
	\[
	 \hat\Hb_1=\bGamma_1\bLambda_1\bGamma_1^\top\hat\Hb_1\bLambda_1^{-1}+{\bf o_p(1)}.
	\]
	Rearrange the last equation so that
	\[
	 \bGamma_1^\top\hat\Hb_1\bLambda_1=\bLambda_1\bGamma_1^\top\hat\Hb_1+{\bf o_p(1)}.
	\]
	Because $\bLambda_1$ is diagonal with distinct entries, $\bGamma_1^\top\hat\Hb_1$ must be asymptotically diagonal. Further by
	\[
	 \hat\Hb_1^\top\hat\Hb_1=(\bGamma_1^\top\hat\Hb_1)^\top\bGamma_1^\top\hat\Hb_1=\Ib_{k_1}+{\bf o_p(1)},
	\]
	the diagonal entries of $\bGamma_1^\top\hat\Hb_1$ must be asymptotically $1$ or $-1$. Without loss of generality we can make $\bGamma_1^\top\hat\Hb_1=\Ib_{k_1}+{\bf o_p(1)}$ by choosing the column signs of $\hat\Rb$. Therefore
	\[
	 \frac{\hat\Rb^\top\Rb}{p_1}=\hat\Hb_1^\top+{\bf o_p(1)}=\bGamma_1^\top+{\bf o_p(1)}.
	\]
	Consequently, by the Slutsky's theorem, when  $\min\{T,p_1,p_2\}\rightarrow\infty$ and $Tp_2=o(p_1^2)$,
	\[
	 \sqrt{Tp_2}(\hat\bR_i-\hat\Hb_1^\top\bR_i)\stackrel{d}{\rightarrow}\mathcal{N}({\bf 0},\bLambda_1^{-1}\bGamma_1^\top\Vb_{1i}\bGamma_1\bLambda_1^{-1}).
	\]
	Parallel procedures yield the results for the back loadings $\bC_j$, which concludes the theorem.
\end{proof}

\section{Proof of Theorem \ref{thm2}: asymptotic distribution for projected estimator}\label{secd}
\begin{proof}
	By equation (\ref{equa2}), for any $j\le p_2$
	\[
	\begin{split}
	 \tilde\bC_j-\tilde\Hb_2^\top\bC_j=&\frac{1}{Tp_1^2p_2}\tilde\bLambda_2^{-1}\tilde\Cb^\top\sum_{t=1}^{T}\bigg(\Cb\Fb_t^\top\Rb^\top\hat\Rb\hat\Rb^\top\be_{t,\cdot j}+\Eb_t^\top\hat\Rb\hat\Rb^\top\Rb\Fb_t\bC_j+\Eb_t^\top\hat\Rb\hat\Rb^\top\be_{t,\cdot j}\bigg)\\
	 :=&\mathcal{\uppercase\expandafter{\romannumeral1}}+\mathcal{\uppercase\expandafter{\romannumeral2}}+\mathcal{\uppercase\expandafter{\romannumeral3}}.
	\end{split}
	\]
	Firstly,
	\[
	\begin{split}
	 \mathcal{\uppercase\expandafter{\romannumeral1}}=&\tilde\bLambda_2^{-1}\frac{\tilde\Cb^\top\Cb}{p_2}\frac{1}{Tp_1}\sum_{t=1}^{T}\Fb_t^\top\frac{\Rb^\top\hat\Rb}{p_1}(\hat\Rb-\Rb\hat\Hb_1+\Rb\hat\Hb_1)^\top\be_{t,\cdot j}\\
	 =&\tilde\bLambda_2^{-1}\frac{\tilde\Cb^\top\Cb}{p_2}\frac{1}{Tp_1}\sum_{t=1}^{T}\Fb_t^\top[\Ib_{k_1}+{\bf o_p(1)}]\Rb^\top\be_{t,\cdot j}+\tilde\bLambda_2^{-1}\frac{\tilde\Cb^\top\Cb}{p_2}\frac{\Rb^\top\hat\Rb}{p_1}\frac{1}{Tp_1}\sum_{t=1}^{T}\Fb_t^\top(\hat\Rb-\Rb\hat\Hb_1)^\top\be_{t,\cdot j}.
	\end{split}
	\]
	Similarly to the proof of Lemma \ref{lemb3},
	\[
	 \frac{1}{Tp_1}\sum_{t=1}^{T}\Fb_t^\top(\hat\Rb-\Rb\hat\Hb_1)^\top\be_{t,\cdot j}={\bf{O_p}}\bigg(\frac{1}{p_1\sqrt{T}}+\frac{1}{Tp_2}\bigg).
	\]
	Hence, for the first term we have
	\[
		 \mathcal{\uppercase\expandafter{\romannumeral1}}=\tilde\bLambda_2^{-1}\frac{\tilde\Cb^\top\Cb}{p_2}\frac{1}{Tp_1}\sum_{t=1}^{T}\Fb_t^\top\Rb^\top\be_{t,\cdot j}+{\bf o_p}\bigg(\frac{1}{\sqrt{Tp_1}}\bigg)+{\bf O_p}\bigg(\frac{1}{Tp_2}\bigg).
	\]
	Secondly, assume $k_1=k_2=1$, then a similar technique as in equation (\ref{lemb3}) leads to
	\[
	\begin{split}
	 \|\mathcal{\uppercase\expandafter{\romannumeral2}}\|\lesssim&\bigg\|\frac{1}{Tp_1p_2}\sum_{t=1}^{T}\tilde\Cb^\top\Eb_t^\top\hat\Rb\Fb_t\bigg\|
	 \lesssim\frac{1}{\sqrt{Tp_1p_2}}+\frac{1}{Tp_2}+\|\tilde\Cb-\Cb\tilde\Hb_2\|_F\bigg\|\frac{1}{Tp_1p_2}\sum_{t=1}^{T}\Eb_t^\top\hat\Rb\Fb_t\bigg\|.
	\end{split}
	\]
	The proof of Lemma \ref{lema3} shows that
	\[
	 \bigg\|\frac{1}{Tp_1p_2}\sum_{t=1}^{T}\Eb_t^\top\hat\Rb\Fb_t\bigg\|=O_p\bigg(\frac{1}{Tp_1p_2}\bigg).
	\]
	Hence, combined with Corollary \ref{cor1},  we have
	\[
		 \|\mathcal{\uppercase\expandafter{\romannumeral2}}\|=o_p\bigg(\frac{1}{\sqrt{Tp_1}}\bigg)+O_p\bigg(\frac{1}{Tp_2}\bigg).
	\]
	Thirdly, for $\mathcal{\uppercase\expandafter{\romannumeral3}}$, we can use the same techniques as bounding $p_2^{-1}\|\mathcal{\uppercase\expandafter{\romannumeral4}}\tilde\Cb\|_F^2$ in Lemma \ref{lema4} to show that
	\[
	 \|\mathcal{\uppercase\expandafter{\romannumeral3}}\|=o_p\bigg(\frac{1}{\sqrt{Tp_1}}\bigg)+O_p\bigg(\frac{1}{p_1p_2}+\frac{1}{Tp_2}\bigg).
	\]
	Therefore, when $\min\{T,p_1,p_2\}\rightarrow\infty$ and $Tp_1=o(\min\{p_1^2p_2^2,T^2p_2^2\})$, we have
	\[
	 \sqrt{Tp_1}(\tilde\bC_j-\tilde\Hb_2^\top\bC_j)=\tilde\bLambda_2^{-1}\frac{\tilde\Cb^\top\Cb}{p_2}\frac{1}{\sqrt{Tp_1}}\sum_{t=1}^{T}\Fb_t^\top\Rb^\top\be_{t,\cdot j}+{\bf o_p(1)}.
	\]
 By Lemma \ref{lema2}, $\tilde\bLambda_2\stackrel{p}{\rightarrow}\bLambda_2$. 	 Assume $\bSigma_2$ has the spectral decomposition $\bSigma_2=\bGamma_2\bLambda_2\bGamma_2^\top$, then  by the definition of $\tilde\Hb_2$, directly we have
	\[
	 \bGamma_2^\top\tilde\Hb_2\bLambda_2=\bLambda_2\bGamma_2^\top\tilde\Hb_2+{\bf o_p(1)}.
	\]
	Similarly to the treatment of $\hat\Hb_1$ in Theorem \ref{thm4},  $\bGamma_2^\top\tilde\Hb_2=\Ib_{k_2}+{\bf o_p(1)}$. Therefore,  $p_2^{-1}\tilde\Cb^\top\Cb=\bGamma_2^\top+{\bf o_p(1)}$. Further by Slutsky's theorem,
	\[
	 \sqrt{Tp_1}(\tilde\bC_j-\tilde\Hb_2^\top\bC_j)\stackrel{d}{\rightarrow}\mathcal{N}({\bf 0}, \bLambda_2^{-1}\bGamma_2^\top\Vb_{2j}\bGamma_2\bLambda_2^{-1}).
	\]
	Parallel steps can be applied to verify the results of $\tilde\bR_i$, which concludes the theorem.
\end{proof}

\section{Proof of Theorem \ref{thm5}: factor and signal matrices} \label{sece}
By definition,
\[
\begin{split}
\tilde\Fb_t=&\frac{1}{p_1p_2}\tilde\Rb^\top\Xb\tilde\Cb=\frac{1}{p_1p_2}\tilde\Rb^\top\Rb\Fb_t\Cb^\top\tilde\Cb+\frac{1}{p_1p_2}\tilde\Rb^\top\Eb_t\tilde\Cb\\
=&\frac{1}{p_1p_2}\tilde\Rb^\top(\Rb-\tilde\Rb\tilde\Hb_1^{-1}+\tilde\Rb\tilde\Hb_1^{-1})\Fb_t(\Cb-\tilde\Cb\tilde\Hb_2^{-1}+\tilde\Cb\tilde\Hb_2^{-1})^\top\tilde\Cb\\
&+\frac{1}{p_1p_2}(\tilde\Rb-\Rb\tilde\Hb_1+\Rb\tilde\Hb_1)^\top\Eb_t(\tilde\Cb-\Cb\tilde\Hb_2+\Cb\tilde\Hb_2).
\end{split}
\]
Note that $\tilde\Rb^\top\tilde\Rb=p_1\Ib_{k_1}$ and $\tilde\Cb^\top\tilde\Cb=p_2\Ib_{k_2}$, then
\[
\begin{split}
\tilde\Fb_t-\tilde\Hb_1^{-1}\Fb_t\tilde\Hb_2^{-1}=&\frac{1}{p_1p_2}\tilde\Rb^\top(\Rb-\tilde\Rb\tilde\Hb_1^{-1})\Fb_t(\Cb-\tilde\Cb\tilde\Hb_2^{-1})^\top\tilde\Cb\\
&+\frac{1}{p_1}\tilde\Rb^\top(\Rb-\tilde\Rb\tilde\Hb_1^{-1})\Fb_t(\tilde\Hb_2^{-1})^\top+\frac{1}{p_2}\tilde\Hb_1^{-1}\Fb_t(\Cb-\tilde\Cb\tilde\Hb_2^{-1})^\top\tilde\Cb\\
&+\frac{1}{p_1p_2}(\tilde\Rb-\Rb\tilde\Hb_1)^\top\Eb_t(\tilde\Cb-\Cb\tilde\Hb_2)\\
&+\frac{1}{p_1p_2}(\tilde\Hb_1)^\top\Rb^\top\Eb_t(\tilde\Cb-\Cb\tilde\Hb_2)+\frac{1}{p_1p_2}(\tilde\Rb-\Rb\tilde\Hb_1)^\top\Eb_t\Cb\tilde\Hb_2\\
&+\frac{1}{p_1p_2}(\tilde\Hb_1)^\top\Rb^\top\Eb_t\Cb\tilde\Hb_2.
\end{split}
\]
In Lemma \ref{leme1}, we prove that
	\[
\begin{split}
\frac{1}{p_1}\Rb^\top(\tilde\Rb-\Rb\tilde\Hb_1)=&O_p\bigg(\frac{1}{Tp_1}+\frac{1}{Tp_2}+\frac{1}{p_1p_2}+\frac{1}{\sqrt{Tp_1p_2}}\bigg),\\ \frac{1}{p_2}\Cb^\top(\tilde\Cb-\Cb\tilde\Hb_2)=&O_p\bigg(\frac{1}{Tp_1}+\frac{1}{Tp_2}+\frac{1}{p_1p_2}+\frac{1}{\sqrt{Tp_1p_2}}\bigg),
\end{split}
\]
Therefore, by Cauchy-Schwartz inequality, Corollary \ref{cor1}, and the bounds for $\|\Rb^\top\Eb_t\|_F^2,\|\Eb_t\Cb\|_F^2$ and $\|\Rb^\top\Eb_t\Cb\|_F^2$,  we have
\[
\|\tilde\Fb_t-\tilde\Hb_1^{-1}\Fb_t\tilde\Hb_2^{-1}\|\le O_p\bigg(\frac{1}{\sqrt{T}\times \min\{p_1,p_2\}}+\frac{1}{\sqrt{p_1p_2}}\bigg).
\]
Next, for any $t,i,j$,
\[
\begin{split}
\tilde S_{t,ij}-S_{t,ij}=&\tilde\Rb_{i\cdot}^\top\tilde\Fb_t\tilde\Cb_{j\cdot}-\Rb_{i\cdot}^\top\Fb_t\Cb_{j\cdot}\\
=&(\tilde\Rb_{i\cdot}-\tilde\Hb_1^\top\Rb_{i\cdot}+\tilde\Hb_1^\top\Rb_{i\cdot})^\top\tilde\Fb_t(\tilde\Cb_{j\cdot}-\tilde\Hb_2^\top\Cb_{j\cdot}+\tilde\Hb_2^\top\Cb_{j\cdot})-\Rb_{i\cdot}^\top\Fb_t\Cb_{j\cdot}\\
=&(\tilde\Rb_{i\cdot}-\tilde\Hb_1^\top\Rb_{i\cdot})^\top\tilde\Fb_t(\tilde\Cb_{j\cdot}-\tilde\Hb_2^\top\Cb_{j\cdot})+\Rb_{i\cdot}^\top\tilde\Hb_1\tilde\Fb_t(\tilde\Cb_{j\cdot}-\tilde\Hb_2^\top\Cb_{j\cdot})\\
&+(\tilde\Rb_{i\cdot}-\tilde\Hb_1^\top\Rb_{i\cdot})^\top\tilde\Fb_t\tilde\Hb_2^\top\Cb_{j\cdot}+\Rb_{i\cdot}^\top(\tilde\Hb_1\tilde\Fb_t\tilde\Hb_2-\Fb_t)\Cb_{j\cdot}.
\end{split}
\]
Then, by Cauchy-Schwartz inequality, Corollary \ref{cor1} and the consistency of the factor matrix, we have
\[
|\tilde S_{t,ij}-S_{t,ij}|=O_p\bigg(\frac{1}{\min\{\sqrt{p_1p_2},\sqrt{Tp_1},\sqrt{Tp_2}\}}\bigg),
\]
which concludes the theorem.

\begin{lemma}\label{leme1}
	Under Assumptions A-F, take $\hat\Rb$ and $\hat\Cb$ as projection matrices, then we have
	\[
	\begin{split}
	 \frac{1}{p_1}\Rb^\top(\tilde\Rb-\Rb\tilde\Hb_1)=&O_p\bigg(\frac{1}{Tp_1}+\frac{1}{Tp_2}+\frac{1}{p_1p_2}+\frac{1}{\sqrt{Tp_1p_2}}\bigg),\\ \frac{1}{p_2}\Cb^\top(\tilde\Cb-\Cb\tilde\Hb_2)=&O_p\bigg(\frac{1}{Tp_1}+\frac{1}{Tp_2}+\frac{1}{p_1p_2}+\frac{1}{\sqrt{Tp_1p_2}}\bigg),
	\end{split}
	\]
	as $\min\{T,p_1,p_2\}\rightarrow\infty$.
\end{lemma}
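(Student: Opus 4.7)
The plan is to mirror the proof of Theorem \ref{thm1} but now left-multiply by $\Rb^\top/p_1$ (resp.\ $\Cb^\top/p_2$) to exploit approximate orthogonality. Writing the row-mode analogue of equation (\ref{equa2}), namely $\tilde\Rb-\Rb\tilde\Hb_1=(\mathrm{II}'+\mathrm{III}'+\mathrm{IV}')\tilde\Rb\tilde\bLambda_1^{-1}$, where $\mathrm{II}',\mathrm{III}',\mathrm{IV}'$ are obtained from the decomposition of $\tilde\Mb_1=(Tp_1p_2^2)^{-1}\sum_t\Xb_t\hat\Cb\hat\Cb^\top\Xb_t^\top$ by interchanging the roles of $\Rb\leftrightarrow\Cb$, $\hat\Rb\leftrightarrow\hat\Cb$, $(p_1,k_1)\leftrightarrow(p_2,k_2)$ in equation (\ref{equa1}), I obtain $\Rb^\top(\tilde\Rb-\Rb\tilde\Hb_1)/p_1=p_1^{-1}\Rb^\top(\mathrm{II}'+\mathrm{III}'+\mathrm{IV}')\tilde\Rb\tilde\bLambda_1^{-1}$. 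Note that the naive Frobenius bound $\|\Rb^\top(\tilde\Rb-\Rb\tilde\Hb_1)/p_1\|_F\le p_1^{-1/2}\|\tilde\Rb-\Rb\tilde\Hb_1\|_F$ only yields $O_p(\sqrt{\tilde w_1})$, which is strictly slower than the claimed rate, so the projection by $\Rb^\top/p_1$ must be used to extract cancellations.

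For the cross terms $\mathrm{II}'$ and $\mathrm{III}'$, I substitute $\hat\Cb=\Cb\hat\Hb_2+\hat\Delta$ with $\hat\Delta:=\hat\Cb-\Cb\hat\Hb_2$ and use $\Cb^\top\Cb/p_2\to\Ib_{k_2}$, $\hat\Hb_2\hat\Hb_2^\top\overset{p}{\to}\Ib_{k_2}$, $\Rb^\top\Rb/p_1\to\Ib_{k_1}$, and $\Rb^\top\tilde\Rb/p_1=O_p(1)$. The leading piece reduces to $(Tp_1p_2)^{-1}\sum_t\Rb^\top\Eb_t\Cb\Fb_t^\top$ multiplied by matrices of operator norm $O_p(1)$, and Lemma \ref{lema1}(2) gives its Frobenius norm as $O_p(\sqrt{Tp_1p_2}/(Tp_1p_2))=O_p(1/\sqrt{Tp_1p_2})$, supplying the last target term. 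For $\mathrm{IV}'$, a similar double expansion produces a main term $(Tp_1^2p_2^2)^{-1}\sum_t\Rb^\top\Eb_t\Cb\Cb^\top\Eb_t^\top\Rb\tilde\Hb_1\cdot(1+o_p(1))$ whose expectation is $O(1/(p_1p_2))$ (since $\mathbb{E}\|\Rb^\top\Eb_t\Cb\|_F^2=O(p_1p_2)$ by Assumption D) and whose fluctuation, controlled via Assumption D.3 as in Lemma \ref{lema4}, is of smaller order; this supplies the $1/(p_1p_2)$ term.

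The main obstacle is to show that the residual pieces produced by $\hat\Delta$ remain within the target rate. A direct use of sufficient condition (\ref{c2})(b) together with the value $m_2=1/(Tp_2^2)+1/(T^2p_1^2)$ from Theorem \ref{thm3} gives $\|(Tp_2)^{-1}\sum_t\Eb_t\hat\Delta\Fb_t^\top\|_F=O_p(\sqrt{p_1 m_2})$, and dividing by $p_1^{1/2}$ yields $O_p(\sqrt{m_2})$ which contains a $1/(p_2\sqrt{T})$ term not uniformly dominated by the target. To circumvent this, I plan to plug the explicit expansion of $\hat\Delta$ from equation (\ref{equb2}) into the residual and analyze the resulting $\mathcal{II}_{\mathrm{init}},\mathcal{III}_{\mathrm{init}},\mathcal{IV}_{\mathrm{init}}$ sub-pieces term by term, now exploiting row orthogonality via $\Rb^\top/p_1$ and column orthogonality via the appearance of $\Cb^\top\Cb/p_2$ simultaneously, with Assumption E.2 supplying the needed fourth-moment control on the quartic noise-factor interactions. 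The extra $\sqrt{p_2}$ from double projection should cancel the extra factor that the crude bound leaves behind, reducing the residual to $O_p(1/(Tp_1)+1/(Tp_2))$ and closing the claim. The statement for $\Cb^\top(\tilde\Cb-\Cb\tilde\Hb_2)/p_2$ then follows by the mirror argument after swapping rows and columns throughout.
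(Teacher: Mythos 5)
Your proposal follows essentially the same route as the paper's proof: the eigen-identity (\ref{equa2}) (in its row form), left-multiplication by $\Rb^\top/p_1$ to exploit approximate orthogonality, the $O_p(1/\sqrt{Tp_1p_2})$ bound for the cross terms via Lemma \ref{lema1}(2), the $1/(p_1p_2)$ bound for the quadratic-in-noise term, and — exactly as the paper does by invoking the decomposition of equation (\ref{equb3}) and the Lemma \ref{lemb3}-type argument — a term-by-term expansion of $\hat\Cb-\Cb\hat\Hb_2$ through (\ref{equb2}) to sharpen the residual beyond the crude $\sqrt{m_2}$ bound. The approach is correct and matches the paper's; the only cosmetic slip is that for the row case the double projection gains a factor of order $\sqrt{p_1}$ (from $\Rb^\top$), not $\sqrt{p_2}$, which still suffices to dominate the problematic $1/(p_2\sqrt{T})$ term.
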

\begin{proof}
	We only prove the result for $\tilde\Cb$. By equation (\ref{equa2})
	\[
	 \frac{1}{p_2}\Cb^\top(\tilde\Cb-\Cb\tilde\Hb_2)=\frac{1}{p_2}\Cb^\top(\mathcal{\uppercase\expandafter{\romannumeral2}}+\mathcal{\uppercase\expandafter{\romannumeral3}}+\mathcal{\uppercase\expandafter{\romannumeral4}})\tilde\Cb\tilde\bLambda_2^{-1}.
	\]
	Ignore $\tilde\bLambda_2$, firstly,
	\[
	\begin{split}
	 \bigg\|\frac{1}{p_2}\Cb^\top\mathcal{\uppercase\expandafter{\romannumeral2}}\tilde\Cb\bigg\|=&\bigg\|\frac{1}{Tp_1^2p_2^2}\sum_t\Cb^\top\Eb_t^\top\hat\Rb\hat\Rb^\top\Rb\Fb_t\Cb^\top\tilde\Cb\bigg\|\asymp \bigg\|\frac{1}{Tp_1p_2}\sum_t\Cb^\top\Eb_t^\top\hat\Rb\Fb_t\bigg\|\\
	 =&O_p\bigg(\frac{1}{\sqrt{Tp_1p_2}}+\frac{1}{Tp_2}\bigg),
	\end{split}
	\]
	where the last equality is by a similar decomposition as equation (\ref{equb3}).
	Secondly,
	\[
	\begin{split}
	 \bigg\|\frac{1}{p_2}\Cb^\top\mathcal{\uppercase\expandafter{\romannumeral3}}\tilde\Cb\bigg\|=&\bigg\|\frac{1}{Tp_1^2p_2^2}\sum_t\Cb^\top\Cb\Fb_t^\top\Rb^\top\hat\Rb\hat\Rb^\top\Eb_t\tilde\Cb\bigg\|\asymp \bigg\|\frac{1}{Tp_1p_2}\sum_t\Fb_t^\top\hat\Rb^\top\Eb_t\tilde\Cb\bigg\|\\
	 =&O_p\bigg(\frac{1}{\sqrt{Tp_1p_2}}+\frac{1}{Tp_2}\bigg)\times \bigg(1+\sqrt{\frac{p_2}{Tp_1}}\bigg).
	\end{split}
	\]
	Thirdly,
	\[
	\begin{split}
	 \bigg\|\frac{1}{p_2}\Cb^\top\mathcal{\uppercase\expandafter{\romannumeral4}}\tilde\Cb\bigg\|=&\bigg\|\frac{1}{Tp_1^2p_2^2}\sum_t\Cb^\top\Eb_t^\top\hat\Rb\hat\Rb^\top\Eb_t\tilde\Cb\bigg\|\\
	\le& \frac{1}{p_1p_2}\sqrt{\frac{1}{Tp_1p_2}\sum_t\|\Cb^\top\Eb_t^\top\hat\Rb\|_F^2\times \frac{1}{Tp_1p_2}\sum_t\|\hat\Rb^\top\Eb_t\tilde\Cb\|_F^2}\\
	=&O_p\bigg(\frac{1}{p_1p_2}\times \sqrt{(1+p_1w_1)\times (1+p_1w_1)(1+\|\tilde\Cb-\Cb\tilde\Hb_2\|_F^2)}\bigg)\\
	 =&O_p\bigg(\frac{1}{p_1p_2}+\frac{1}{Tp_2^2}\bigg)+o_p\bigg(\frac{1}{\sqrt{Tp_1p_2}}\bigg).
	\end{split}
	\]
	Combine the above results so that
	\[
	 \frac{1}{p_2}\Cb^\top(\tilde\Cb-\Cb\tilde\Hb_2)=O_p\bigg(\frac{1}{Tp_1}+\frac{1}{Tp_2}+\frac{1}{p_1p_2}+\frac{1}{\sqrt{Tp_1p_2}}\bigg),
	\]
	which concludes the lemma.
\end{proof}

\section{Proof of Theorem \ref{thm6}: determination of the factor numbers} \label{secf}
\begin{proof}
We only prove the first part of this theorem, i.e.,  $\Pr(\hat k_2^{(t)}=k_2)\rightarrow 1$ when  $\hat k_1^{(t)}=j\in[k_1,k_{\max}]$. Another part of this theorem can be proved similarly.  Firstly, if $\hat k_1^{(t)}>k_1$, we assume that $\hat k_1^{(t)}=k_1+1$ in our proof without loss of generality  because $k_{\max}$ is a constant. Under such case,  by definition $\hat\Rb^{(t)}=(\hat\Rb,\hat\bgamma)$, where $\hat\Rb$ is the first-stage estimator with true $k_1$, and $\hat\bgamma/\sqrt{p_1}$ is the $(k_1+1)$-th eigenvector of $\hat\Mb_1$.  $\hat\Rb^\top \hat\bgamma={\bf 0}$. Then,
\[
\begin{split}
\tilde\Mb_2^{(t)}=&\frac{1}{Tp_1^2p_2}\sum_{t=1}^{T}\Xb_t^\top\hat\Rb^{(t)}(\hat\Rb^{(t)})^\top\Xb_t=\frac{1}{Tp_1^2p_2}\sum_{t=1}^{T}\Xb_t^\top(\hat\Rb,\hat\bgamma)(\hat\Rb,\hat\bgamma)^\top\Xb_t\\
=&\tilde\Mb_2+\frac{1}{Tp_1^2p_2}\sum_{t=1}^{T}\Xb_t^\top\hat\bgamma\hat\bgamma^\top\Xb_t.\\
\end{split}
\]

By Lemma \ref{lema2}, we actually have
\[
\lambda_j(\tilde\Mb_2)=\left\{
\begin{aligned}
&\lambda_j(\bSigma_2)+o_p(1),&j\le k_2,\\
&O_p\bigg(\frac{1}{\sqrt{Tp_1}}+\frac{1}{Tp_2}+\frac{1}{p_1}\bigg),&j>k_2.
\end{aligned}
\right.
\]
Note that by the definition of eigenvector, $\lambda_{k_1+1}(\hat\Mb_1)\hat\bgamma=\hat\Mb_1\hat\bgamma$. Hence,
\[
\bigg\|\frac{1}{Tp_1^2p_2}\sum_{t=1}^{T}\Xb_t^\top\hat\bgamma\hat\bgamma^\top\Xb_t\bigg\|\le\frac{1}{Tp_1^2p_2}\sum_{t=1}^{T}\hat\bgamma^\top\Xb_t\Xb_t^\top\hat\bgamma=\frac{1}{p_1}\hat\bgamma^\top\hat\Mb_1\hat\bgamma=\lambda_{k_1+1}(\hat\Mb_1)=O_p\bigg(\frac{1}{\sqrt{Tp_2}}+\frac{1}{p_1}\bigg),
\]
where the last equality is from Lemma \ref{lemb1}. Therefore, by Weyl's theorem,
\[
\lambda_j(\tilde\Mb_2^{(t)})=\left\{
\begin{aligned}
&\lambda_j(\bSigma_2)+o_p(1),&j\le k_2,\\
&O_p\bigg(\frac{1}{\sqrt{Tp_1}}+\frac{1}{\sqrt{Tp_2}}+\frac{1}{p_1}\bigg),&j>k_2.
\end{aligned}
\right.
\]
Now we can calculate the eigenvalue ratios. Let $\delta=\max\{(Tp_1)^{-0.5},(Tp_2)^{-0.5},(p_1)^{-1}\}$, then
\[
\begin{split}
\max_{j\le k_2-1}\frac{\lambda_j(\tilde\Mb_2^{(t)})}{\lambda_{j+1}(\tilde\Mb_2^{(t)})+c\delta}=&O_p(1),\\
\max_{j\ge k_2+1}\frac{\lambda_j(\tilde\Mb_2^{(t)})}{\lambda_{j+1}(\tilde\Mb_2^{(t)})+c\delta}\le&O_p(1),\\
\frac{\lambda_j(\tilde\Mb_2^{(t)})}{\lambda_{j+1}(\tilde\Mb_2^{(t)})+c\delta}\bigg|_{j=k_2}\ge&c\delta^{-1}\rightarrow\infty,\\
\end{split}
\]
which concludes the consistency.

As one reviewer has pointed out, the  maximized eigenvalue ratio is of rate $O_p(\min\{\sqrt{T},p\})$ in conventional vector factor models. Hence, if we pile down the matrix observations into  vectors, the expected ``optimal'' eigenvalue ratio is of rate $O_p(\sqrt{T},p_1p_2)$, which is larger than $\delta^{-1}$ when $p_1$ is small. A larger ratio implies a better separation of the spiked eigenvalues, which leads to better estimation of the number of factors.  We will explain the reason why the maximized eigenvalue ratio of the iterative method is not ``optimal''.

In brief, it is mainly because there are two parameter to estimate, $k_1$ and $k_2$. The estimation of $k_1$ brings new error to $\hat k_2$. Actually, if $k_1$ is known, the maximized eigenvalue ratio in estimating $k_2$ will be larger than the typical rate $O_p(\sqrt{T},p_1p_2)$. To see this, we will prove that the convergence rates for $j>k_2$ in Lemma \ref{lema2} can be improved when $k_1$ is given and the projection matrix is $\hat\Rb$.  Specifically, sharper bound is available for $\|\mathcal{\uppercase\expandafter{\romannumeral4}}\|$ in Lemma \ref{lema2}. Note that
\[
\begin{split}
\mathcal{\uppercase\expandafter{\romannumeral4}}=&\frac{1}{Tp_1^2p_2}\sum_t\Eb_t^\top\hat\Rb\hat\Rb^\top\Eb_t\\
=&\frac{1}{Tp_1^2p_2}\sum_t\Eb_t^\top\Rb\hat\Hb_1\hat\Hb_1^\top\Rb^\top\Eb_t+\frac{1}{Tp_1^2p_2}\sum_t\Eb_t^\top(\hat\Rb-\Rb\hat\Hb_1)\hat\Hb_1^\top\Rb^\top\Eb_t\\
&+\frac{1}{Tp_1^2p_2}\sum_t\Eb_t^\top\Rb\hat\Hb_1(\hat\Rb-\Rb\hat\Hb_1)^\top\Eb_t+\frac{1}{Tp_1^2p_2}\sum_t\Eb_t^\top(\hat\Rb-\Rb\hat\Hb_1)(\hat\Rb-\Rb\hat\Hb_1)^\top\Eb_t
\end{split}
\]
As before, we assume $k_1=1$ to avoid fixed-dimensional matrix-multiplication, then
\[
\begin{split}
(1).&\bigg\|\frac{1}{Tp_1^2p_2}\sum_t\Eb_t^\top\Rb\hat\Hb_1\hat\Hb_1^\top\Rb^\top\Eb_t\bigg\|^2\lesssim \bigg\|\frac{1}{Tp_1^2p_2}\sum_t\Eb_t^\top\Rb\Rb^\top\Eb_t\bigg\|^2\\
\lesssim&\bigg\|\frac{1}{Tp_1^2p_2}\sum_t\Big(\Eb_t^\top\Rb\Rb^\top\Eb_t-\mathbb{E}\Eb_t^\top\Rb\Rb^\top\Eb_t\Big)\bigg\|_F^2+\bigg\|\frac{1}{Tp_1^2p_2}\sum_t\mathbb{E}\Eb_t^\top\Rb\Rb^\top\Eb_t\bigg\|^2\\
\le&\frac{\|\Rb\|_F^2}{T^2p_1^4p_2^2}\sum_{i=1}^{p_1}\sum_{j=1}^{p_2}\bigg\|\sum_t\Big(e_{t,ij}\Rb^\top\Eb_t-\mathbb{E}e_{t,ij}\Rb^\top\Eb_t\Big)\bigg\|_F^2+\bigg\|\frac{1}{Tp_1^2p_2}\sum_t\mathbb{E}\Eb_t^\top\Rb\Rb^\top\Eb_t\bigg\|_1^2\\
\lesssim&\frac{1}{T^2p_1^3p_2^2}\sum_{ij}\sum_{t,i_1,j_1}\sum_{s,i_2}\bigg|\text{Cov}(e_{t,ij}e_{t,i_1j_1},e_{s,ij}e_{t,i_2j_1})\bigg|+\bigg(\frac{1}{Tp_1^2p_2}\sum_t\max_j\sum_{j_1}\bigg|\mathbb{E}\sum_{i_1,i_2}e_{t,i_1j_1}e_{t,i_2j}\bigg|\bigg)^2\\
\lesssim & \frac{1}{Tp_1}+\bigg(\frac{1}{Tp_1^2p_2}\sum_t\max_j\sum_{i_2}\sum_{i_1,j_1}\bigg|\mathbb{E}e_{t,i_1j_1}e_{t,i_2j}\bigg|\bigg)^2 \quad \Big(\text{ by Assumption D.3(2)}\Big)\\
=&O_p\bigg(\frac{1}{Tp_1}+\frac{1}{p_1^2p_2^2}\bigg). \quad \Big(\text{ by Assumption D.2(1)}\Big)
\end{split}
\]
\[
\begin{split}
(2).&\bigg\|\frac{1}{Tp_1^2p_2}\sum_t\Eb_t^\top(\hat\Rb-\Rb\hat\Hb_1)\hat\Hb_1^\top\Rb^\top\Eb_t\bigg\|^2\lesssim \|\hat\Rb-\Rb\hat\Hb_1\|_F^2\sum_{i}\bigg\|\frac{1}{Tp_1^2p_2}\sum_t\be_{t,i\cdot}\Rb^\top\Eb_t\bigg\|^2\\
\lesssim&p_1w_1\bigg(\sum_{i,j}\bigg\|\frac{1}{Tp_1^2p_2}\sum_t(e_{t,ij}\Rb^\top\Eb_t-\mathbb{E}e_{t,ij}\Rb^\top\Eb_t)\bigg\|^2+\sum_{i}\bigg\|\frac{1}{Tp_1^2p_2}\sum_t\mathbb{E}\be_{t,i\cdot}\Rb^\top\Eb_t\bigg\|^2\bigg)\\
=& O_p\bigg(\frac{w_1}{Tp_1}+\frac{w_1}{p_1^2p_2^2}\bigg).\quad \Big(\text{ similarly to (1)}\Big)
\end{split}
\]
\[
\begin{split}
(3).&\bigg\|\frac{1}{Tp_1^2p_2}\sum_t\Eb_t^\top(\hat\Rb-\Rb\hat\Hb_1)(\hat\Rb-\Rb\hat\Hb_1)^\top\Eb_t\bigg\|^2\lesssim \|\hat\Rb-\Rb\hat\Hb_1\|_F^4\sum_{i_1,i_2}\bigg\|\frac{1}{Tp_1^2p_2}\sum_t\be_{t,i_1\cdot}\be_{t,i_2\cdot}^\top\bigg\|^2\\
\lesssim&w_1^2\sum_{i_1,i_2}\bigg(\bigg\|\frac{1}{Tp_1p_2}\sum_t(\be_{t,i_1\cdot}\be_{t,i_2\cdot}^\top-\mathbb{E}\be_{t,i_1\cdot}\be_{t,i_2\cdot}^\top)\bigg\|_F^2+\bigg\|\frac{1}{Tp_1p_2}\sum_t\mathbb{E}\be_{t,i_1\cdot}\be_{t,i_2\cdot}^\top\bigg\|^2\bigg)\\
=& O_p\bigg(\frac{w_1^2}{T}+\frac{w_1^2}{p_2^2}\bigg).\quad \Big(\text{ similarly to (1)}\Big)
\end{split}
\]
Note that $w_1\lesssim (Tp_2)^{-1}+p_1^{-2}$, then $(1)$, $(2)$ and $(3)$ imply
\[
\| \mathcal{\uppercase\expandafter{\romannumeral4}}\|\le O_p\bigg(\frac{1}{\sqrt{Tp_1}}+\frac{1}{p_1p_2}+\frac{1}{Tp_2}\bigg)
\]
Combined with the rates of $\|\mathcal{\uppercase\expandafter{\romannumeral2}}\|$ and $\|\mathcal{\uppercase\expandafter{\romannumeral3}}\|$, we have
\begin{equation}\label{eque1}
\lambda_j(\tilde\Mb_2)=\left\{
\begin{aligned}
&\lambda_j(\bSigma_2)+o_p(1),&j\le k_2,\\
&O_p\bigg(\frac{1}{\sqrt{Tp_1}}+\frac{1}{Tp_2}+\frac{1}{p_1p_2}\bigg),&j>k_2.
\end{aligned}
\right.
\end{equation}
We find that the rate for $j>k_2$ is exactly squared root of those in Corollary \ref{cor1}. Hence, Corollary \ref{cor1} can  also be deduced by an application of the Davis-Kahan's $\sin(\Theta)$  theorem under (\ref{eque1}).

 Note that if $k_1$ is given, we use $\lambda_j(\tilde \Mb_2)$ to estimate $k_2$ rather than the iterative procedure. The asymptotic negligible  term added to the denominator should also be modified as $c\times \max\{(Tp_1)^{-0.5}, (Tp_2)^{-1},(p_1p_2)^{-1}\}$ accordingly. As a result, the maximized eigenvalue ratio is of rate $\min\{\sqrt{Tp_1}, Tp_2,p_1p_2\}$, which is no smaller than the typical rate $\min\{\sqrt{T},p_1p_2\}$ for vectorized models. This can be another advantage by assuming a matrix factor model, because it simplifies the loading structure.
\end{proof}

\section{Description of real data sets}\label{secg}
Figures \ref{fig:a1} and \ref{fig:a2} plot the series  of the Fama-French 100 portfolio data set and the multinational macroeconomic indices data set after preprocessing and standardization. Table \ref{tab:a1}  shows the countries and corresponding short names in the macroeconomic data set. Table \ref{tab:a2}  shows the indices, labels in OECD data base, preprocessing transformations,  and variable definitions in the macroeconomic data set.

\begin{figure}[hbpt]
	\centering
	 \includegraphics[width=15cm,height=15cm]{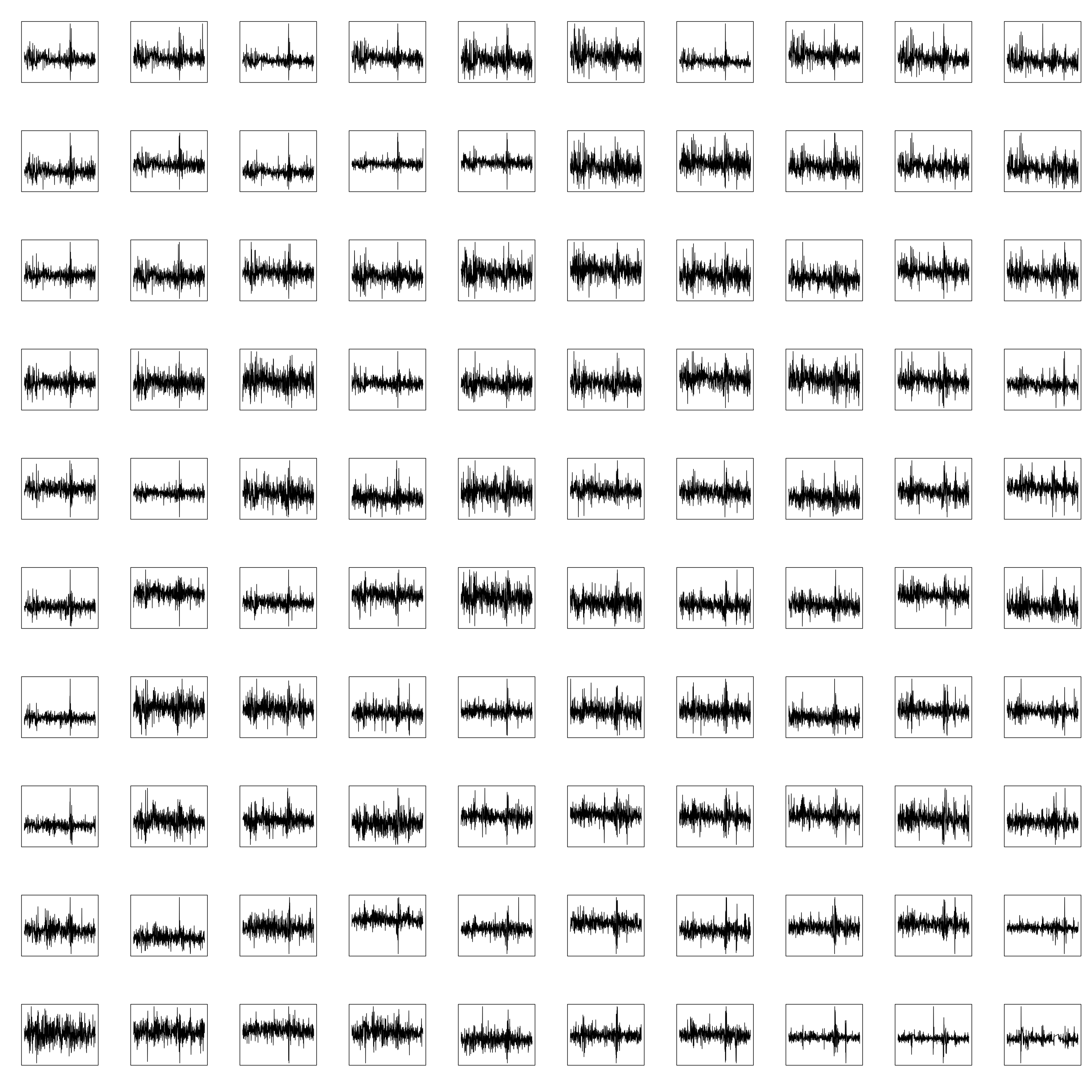}
	\caption{Fama-French 100 portfolio data set after preprocessing and standardization.}
	\label{fig:a1}
\end{figure}

\begin{figure}[hbpt]
	\centering
	 \includegraphics[width=15cm,height=12cm]{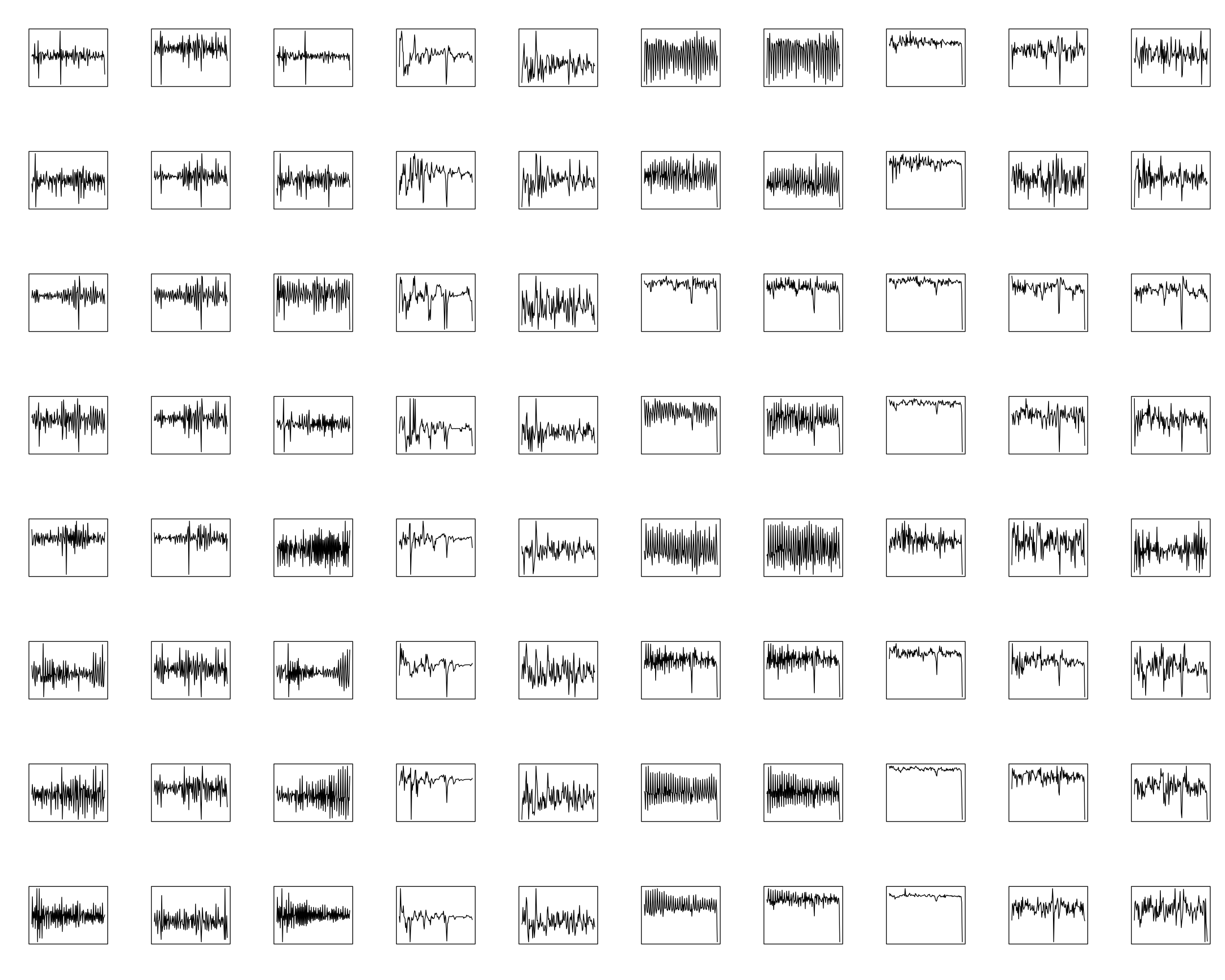}
	\caption{Multinational macroeconomic indices  data set after preprocessing and standardization.}
	\label{fig:a2}
\end{figure}

\begin{table*}[hbtp]
	\begin{center}
		\small
		\addtolength{\tabcolsep}{8pt}
		\caption{Countries in the macroeconomic data set.}\label{tab:a1}
		 \renewcommand{\arraystretch}{1.2}
		\scalebox{1}{ 	
			\begin{tabular*}{9cm}{cc}
				\toprule[1.2pt]
				Country& Code\\\midrule[1pt]
				Australia& AUS\\
				New Zealand& NZL\\
				United States of America& USA\\
				Canada& CAN\\
				Norway& NOR\\
				Germany& DEU\\
				France& FRA\\
				United Kingdom&GBR\\
				\bottomrule[1.2pt]		
		\end{tabular*}}		
	\end{center}
\end{table*}

\begin{table*}[hbtp]
	\begin{center}
		\small
		\addtolength{\tabcolsep}{0pt}
		\caption{Indices in the macroeconomic data set. All indices except interest rates are measured by taking the year 2015 as 100. }\label{tab:a2}
		 \renewcommand{\arraystretch}{1.2}
		\scalebox{0.85}{ 	
			 \begin{tabular*}{19cm}{llcl}
				\toprule[1.2pt]
				Short name& Label&Transformation& Definition\\\midrule[1pt]
				 CPI:Tot&CPALTT01&$\Delta^2\ln$&Consumer Price Index: Total\\
				 CPI:Ener&CPGREN01&$\Delta^2\ln$&Consumer Price Index: Energy\\		
				 CPI:NFNE&CPGRLE01&$\Delta^2\ln$&Consumer Price Index: All items no food no energy\\	
				 IR:3-Mon&IR3TIB01&$\Delta$&Interest Rates: 3-month or 90-day rates and yields,  interbank\\
				 IR:Long&IRLTLT01&$\Delta$&Interest Rates: Long-term government bond yields, 10-year\\
				 P:TIEC&PRINTO01&$\Delta\ln$&Production: Total industry excluding construction\\
				  P:TM&PRMNTO01&$\Delta\ln$&Production: Total industry excluding construction\\
				   GDP& LORSGPOR& $\Delta\ln$& GDP: Original series\\
				    IT:Ex&XTEXVA01&$\Delta\ln$		 &International Trade: Total Exports Value (goods)\\
				     IT:Im&		 XTIMVA01& $\Delta\ln$		 &International Trade: Total Imports Value (goods)
				\\\bottomrule[1.2pt]		
		\end{tabular*}}		
	\end{center}
\end{table*}

\end{appendices}
\end{document}